\newcolumntype{P}[1]{>{\centering\arraybackslash}p{#1}}
\def\BibTeX{{\rm B\kern-.05em{\sc i\kern-.025em b}\kern-.08em
    T\kern-.1667em\lower.7ex\hbox{E}\kern-.125emX}}
\newcommand{\zv}{\boldsymbol{z}}
\newcounter{subeqns}
\newcommand{\subeqns}{(\roman{subeqns})}
\newcommand{\subnumber}{\refstepcounter{subeqns}\subeqns}
\newcommand{\lowerromannumeral}[1]{\romannumeral#1\relax}
\newcommand{\qaoaobjsingle}{\langle C(\gamma) \rangle}
\newtheorem{theorem}{Theorem}
\newcommand{\introthmnum}{3}
\newtheorem*{theoremintro}{Theorem \introthmnum{}} %
\newtheorem{corollary}{Corollary}[theorem]
\newtheorem{lemma}{Lemma}[theorem]
\newtheorem{remark}{Remark}
\def\@opargbegintheorem#1#2#3{\trivlist \item[]{\bfseries #1\ #2\ (#3)} \itshape}
\newcommand{\Gate}[1]{\textsc{#1}}
\newcommand{\zgate}{\Gate{z}}
\newcommand{\ygate}{\Gate{y}}
\newcommand{\xgate}{\Gate{x}}
\newcommand{\idgate}{\Gate{i}}
\begin{document}

\title{Parameter Setting in Quantum Approximate Optimization of Weighted Problems}

\author{Shree Hari Sureshbabu\textsuperscript{$\ddagger$}}%
\affiliation{Global Technology Applied Research, JPMorgan Chase, New York, NY 10017}
\author{Dylan Herman\textsuperscript{$\ddagger$}}%
\affiliation{Global Technology Applied Research, JPMorgan Chase, New York, NY 10017}
\author{Ruslan Shaydulin\textsuperscript{$\ddagger$}}%
\affiliation{Global Technology Applied Research, JPMorgan Chase, New York, NY 10017}
\author{Joao Basso}
\affiliation{Department of Mathematics, University of California, Berkeley, CA 94720}
\author{Shouvanik Chakrabarti}
\affiliation{Global Technology Applied Research, JPMorgan Chase, New York, NY 10017}
\author{Yue Sun}
\affiliation{Global Technology Applied Research, JPMorgan Chase, New York, NY 10017}
\author{Marco Pistoia}
\affiliation{Global Technology Applied Research, JPMorgan Chase, New York, NY 10017}

\footnotetext{Equal contribution.}

\begin{abstract}
Quantum Approximate Optimization Algorithm (QAOA) is a leading candidate algorithm for solving combinatorial optimization problems on quantum computers. However, in many cases QAOA requires computationally intensive parameter optimization. The challenge of parameter optimization is particularly acute in the case of weighted problems, for which the eigenvalues of the phase operator are non-integer and the QAOA energy landscape is not periodic. In this work, we develop parameter setting heuristics for QAOA applied to a general class of weighted problems. First,  we derive optimal parameters for QAOA with depth $p=1$ applied to the weighted MaxCut problem under different assumptions on the weights. In particular, we rigorously prove the conventional wisdom that in the average case the first local optimum near zero gives globally-optimal QAOA parameters. Second, for $p\geq 1$ we prove that the QAOA energy landscape for weighted MaxCut approaches that for the unweighted case under a simple rescaling of parameters. Therefore, we can use parameters previously obtained for unweighted MaxCut for weighted problems. Finally, we prove that for $p=1$ the QAOA objective sharply concentrates around its expectation, which means that our parameter setting rules hold with high probability for a random weighted instance. We numerically validate this approach on general weighted graphs and show that on average the QAOA energy with the proposed fixed parameters is only $1.1$ percentage points away from that with optimized parameters. Third, we propose a general heuristic rescaling scheme inspired by the analytical results for weighted MaxCut and demonstrate its effectiveness using QAOA with the $\xgate\ygate$ Hamming-weight-preserving mixer applied to the portfolio optimization problem. Our heuristic improves the convergence of local optimizers, reducing the number of iterations by 7.4x on average.
\end{abstract}

\maketitle

\section{Introduction}
Quantum computers are widely believed to be able to provide computational speedups for various problems of relevance to science and industry~\cite{nielsen2010quantum,herman2022}. Combinatorial optimization is a domain that is very likely to benefit from quantum computing due to the ubiquity of hard optimization problems. Quantum Approximate Optimization Algorithm (QAOA)~\cite{Hogg2000,farhi2014quantum,hadfield2019quantum} is a leading candidate quantum heuristic algorithm for optimization. QAOA solves optimization problems by preparing a parameterized quantum state using a circuit consisting of layers of alternating operators, wherein each operator has a free parameter associated with it. The two operators are commonly referred to as the \emph{phase operator} and \emph{mixer operator}, respectively. QAOA has been shown to achieve better scaling than state-of-the-art classical solvers for finding exact solutions of $k$-SAT~\cite{2208.06909} and to achieve approximation ratios competitive with those of the best known classical algorithms for the unweighted MaxCut problem~\cite{lipics.tqc.2022.7,2111.12641}. When the mixer is different from the transverse field used in~\cite{Hogg2000,farhi2014quantum}, the algorithm is sometimes referred to as the Quantum Alternating Operator Ansatz~\cite{hadfield2019quantum}. Throughout the paper, we will use QAOA to refer to both without making a distinction.

One of the central challenges of applying QAOA to practically-relevant problems is the need to set the QAOA parameters. The parameter setting is particularly challenging for problems with objectives containing non-integer coefficients (weights) on the binary variables. %
The non-integer weights lead to the optimization landscape becoming non-periodic and in general hard to optimize~\cite{Shaydulin2023,boulebnane2023peptide,brandhofer2022benchmarking}. While parameter setting schemes leveraging analytically-optimal QAOA parameters exist, they are only available in a limited number of cases. For example, in the infinite-size limit, optimal QAOA parameters are known for unweighted
MaxCut on large-girth regular~\cite{lipics.tqc.2022.7}, Erd\H{o}s-R\'enyi~\cite{2110.10685} and other~\cite{2110.10685} random graph ensembles, as well as for the Sherrington-Kirkpatrick (SK) model~\cite{farhi2022quantum}. While a similar parameter setting scheme has been proposed and numerically validated for weighted MaxCut~\cite{Shaydulin2023}, no analytical results are known for weighted MaxCut or other problems with non-integer eigenvalues. 

In this work, we develop parameter setting heuristics for QAOA applied to a broad class of weighted problems. Our starting point is QAOA with the transverse-field mixer applied to the weighted MaxCut problem on large-girth regular graphs. We begin with $p=1$ and derive globally-optimal parameters for QAOA applied to graphs with edge weights drawn i.i.d. from the exponential distribution for any graph size and from an arbitrary distribution in the infinite-size limit. Our analysis rigorously proves the folklore notion that for problems with non-periodic QAOA energy landscapes, the first local optimum near zero contains globally-optimal parameters in the average case~\cite{Shaydulin2023,boulebnane2023peptide,brandhofer2022benchmarking}\footnote{We note the importance of the ``average case'' qualifier, as examples can be constructed where the optimal QAOA parameters are arbitrarily far from origin; see e.g. \cite[Figure 2]{Shaydulin2023}.}.
We then analyze the case of $p\geq 1$ and connect QAOA on  weighted MaxCut problems to QAOA on unweighted MaxCut by proving the following Theorem:
\begin{theoremintro}[Informal] Consider QAOA with depth $p$ and a regular graph $G$ with girth $>2p+1$ and i.i.d. random edge weights drawn from $\mathbf{w}$. Then the QAOA objective for weighted MaxCut on $G$ at parameters $\left(\bm\beta, \frac{\bm{\gamma}}{\sqrt{\mathbb{E}_{\mathbf{w}}[w^2]}}\right)$ is, up to a global scaling factor, equal to the QAOA objective for the corresponding MaxCut problem on the unweighted version of the same graph at parameters $\left(\bm\beta, \bm{\gamma}\right)$ on average in the infinite-size limit. Here the average is taken over the random choice of edge weights, and $\mathbb{E}_{\mathbf{w}}[w^2]$ is the second moment of the distribution from which the edge weights are drawn.
\end{theoremintro}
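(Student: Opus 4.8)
The plan is to use the locality of depth-$p$ QAOA together with the tree structure forced by the girth hypothesis to collapse the weighted objective to a single local term, and then to show that the $1/\sqrt{\mathbb{E}_{\mathbf{w}}[w^2]}$ rescaling reduces all dependence on the weight distribution to its second moment, which it then cancels, leaving a global factor $\sqrt{\mathbb{E}_{\mathbf{w}}[w^2]}$. Concretely, I would first invoke the light-cone fact that for girth $>2p+1$ the depth-$p$ correlator $\langle Z_uZ_v\rangle$ of an edge $(u,v)$ is unchanged if $C$ and the mixer are restricted to the radius-$p$ neighborhood of $\{u,v\}$, which is a tree $T_{uv}$; since $G$ is regular this tree is the same for every edge up to relabeling, so writing $C=-\tfrac12\sum_e w_e Z_e$ (dropping the $\mathbf{w}$-independent constant of the MaxCut objective) gives $\langle C\rangle=-\tfrac12\sum_{(u,v)\in E}w_{uv}\langle Z_uZ_v\rangle$ and hence $\mathbb{E}_{\mathbf{w}}[\langle C\rangle]=|E|$ times one local term.

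Next I would record the structure of that local term. In the Heisenberg picture on $T_{uv}$, $\langle Z_uZ_v\rangle$ is a finite sum over operator configurations in which the weight $w_e$ of each tree edge enters only through factors $\cos(\gamma_j m\, w_e)$ or $\sin(\gamma_j m\, w_e)$ with $m\in\mathbb{Z}$, $|m|\le p$, and the factors attached to the $\Theta(d)$ non-central edges incident to a common vertex get multiplied together. I would also prove that $\langle Z_uZ_v\rangle$ is an \emph{odd} function of the central weight $w_{uv}$ with the remaining weights held fixed: conjugating the $T_{uv}$-restricted circuit by $\prod_x X_x$ over the vertices lying in the component of $T_{uv}\setminus\{(u,v)\}$ that contains $u$ fixes $\ket{+}^{\otimes n}$, commutes with the mixer, leaves every phase term invariant except the $(u,v)$ term which it negates, and negates the observable $Z_uZ_v$; therefore only odd powers of $w_{uv}$ survive and in the configuration sum the central edge appears only through $\sin$-type factors.

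The crux is then the weight-average combined with the large-degree limit, taken along the sequence of regular graphs and with the conventional $\bm{\gamma}\mapsto\bm{\gamma}/\sqrt d$ rescaling under which the \emph{unweighted} objective is known to converge (so the unweighted ``$\bm{\gamma}$'' of the statement already carries the standard $1/\sqrt d$ factor and the weighted side carries the extra $1/\sqrt{\mathbb{E}_{\mathbf{w}}[w^2]}$). Substituting $\bm{\gamma}/\sqrt{d\,\mathbb{E}_{\mathbf{w}}[w^2]}$ and using independence of the weights, a $\cos$-factor on a non-central edge contributes $\mathbb{E}_{\mathbf{w}}\big[\cos(\gamma_j m\, w/\sqrt{d\,\mathbb{E}_{\mathbf{w}}[w^2]})\big]=1-\tfrac{\gamma_j^2 m^2}{2d}+O(d^{-2})$, so a product of $\Theta(d)$ of them tends to $e^{-\gamma_j^2 m^2/2}$ --- independent of $\mathbf{w}$ and equal to the unweighted $\cos^{\Theta(d)}(\gamma_j m/\sqrt d)$ --- whereas the central edge contributes $\mathbb{E}_{\mathbf{w}}\big[w_{uv}\sin(\gamma_j m\, w_{uv}/\sqrt{d\,\mathbb{E}_{\mathbf{w}}[w^2]})\big]=\tfrac{\gamma_j m}{\sqrt d}\sqrt{\mathbb{E}_{\mathbf{w}}[w^2]}+O(d^{-3/2})$, exhibiting the lone surviving factor of the distribution as $\sqrt{\mathbb{E}_{\mathbf{w}}[w^2]}$, while higher odd powers of $w_{uv}$ carry extra $1/\sqrt d$ and drop out. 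Reassembling the now $\mathbf{w}$-free configuration sum identifies the limit with $\sqrt{\mathbb{E}_{\mathbf{w}}[w^2]}$ times the known large-girth/large-degree limit of the unweighted objective at $(\bm{\beta},\bm{\gamma})$, which is the claim; the ``on average'' qualifier is precisely the $\mathbb{E}_{\mathbf{w}}$ appearing here, and at $p=1$ the separate concentration result upgrades this to a per-instance statement.

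I expect the principal obstacle to be the rigorous control of this $d\to\infty$ limit of the full configuration sum: justifying the interchange of the growing sum over tree configurations with the weight-expectation and the limit, showing that the $O(d^{-1})$ corrections accumulated over $\Theta(d)$ edges within $O(1)$ configurations are uniformly negligible (which will require a moment or tail assumption on $\mathbf{w}$), and matching the surviving terms term-by-term with the closed form of the unweighted value --- e.g.\ the large-girth iteration of~\cite{lipics.tqc.2022.7} --- rather than merely with ``some function'' of $(\bm{\beta},\bm{\gamma})$.
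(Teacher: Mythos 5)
Your proposal follows essentially the same route as the paper's proof of Theorem \ref{thm:gen_p_parameter_scaling}: reduce to a single edge via the girth-$>2p+1$ light cone, factorize the weight expectation edge-by-edge over the two $D$-ary trees (your spin-flip parity argument plays the role of the even-parity claims the paper imports from Ref.~\cite{lipics.tqc.2022.7}), Taylor-expand $\mathbb{E}_{\mathbf{w}}[\cos(w\bm{\Gamma}\cdot(\cdot)/\sqrt{D\,\mathbb{E}_{\mathbf{w}}[w^2]})]$ so the product of $\Theta(D)$ factors converges to the same Gaussian limit as the unweighted case, and extract the lone $\sqrt{\mathbb{E}_{\mathbf{w}}[w^2]}$ from the central-edge $\mathbb{E}_{\mathbf{w}}[w\sin(\cdot)]$ term, matching the unweighted iteration of Ref.~\cite{lipics.tqc.2022.7}. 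The only caveat is bookkeeping of the constant term $\tfrac12\sum_e w_e$ (which is weight-dependent and scales with $\mu$ rather than $\sqrt{\mathbb{E}_{\mathbf{w}}[w^2]}$); the paper handles this by stating the formal result for the normalized quantities $\vartheta_p$ and $\nu_p$, which is how the informal ``up to a global scaling factor'' should be read.
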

This result proves that the parameters that are optimal for unweighted MaxCut can be rescaled to be optimal for weighted MaxCut. As a consequence, it establishes a rule for setting parameters in QAOA for weighted MaxCut using the parameters for the unweighted case obtained previously in Ref.~\cite{lipics.tqc.2022.7}. We further prove that for $p=1$ the QAOA objective concentrates sharply around its expectation over randomly drawn weights. This implies that the derived correspondence between optimal parameters for weighted and unweighted cases and the resulting parameter setting rule hold with a high probability for a random weighted instance. 

As MaxCut is deeply connected to the SK model~\cite{Dembo2017}, we briefly discuss a ``weighted'' modification of the SK model obtained by drawing couplings in the SK model from $\mathcal{N}(\mu, \sigma^2)$ instead of $\mathcal{N}(0, 1)$. Here $\mu$ may depend on the problem size $N$. We call this modification ``biased SK'' and show that it behaves trivially in the infinite-size limit, unless $\mu=\mu(N)$ decays to zero with increasing $N$. %

We evaluate the parameter setting rule implied by Theorem~\introthmnum{} numerically outside of its theoretical assumptions by applying QAOA with $p\in \{1,2,3\}$ to MaxCut on a dataset of 34,701 weighted regular and non-regular graphs. We observe that our scheme outperforms the previously proposed approach of Ref.~\cite{Shaydulin2023}. On average, across all graphs, values of $p$ and edge-weight distributions, QAOA with parameters obtained using our scheme achieves solutions that are only 
$1.1$~percentage points (p.p.)~away from optimal, improving upon the $3.5$~p.p.~obtained using the technique presented in the prior work~\cite{Shaydulin2023}. Moreover, the disparity from the solutions obtained using optimized parameters is reduced by a factor of three (from $3.6$~p.p.~to $1.0$~p.p.)~when the edge weights are drawn from the exponential distribution, and by a factor of $\approx 6$ with the Cauchy distribution (from $20.7$ to $3.3$ p.p.).

We then propose a heuristic parameter rescaling rule for QAOA on arbitrary weighted problems. The heuristic rule is inspired by the theoretical results for the weighted MaxCut problem. As an example highlighting the generality of our observations, we consider QAOA applied to a portfolio optimization problem with a budget constraint, where the constraint is enforced throughout the QAOA evolution by the $\xgate\ygate$ Hamming-weight-preserving mixer. We observe that our simple rescaling procedure makes the landscape easier to optimize, reducing the number of iterations required for convergence to a fixed local optimum by a factor of 7.4x on a dataset of 280 portfolios with between 7 and 20 assets (qubits).

\section{Background}

We begin by briefly reviewing the Quantum Approximate Optimization Algorithm, the parameter setting schemes for it, and the weighted MaxCut problem. 

\subsection{Quantum Approximate Optimization Algorithm}

Consider the problem of optimizing some objective function $\mathcal{C}(\bm{x})$ defined on the $n$-dimensional Boolean cube that is encoded on $n$ qubits by a diagonal Hamiltonian $C = \text{diag}(\mathcal{C}(\bm{x}))$. Quantum Approximate Optimization Algorithm (QAOA)\cite{Hogg2000,farhi2014quantum} is a hybrid quantum-classical algorithm that approximately solves optimization problems by preparing a parameterized circuit such that upon measuring it, an approximate solution to the optimization problem is obtained. %
The QAOA circuit consists of layers of alternating unitaries, $e^{-i\gamma C}$ and $e^{-i\beta B}$, where $C$ is the Hamiltonian corresponding to the optimization problem and $B$ is the mixer Hamiltonian. Common choices of the mixer Hamiltonian $B$ include the transverse field ($B = \sum_j\xgate_j$) for unconstrained problems and the $\xgate\ygate$ mixer ($B = \frac{1}{2}\sum_{j,k}\left(\xgate_j\xgate_k + \ygate_j\ygate_k\right)$) for problems with an equality constraint on the Hamming weight.
The QAOA state with $p$ layers is given by
\begin{equation}
    \vert \bm\gamma, \bm\beta \rangle = e^{-i\beta_p B}e^{-i\gamma_p C}\ldots e^{-i\beta_1 B}e^{-i\gamma_1 C}|s\rangle,
\end{equation}
where $\vert s\rangle$ is the initial state and $\bm\gamma, \bm\beta$ are free parameters chosen by a classical routine. We discuss the strategies for setting the parameters $\bm\gamma, \bm\beta$ in Sec.~\ref{sec:parameter_setting_background} below.

The figure of merit that we use to evaluate the QAOA performance is the expected solution quality given by the  ``QAOA energy'':
\begin{equation}\label{eq:qaoa_obj}
    \langle C(\bm\gamma, \bm\beta)\rangle := \langle \bm\gamma, \bm\beta \vert C \vert \bm\gamma, \bm\beta \rangle = \sum_{\bm z\in\{0,1\}^{n}}\mathcal{C}(\bm z)\Pr(\bm z),
\end{equation}
where $\text{Pr}(\bm{z})$ is the probability of observing $\bm{z}$ when measuring all qubits of $\vert \bm\gamma, \bm\beta \rangle$.

\subsection{Parameter setting strategies for QAOA}\label{sec:parameter_setting_background}

Multiple techniques have been proposed for obtaining high-quality parameters for QAOA. While the parameters can be obtained by direct optimization of the objective \eqref{eq:qaoa_obj}
using a preferred optimization method~\cite{crooks2018performance,streif2020training, zhou2020quantum,shaydulin2019multistart,Lee2021, sack2021quantum, amosy2022iterative}, this procedure is typically computationally expensive~\cite{Lykov2020tensorqaoa,Medvidovic2021QAOA54qubit,Shaydulin2021Exploiting,Shaydulin2020CaseStudy}. The cost of finding parameters can be significantly reduced by leveraging the apparent problem-instance independence of the optimal QAOA parameters~\cite{brandao2018concentration,Akshay2021}. More straightforwardly, optimized parameters from one instance can be used directly as high-quality parameters for another instance from the same problem class~\cite{shaydulin2019multistart,Lotshaw2021BFGS,Galda2021transfer,lee2022depth,Shaydulin2023}. A machine learning model can be trained that would leverage the concentration to accurately predict the parameters~\cite{khairy2019learning,1907.05415,1911.04574,Wauters2020,Alam2020,Yao2021}. Optimal parameters can be derived exactly in certain analytically tractable cases, such as triangle-free regular graphs at $p=1$~\cite{Wang2018}.

In certain cases, i.e. in the infinite-size limit of a given problem, a closed-form iteration can be derived for the QAOA objective, Equation~\eqref{eq:qaoa_obj}, at constant $p$. Then parameters can be optimized in the infinite-size limit and used for finite-size instances. This has been demonstrated for the Sherrington-Kirkpatrick model~\cite{farhi2022quantum} and for MaxCut on random graphs~\cite{lipics.tqc.2022.7,2110.10685}. The goal of this work is to extend these results to weighted problems.

\subsection{MaxCut problem}
For an undirected graph $G = (V, E)$ with weights $w_{uv} = w_{\{u, v\}}$ assigned to edges $\{u, v\}\in E$, the goal of MaxCut is to partition the set of nodes $V$ into two disjoint subsets, such that the total sum of weights of the edges spanning both partitions is maximized. We refer to this problem as weighted MaxCut in the general case and as unweighted MaxCut when $w_{uv} = 1$ for all $\{u,v\}\in E$.

For the weighted MaxCut problem the objective function is given by
\begin{align}
   \mathcal{C}(\zv) = \frac{1}{2}\sum_{\{u,v\}\in E} w_{uv} (1 - z_u z_v),
\end{align}
where $z_u\in\{-1, 1\}$ are the variables to be optimized and $w_{uv}$ are sampled from the desired probability distribution. The MaxCut objective is encoded on qubits by the Hamiltonian
\begin{align}
   C = \frac{1}{2}\sum_{\{u,v\}\in E} w_{uv} (\idgate - \zgate_u \zgate_v),
\end{align}
where $\zgate_u$ and $\zgate_v$ are Pauli-Z operators applied to the $u$th and $v$th qubits, respectively.

For unweighted graphs, the cut fraction is defined as the ratio between the number of edges in a cut and the total number of edges in the graph. For a random unweighted $(D+1)$-regular graph, the optimal cut fraction is, with high probability, given by 
\begin{equation}\label{eq:cut_frac_opt}
    \frac{1}{2} + \frac{\Pi_*}{\sqrt{D}} + o\left(\frac{1}{\sqrt{D}}\right),
\end{equation}
where $\Pi_* \approx 0.7632$ is the Parisi value~\cite{Dembo2017}.

\section{Parameter setting scheme for QAOA on~weighted problems}\label{sec:scheme} %

Our parameter setting scheme is motivated by the observation, formalized in Sec.~\ref{sec:analytical}, that in many cases the QAOA energy landscape for weighted MaxCut can be rescaled to match that of unweighted MaxCut for arbitrary $p$. In the case of weighted MaxCut, this gives an explicit parameter setting rule. In the case of a general objective, we use the same observation to propose a rescaling rule that makes the QAOA energy landscape easier to optimize. We validate our scheme numerically for both cases in Section~\ref{sec:numerical}.

\subsection{Weighted MaxCut} 
The proposed procedure is as follows. First, rescale the edge weights in the graph following 
\begin{equation}
\label{eqn:general_maxcut_rule}
    w_{uv} \rightarrow \frac{w_{uv}}{\sqrt{\frac{1}{|E|}\sum_{\{u,v\}\in E}{w_{uv}^2}}}.
\end{equation}
Second, use the parameter setting rule for the corresponding unweighted graph. 

As an example of parameter setting rule for unweighted graphs to be used in the second step, one can use the parameters $\bm\beta^{\text{inf}}, \bm\gamma^{\text{inf}}$ optimized for large-girth regular graphs in the infinite-size limit~\cite[Tables 4 and 5]{lipics.tqc.2022.7} and follow the rescaling procedure therein, which we include here for completeness: $\bm\beta = \bm\beta^{\text{inf}}$, $\bm\gamma = \bm\gamma^{\text{inf}} / \sqrt{D}$. Here $D$ is the average degree of the graph. Alternatively, the procedure from Ref.~\cite{wurtz2021fixedangle} can be used. For small $D$ and $p$, higher quality results may be obtained by taking inspiration from the explicit formula of Ref.~\cite{Wang2018} and setting $\bm\gamma = \bm\gamma^{\text{inf}}\arctan{\frac{1}{\sqrt{D-1}}}$.

As an optional third step, the quality of the parameters can be improved further by running a local optimizer with a small initial step from the parameters obtained in the second step.

\subsection{General objective} 
For a general objective function and QAOA with an arbitrary mixer (e.g., constraint-preserving), analytical results are not available. %
At the same time, we can use the intuition from MaxCut to rescale the QAOA objective to make the geometry of the landscape more amenable to optimization. %
Specifically, if the objective $f$ is given by a degree-$k$ polynomial over spins $\bm{z} \in \{-1, 1\}^{n}$: 
\begin{equation}
    f(\bm{z}) = \sum_{\{u_1,\ldots, u_k\}} w^{(k)}_{u_1\ldots u_k}z_{u_1}\ldots z_{u_k} + \ldots + \sum_{u} w^{(1)}_u z_u,
\end{equation}
our first step is to divide the objective by 
\begin{align}
\label{eqn:gen_hypergraph_scaling_rule}
\sqrt{\frac{1}{\lvert E_{k}\rvert}\sum_{\{u_{1},\dots, u_{k}\}}(w^{(k)}_{u_1,\dots, u_k})^2 + \ldots + \frac{1}{\lvert E_1\rvert}\sum_{u}(w^{(1)}_{u})^2},\end{align}
where $E_{i}$ is the set of $i$-way hyperedges, i.e. the number of terms of order $i$.
In the second step, parameter optimization is performed as usual. 

This scaling is inspired by the observation that our results on weighted MaxCut generalize to problems with higher-order (higher than quadratic) terms; see Remark~\ref{rem:hypergraph_scale}. In Section~\ref{sec:numerical} we demonstrate the power of this simple procedure using the example of mean-variance portfolio optimization with a budget constraint enforced by the $\xgate\ygate$-mixer.

\section{Analytical results for QAOA on weighted MaxCut}\label{sec:analytical}

We now present the analytical results for QAOA applied to weighted MaxCut on large-girth regular graphs with i.i.d. edge weights. We begin by analyzing $p=1$ in Section~\ref{sec:qaoa_analytic_p_1}. QAOA energy for $p=1$ is given by a simple trigonometric formula derived in \cite[Theorem 7]{hadfield2018quantum}. We use this formula to derive globally-optimal QAOA parameters. The parameters we derive are optimal in expectation, with the expectation taken over the distribution of the edge weights. 
We first consider weights sampled from the exponential distribution and obtain optimal parameters for any graph size (Theorem \ref{thm:exponential_p_1}). We analyze the exponential distribution separately as it allows us to derive globally-optimal parameters for finite-sized graphs. Then we consider the infinite-size limit, which enables us to relax the assumption on the distribution and obtain optimal parameters for graphs with weights sampled from an arbitrary distribution (Theorem \ref{thm:p1_infinite_d}). 
We then consider $p\geq 1$ in Section~\ref{sec:arbitrary_distr_p_geq_1}. We extend the techniques of \cite{lipics.tqc.2022.7} to relate the QAOA objective landscape for weighted MaxCut to that for unweighted MaxCut (Theorem \ref{thm:gen_p_parameter_scaling}) and the SK model (Corollary \ref{cor:weighted_to_sk}). %

\subsection{Globally-optimal parameters for QAOA with $p=1$}
\label{sec:qaoa_analytic_p_1}
According to \cite[Theorem 7]{hadfield2018quantum}, the expected QAOA performance for MaxCut on triangle-free graphs can be expressed in closed form as:
\begin{align}
\label{eqn:weighted_maxcut_qaoa1}
    &\langle C(\gamma, \beta)\rangle = \frac{\sum_{\{u, v\} \in E} w_{uv}}{2} \nonumber\\&+ \frac{\sin(4\beta)}{4}\sum_{\{u,v\}\in E}w_{uv}\sin(w_{uv}\gamma)\left(\prod_{k \in \mathrm{nbhd}(u)/\{v\}}\cos(w_{uk}\gamma) + \prod_{t \in \mathrm{nbhd}(v)/\{u\}}\cos(w_{tv}\gamma)\right),
\end{align}
where $\mathrm{nbhd}(u)$ is the neighborhood function that gives the set of vertices adjacent to $u$.
The above is always maximized at $\beta=\frac{\pi}{8}$. Thus, with a slight abuse of notation, we define $\langle C(\gamma) \rangle = \langle C(\gamma, \frac{\pi}{8}) \rangle$.

We are considering the expected QAOA energy over the edge weights, i.e. $\mathbb{E}_{\mathbf{w}}[\langle C(\gamma)\rangle]$. In the sections that follow $\mathbb{E}_{\mathbf{w}}[\cdot]$ denotes the expectation over the graph weights, $w_{uv}$, that are all drawn independently from the distribution $\mathbf{w}$. Thus, for $(D+1)$-regular graphs with i.i.d. edge weights, this expectation simplifies as follows
\begin{align}
    \mathbb{E}_{\mathbf{w}}[\langle C(\gamma)\rangle]  &= \frac{\sum_{\{u, v\} \in E} \mathbb{E}_{\mathbf{w}} [w_{uv}]}{2} + \frac{1}{2} \sum_{\{u,v\} \in E}\mathbb{E}_{\mathbf{w}}\left[ w_{uv} \sin (w_{uv} \gamma)\right]
    \left(\mathbb{E}_{\mathbf{w}}\left[\cos(w_{uv}\gamma)\right]\right)^{D}\nonumber \\
    & = \frac{N(D+1)}{2}\left[\frac{\mathbb{E}_{\mathbf{w}}[w]}{2} + \frac{1}{2} \mathbb{E}_{\mathbf{w}}\left[ w \sin (w \gamma)\right]
    \left(\mathbb{E}_{\mathbf{w}}\left[\cos(w\gamma)\right]\right)^{D}\right], \label{eq:expected_weighted_maxcut_qaoa1}
\end{align}
where we drop the subscript on $w$ since the edge weights are i.i.d. and use the fact that
\begin{align}\label{eq:reg_graph_edge_count}
    |E| = \frac{N(D+1)}{2}.
\end{align}

We now consider edge weights distributed identically and independently according to the exponential distribution %
with parameter $\lambda > 0$, which has as its probability density function $f(x) = \lambda e^{-\lambda x}$ if $x>0$ or $f(x) = 0$ otherwise. The mean and standard deviation are $\mu = \sigma= \frac{1}{\lambda}$. 
\begin{theorem}[$p=1$, exponential distribution, finite size]
\label{thm:exponential_p_1}
Let $\mathbb{E}_{\text{exp}(\lambda)}[\langle C(\gamma)\rangle]$ denote the expected QAOA objective with $p=1$ over instances of the weighted MaxCut problem on a given triangle-free $(D+1)$-regular graph with edge weights, $w$, drawn i.i.d. from an exponential distribution with parameter $\lambda$. Then $\mathbb{E}_{\text{exp}(\lambda)}[\langle C(\gamma)\rangle]$ has a global maximum at $\gamma* = \frac{1}{\sqrt{\mathbb{E}_{\mathbf{w}}[w^2]}\sqrt{D+\frac{3}{2}}}$.
\end{theorem}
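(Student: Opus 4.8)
The plan is to reduce the claim to an elementary single-variable optimization. The only $\gamma$-dependent part of $\mathbb{E}_{\mathbf{w}}[\langle C(\gamma)\rangle]$ in Equation~\eqref{eq:expected_weighted_maxcut_qaoa1} is the factor
\[
 g(\gamma) := \mathbb{E}_{\mathbf{w}}\!\left[w\sin(w\gamma)\right]\big(\mathbb{E}_{\mathbf{w}}\!\left[\cos(w\gamma)\right]\big)^{D},
\]
so that $\mathbb{E}_{\mathbf{w}}[\langle C(\gamma)\rangle] = \tfrac{N(D+1)}{2}\big(\tfrac12\mathbb{E}_{\mathbf{w}}[w] + \tfrac12 g(\gamma)\big)$, and maximizing the expected objective over $\gamma$ is equivalent to maximizing $g$. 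The first step is therefore to evaluate the three weight-expectations in closed form for the exponential law. Using the characteristic function $\mathbb{E}_{\mathbf{w}}[e^{i\gamma w}] = \lambda/(\lambda - i\gamma)$, one reads off $\mathbb{E}_{\mathbf{w}}[\cos(w\gamma)] = \lambda^2/(\lambda^2+\gamma^2)$ and $\mathbb{E}_{\mathbf{w}}[\sin(w\gamma)] = \lambda\gamma/(\lambda^2+\gamma^2)$, and then $\mathbb{E}_{\mathbf{w}}[w\sin(w\gamma)] = -\tfrac{d}{d\gamma}\mathbb{E}_{\mathbf{w}}[\cos(w\gamma)] = 2\lambda^2\gamma/(\lambda^2+\gamma^2)^2$ by differentiating under the expectation. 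This yields the clean form
\[
 g(\gamma) = \frac{2\,\lambda^{2D+2}\,\gamma}{(\lambda^2+\gamma^2)^{D+2}}.
\]

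The second step is to analyze $g$ on $\mathbb{R}$. It is smooth and odd, with $g(0)=0$, $g(\gamma)>0$ for $\gamma>0$ (every factor is positive), and $g(\gamma)\to 0$ as $\gamma\to\pm\infty$; hence $\sup_{\gamma\in\mathbb{R}} g$ is attained at an interior point of $(0,\infty)$, which must be a critical point. Computing $g'(\gamma)$, factoring out the common positive quantity $2\lambda^{2D+2}(\lambda^2+\gamma^2)^{-(D+3)}$, the critical-point condition becomes $\lambda^2+\gamma^2 = 2(D+2)\gamma^2$, i.e. $(2D+3)\gamma^2 = \lambda^2$, whose unique positive root is $\gamma = \lambda/\sqrt{2D+3} = \lambda/(\sqrt{2}\sqrt{D+3/2})$. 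Moreover the bracket $\lambda^2-(2D+3)\gamma^2$ is positive for $0<\gamma<\gamma^*$ and negative for $\gamma>\gamma^*$, so $g$ increases then decreases across this point, confirming it is the unique global maximizer (and $-\gamma^*$ the global minimizer). Finally, to match the stated form, note that for the exponential law $\mathbb{E}_{\mathbf{w}}[w^2] = 2/\lambda^2$, hence $\lambda = \sqrt{2}/\sqrt{\mathbb{E}_{\mathbf{w}}[w^2]}$ and the root equals $\gamma^* = 1/(\sqrt{\mathbb{E}_{\mathbf{w}}[w^2]}\sqrt{D+3/2})$.

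I do not anticipate a real obstacle: once the characteristic-function identity is invoked, everything is routine calculus. The only places warranting a line of justification are (a) differentiating under the expectation to obtain $\mathbb{E}_{\mathbf{w}}[w\sin(w\gamma)]$, which is immediate since the exponential distribution has all moments finite (dominated convergence with dominating function $w$), and (b) promoting the critical point from local to \emph{global} maximum, which follows from oddness of $g$, its vanishing at $0$ and at $\pm\infty$, and uniqueness of the positive critical point. One should also recall that the reduction to $g(\gamma)$ relies on $\beta=\pi/8$ being optimal and on the triangle-free assumption, both already established in the text preceding the theorem.
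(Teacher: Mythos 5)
Your proposal is correct and follows essentially the same route as the paper: compute the exponential-distribution expectations in closed form, reduce to the single-variable function $\gamma\lambda^{2D+2}/(\lambda^2+\gamma^2)^{D+2}$, find the unique positive critical point $\gamma=\lambda/\sqrt{2D+3}$, and argue globality (the paper via the limits at $\pm\infty$, you via oddness and the sign of the derivative — an equivalent elementary argument). The only cosmetic difference is deriving the trigonometric moments from the characteristic function rather than quoting them directly.
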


\begin{proof}
To obtain the optimal parameters, we start with Equation~\eqref{eq:expected_weighted_maxcut_qaoa1} and use the following identities
\begin{align}
    \mathbb{E}_{\mathbf{w}} [w\sin(w\gamma)] & = \frac{2\gamma\lambda^2}{(\lambda^2+\gamma^2)^2},\\
    \mathbb{E}_{\mathbf{w}}[\cos(w\gamma)] & = \frac{\lambda^2}{\lambda^2+\gamma^2},
\end{align}
which give
\begin{align}
    \mathbb{E}_{\text{exp}(\lambda)}[\langle C(\gamma)\rangle] & = \frac{N(D+1)}{2} \left[\frac{1}{2\lambda} + \frac{1}{2}\frac{2\gamma\lambda^2}{(\lambda^2+\gamma^2)^2}
    \left(\frac{\lambda^2}{\lambda^2+\gamma^2}\right)^{D}\right] \nonumber\\
    & = \frac{N(D+1)}{2}\left[\frac{1}{2\lambda} + \frac{\gamma \lambda^{2D+2}}{(\lambda^2+\gamma^2)^{D+2}} \right] \label{eq:exp_cost}.
\end{align}
Taking the derivative with respect to $\gamma$, we obtain
\begin{align}
    \frac{d}{d\gamma}\mathbb{E}_{\text{exp}(\lambda)}[\langle C(\gamma)\rangle] = c\frac{\lambda^2 - (2D+3)\gamma^2}{(\lambda^2 + \gamma^2)^{D+3}}, 
\end{align}
where $c$ is a positive and $\gamma$-independent constant. Setting the derivative to zero gives
\begin{equation}
    \gamma = \frac{\pm\lambda}{\sqrt{2D+3}} = \frac{\pm 1}{\sqrt{\mathbb{E}_{\exp(\lambda)}[w^2]}\sqrt{D+\frac{3}{2}}}.
\end{equation}
From Equation~\eqref{eq:exp_cost}, we can see that
\begin{align*}
    \mathbb{E}_{\text{exp}(\lambda)}[\langle C(-\infty)\rangle] = \mathbb{E}_{\text{exp}(\lambda)}[\langle C(\infty)\rangle] = \frac{N(D+1)}{4\lambda},
\end{align*} 
so the global maximum is 
\begin{equation}
    \gamma^* = \frac{1}{\sqrt{\mathbb{E}_{\exp(\lambda)}[w^2]}\sqrt{D+\frac{3}{2}}}.
\end{equation}
\end{proof}
Note that unlike the following Theorems, this proof does not rely on any assumptions on $D$. %

We now consider a graph with edge weights drawn from an arbitrary distribution with mean value $\mu$ and standard deviation $\sigma$. To study the infinite-size limit, we define a quantity that tends to a constant as $D\rightarrow\infty$. Specifically, we consider the following quantity
\begin{equation}
    \frac{\mathbb{E}_{\mathbf{w}}[\langle C(\gamma)\rangle]}{\mathbb{E}_{\mathbf{w}}\left[\sum_{\{u,v\}\in E}w_{uv}\right]},
\end{equation}
which reduces to the cut fraction if the graph is unweighted, i.e. $w_{uv}=1$; $\forall \{u, v\}\in E$. 

Using Equation~\eqref{eq:reg_graph_edge_count} followed by Equation~\eqref{eq:expected_weighted_maxcut_qaoa1}, we can write
\begin{align}
    \frac{\mathbb{E}_{\mathbf{w}}[\langle C(\gamma)\rangle]}{\mathbb{E}_{\mathbf{w}}\left[\sum_{\{u,v\}\in E}w_{uv}\right]}
    &=  \frac{2}{N(D+1)\mu} \mathbb{E}_{\mathbf{w}}\left[\langle C(\gamma)\rangle\right]\nonumber \\
    \label{eqn:p_1_trig}
    & \quad = \frac{1}{2} + \frac{1}{2\mu}\mathbb{E}_{\mathbf{w}}\left[ w \sin (w \gamma)\right]
    \left(\mathbb{E}_{\mathbf{w}}\left[\cos(w\gamma)\right]\right)^{D}  \\
    & \quad = \frac{1}{2} + \frac{\vartheta_{1}(D, \gamma)}{\sqrt{D}},
\end{align}
where we introduce $\vartheta_{1}(D, \gamma)$ to match $\Pi_*$ in Equation~\eqref{eq:cut_frac_opt}. We will now show that $\vartheta_{1}(D, \gamma)$ tends to a $D$-independent quantity as $D\rightarrow\infty$ when $\gamma=\Theta(D^{-1/2})$, and use the resulting limit to derive the optimal value $\gamma^*$ in the limit of infinite-sized graphs. %

The assumption of $\gamma= \Theta(D^{-1/2})$ is  inspired by the numerical observation that the optimal $\gamma$ for unweighted MaxCut is $\Theta(D^{-1/2})$ (see, e.g. \cite[Figure 1b]{2110.10685}). Furthermore, we prove that $\vartheta_{1}(D, \gamma)$ 
 has a local maximum at a value $\gamma=\Theta(D^{-1/2})$ for sufficiently large $D$. %
 In the limit of $D\rightarrow \infty$, we prove that this local maximum is also the global maximum. This motivates the definition of the following limiting quantity
 \begin{align}
\vartheta_1(\gamma) = \lim_{D\rightarrow\infty}\vartheta_1(D, \gamma/\sqrt{D}).
 \end{align}

 \begin{theorem}[$p=1$, infinite size]
 \label{thm:p1_infinite_d}
    Consider weighted MaxCut on a given triangle-free $(D+1)$-regular graph with edge weights, $w$, drawn i.i.d. from a distribution $\mathbf{w}$ with finite second moment. Then for sufficiently large $D$, the function $\vartheta_1(D, \gamma)$ associated with QAOA for $p=1$ has a local maximum at a $\gamma$ that is $\Theta(D^{-1/2})$. Moreover, the limiting quantity $\vartheta_1(\gamma)$  attains its global maximum at $\gamma^{*} = \frac{1}{\sqrt{\mathbb{E}_{\mathbf{w}}[w^2]}}$.
\end{theorem}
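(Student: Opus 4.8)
\emph{Proof proposal.} The plan is to study everything through the rescaled function $g_D(\gamma):=\vartheta_1(D,\gamma/\sqrt{D})$. By the closed form \eqref{eqn:p_1_trig},
\[
g_D(\gamma)=\frac{\sqrt{D}}{2\mu}\,\mathbb{E}_{\mathbf{w}}\!\big[w\sin(w\gamma/\sqrt{D})\big]\Big(\mathbb{E}_{\mathbf{w}}\!\big[\cos(w\gamma/\sqrt{D})\big]\Big)^{D},
\]
so writing $\operatorname{sinc}x=\sin x/x$ and multiplying/dividing by $w\gamma/\sqrt{D}$ inside the sine, $g_D=A_DB_D$ with $A_D(\gamma)=\tfrac{\gamma}{2\mu}\mathbb{E}_{\mathbf{w}}[w^2\operatorname{sinc}(w\gamma/\sqrt{D})]$ and $B_D(\gamma)=(\mathbb{E}_{\mathbf{w}}[\cos(w\gamma/\sqrt{D})])^{D}$. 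The first step is to identify $\vartheta_1(\gamma)=\lim_{D\to\infty}g_D(\gamma)$: since $\operatorname{sinc}(w\gamma/\sqrt{D})\to1$ pointwise in $w$ with $|\operatorname{sinc}|\le1$, dominated convergence (using only $\mathbb{E}_{\mathbf{w}}[w^2]<\infty$) gives $A_D(\gamma)\to\tfrac{\gamma}{2\mu}\mathbb{E}_{\mathbf{w}}[w^2]$; and writing $\mathbb{E}_{\mathbf{w}}[\cos(w\gamma/\sqrt{D})]=1-\tfrac{\gamma^2}{D}\,\mathbb{E}_{\mathbf{w}}\!\big[w^2\tfrac{1-\cos(w\gamma/\sqrt{D})}{(w\gamma/\sqrt{D})^2}\big]$ with the integrand in $[0,\tfrac12]$ and converging to $\tfrac12$, dominated convergence together with $(1-c_D/D)^D\to e^{-c}$ whenever $c_D\to c$ gives $B_D(\gamma)\to e^{-\mathbb{E}_{\mathbf{w}}[w^2]\gamma^2/2}$. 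Hence
\[
\vartheta_1(\gamma)=\frac{\mathbb{E}_{\mathbf{w}}[w^2]}{2\mu}\,\gamma\,e^{-\mathbb{E}_{\mathbf{w}}[w^2]\gamma^2/2}.
\]

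The second step is the (routine) optimization of $\vartheta_1$. Writing $m_2:=\mathbb{E}_{\mathbf{w}}[w^2]$, we have $\vartheta_1'(\gamma)=\tfrac{m_2}{2\mu}e^{-m_2\gamma^2/2}(1-m_2\gamma^2)$; thus $\vartheta_1$ is odd, vanishes at $0$ and at $\pm\infty$, is strictly increasing on $(-1/\sqrt{m_2},1/\sqrt{m_2})$ and strictly decreasing outside it. Its global maximum over $\mathbb{R}$ is therefore attained exactly at $\gamma^*=1/\sqrt{m_2}=1/\sqrt{\mathbb{E}_{\mathbf{w}}[w^2]}$, and this critical point is nondegenerate, $\vartheta_1''(\gamma^*)<0$.

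The third step is to transfer the maximum back to finite $D$. Differentiating $g_D=A_DB_D$ (using the equivalent form $A_D=\tfrac{\sqrt{D}}{2\mu}\mathbb{E}_{\mathbf{w}}[w\sin(w\gamma/\sqrt{D})]$ so that no moment beyond the second is needed), one gets $A_D'(\gamma)=\tfrac{1}{2\mu}\mathbb{E}_{\mathbf{w}}[w^2\cos(w\gamma/\sqrt{D})]$ and $B_D'(\gamma)=-B_D(\gamma)\,\tfrac{\sqrt{D}\,\mathbb{E}_{\mathbf{w}}[w\sin(w\gamma/\sqrt{D})]}{\mathbb{E}_{\mathbf{w}}[\cos(w\gamma/\sqrt{D})]}$, and the same factor-out-$w^2$ trick plus dominated convergence shows $A_D'$, $B_D$, $A_D$, $B_D'$ all converge pointwise on $(0,\infty)$, whence $g_D'\to\vartheta_1'$ pointwise there. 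Now fix $\epsilon\in(0,\gamma^*)$: then $\vartheta_1'(\gamma^*-\epsilon)>0>\vartheta_1'(\gamma^*+\epsilon)$, so for all sufficiently large $D$ we have $g_D'(\gamma^*-\epsilon)>0>g_D'(\gamma^*+\epsilon)$; since $g_D$ is $C^1$ on the compact interval $[\gamma^*-\epsilon,\gamma^*+\epsilon]$ with positive left-endpoint derivative and negative right-endpoint derivative, it attains its maximum there at an interior point $\hat\gamma_D$, which is thus a local maximum of $g_D$. Unscaling, $\vartheta_1(D,\cdot)$ has a local maximum at $\hat\gamma_D/\sqrt{D}$, and because $\hat\gamma_D\in(\gamma^*-\epsilon,\gamma^*+\epsilon)$ is bounded away from $0$ and $\infty$, this location is $\Theta(D^{-1/2})$, as claimed.

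The step I expect to be the real obstacle is making these limit interchanges legitimate under only a finite-second-moment hypothesis: a naive Taylor expansion of $\sin$ and $\cos$ would bring in third and fourth moments of $\mathbf{w}$, so the argument hinges on rewriting every integrand so that a factor $w^2$ is pulled out (via $\operatorname{sinc}$ and $\tfrac{1-\cos x}{x^2}$), leaving a uniformly bounded remainder to which dominated convergence applies. Once that bookkeeping is in place, both the optimization of $\vartheta_1$ and the ``a strict sign change of $\vartheta_1'$ persists in the limit, hence yields a genuine interior local maximum of $g_D$'' argument are elementary.
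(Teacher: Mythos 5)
Your proposal is correct, and its overall plan (establish a sign change of the derivative at the scale $\gamma = \Theta(D^{-1/2})$, then compute the limiting function and optimize it) matches the paper's; but two technical choices differ in ways worth noting. First, where the paper justifies its small-$\gamma$ asymptotics by Taylor-expanding $\sin$ and $\cos$ inside the expectations and invoking Fubini to swap the infinite series with $\mathbb{E}_{\mathbf{w}}$, you instead factor a $w^2$ out of every integrand via $\operatorname{sinc}$ and $\frac{1-\cos x}{x^2}$ and apply dominated convergence; this is arguably the cleaner route under the stated hypothesis, since it never touches moments beyond the second, whereas a term-by-term series exchange implicitly involves higher moments that a finite-second-moment distribution need not possess (the paper's remainder bookkeeping has to be read as a truncated expansion for this to go through). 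Second, the paper brackets the sign change of $\frac{d}{d\gamma}\vartheta_1(D,\gamma)$ between $\gamma=0$ (derivative positive) and $\gamma=\alpha^*/\sqrt{D}$ for a large constant $\alpha^*$ (derivative nonpositive) and concludes via Darboux that a local maximum lies in $(0,\alpha^*/\sqrt{D})$ — which, strictly speaking, only pins the location down to $O(D^{-1/2})$; your bracketing at $(\gamma^*\pm\epsilon)/\sqrt{D}$, made possible by the pointwise convergence $g_D'\to\vartheta_1'$ and the strict sign change of $\vartheta_1'$ at $\gamma^*$, bounds the rescaled maximizer away from both $0$ and $\infty$ and hence delivers the advertised $\Theta(D^{-1/2})$ (and even the localization near $\gamma^*/\sqrt{D}$) directly. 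The limit computation and the optimization of $\vartheta_1(\gamma)=\frac{\mathbb{E}_{\mathbf{w}}[w^2]}{2\mu}\gamma e^{-\mathbb{E}_{\mathbf{w}}[w^2]\gamma^2/2}$ coincide with the paper's; like the paper, your global-maximum claim tacitly uses $\mu>0$.
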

\begin{proof}
The assumption of finite second moment along with Jensen's inequality implies that $\mathbb{E}_{\mathbf{w}}{[\lvert w \rvert]}$ is also finite. Thus, since the derivatives of the functions inside the expectations taken in Equation~\eqref{eqn:p_1_trig} are dominated in $\gamma$, i.e. $\lvert w^2\cos(w\gamma)\rvert \leq w^2$ and  $\lvert w\sin(w\gamma)\rvert \leq \lvert w \rvert$, the dominated convergence theorem and the mean-value theorem \cite[Section 7.2.2]{glasserman2004monte} ensure that the operations of differentiation w.r.t. $\gamma$ and expectation over $w$ can be interchanged. This gives
\begin{align}
\frac{d}{d\gamma} \vartheta_{1}(D, \gamma) &=\frac{\sqrt{D}}{2\mu}(\mathbb{E}_{\mathbf{w}}[\cos(w\gamma)])^{D-1}\nonumber\\&\cdot\bigg(\mathbb{E}_{\mathbf{w}}[w^2\cos(w\gamma)]\mathbb{E}_{\mathbf{w}}[\cos(w\gamma)]- D(\mathbb{E}_{\mathbf{w}}[w\sin(w\gamma)])^{2}\bigg).
\end{align}
Substituting $\gamma = \frac{\alpha}{\sqrt{D}}$, for $\alpha$ independent of $D$ and using the Taylor series expansions of the trigonometric functions, we get
\begin{align}
    &\frac{d}{d\gamma}\vartheta_{1}(D, \gamma)\Bigr|_{\gamma = \alpha/\sqrt{D}} = \frac{\sqrt{D}}{2\mu}e^{-\frac{\mathbb{E}_{\mathbf{w}}[w^2]\alpha^2}{2}}[\mathbb{E}_{\mathbf{w}}[w^2] - (\mathbb{E}_{\mathbf{w}}[w^2])^2\alpha^2]+ O(D^{-1/2}),
\end{align}
where the implicit exchange of infinite series and  expectation over $w$ is justified by the finiteness of the second moment and Fubini's theorem \cite[Theorem 8.8]{rudin1974real}. Here we use the observation that
\begin{align}
\label{eqn:cosine_relation_exp}
\cos(x/\sqrt{D})^{D-1} &=  \left(1-\frac{x^2}{2D}\right)^{D-1} + O(D^{-1}) \nonumber\\
&=e^{-\frac{x^2}{2}} + O(D^{-1}),
\end{align} 
for $x$  that is bounded by a constant independent of $D$.
\footnote{Note that the second equality follows from $(1-\frac{x^2}{2D})^{D-1} = (1-\frac{x^2}{2D})^{D}(1-O(D^{-1}))= \left[\sum_{k=0}^{D}\frac{(-x^2/2)^{k}}{k!}\right](1 - O(D^{-1}))^{2}$ and works in the limit due to absolute convergence.}

For sufficiently large $D$, both \begin{align}
    \frac{d}{d\gamma}\vartheta_{1}(D, \gamma)\Bigr|_{\gamma = 0} = \frac{\sqrt{D}}{2\mu}\mathbb{E}_{\mathbf{w}}[w^2] + O(D^{-1/2}) > 0
\end{align} and 
\begin{align}
\frac{d}{d\gamma}\vartheta_{1}(D, \gamma)\Bigr|_{\gamma = \alpha^{*}/\sqrt{D}} \leq0
\end{align}
for some sufficiently large constant $\alpha^{*}$ independent of $D$. Thus, by Darboux's theorem \cite[Theorem 5.12]{rudin1976principles}, $\gamma$ has a local maxima in the interval $(0, \alpha^{*}/\sqrt{D})$ for each triangle-free $(D+1)$-regular graph. %

We now consider the limiting value of $\vartheta_{1}$ in the regime of small $\gamma$. With $\gamma = \frac{\gamma'}{\sqrt{D}}$ for some $D$-independent $\gamma'$, we get
\begin{align}
 \vartheta_{1}(D, \gamma'/\sqrt{D})
& = \frac{\sqrt{D}}{2\mu}\mathbb{E}_{\mathbf{w}} [w\sin(w\gamma'/\sqrt{D})](\mathbb{E}_{\mathbf{w}}[\cos(w\gamma'/\sqrt{D})])^{D} \nonumber\\ 
&= \frac{\mathbb{E}_{\mathbf{w}}[w^2]}{2\mu}\gamma'e^{-\frac{\mathbb{E}_{\mathbf{w}}[w^2]\gamma'^2}{2}} + O(D^{-1}),\nonumber
\end{align}
where we use Equation~\eqref{eqn:cosine_relation_exp}, and the implicit exchange of infinite series and expectation is justified by Fubini's theorem. Now taking the limit in $D$, we 
obtain 
\begin{align}
\vartheta_{1}(\gamma') := \lim_{D\rightarrow\infty}\vartheta_{1}(D, \gamma'/\sqrt{D}) = \frac{\mathbb{E}_{\mathbf{w}}[w^2]}{2\mu}\gamma'e^{-\frac{\mathbb{E}_{\mathbf{w}}[w^2]\gamma'^2}{2}}.
\end{align}
Now, consider the derivative, 
\begin{align}
\frac{d}{d\gamma'} \vartheta_{1}(\gamma') = \frac{\mathbb{E}_{\mathbf{w}}[w^2]e^{-\frac{\mathbb{E}_{\mathbf{w}}[w^2]\gamma'^2}{2}}}{2\mu}(1 - \mathbb{E}_{\mathbf{w}}[w^2]\gamma'^{2}).
\end{align}
It can be easily seen that the function $\vartheta_{1}(\gamma')$ is always decreasing to the right of the local maximum at $\gamma^{*}= \frac{1}{\sqrt{\mathbb{E}_{\mathbf{w}}[w^2]}}$, and the function is negative to the left of zero. Thus this is in fact a global optima.
\end{proof}

\begin{remark}
    To see the correspondence between Theorem~\ref{thm:p1_infinite_d} and Theorem~\ref{thm:exponential_p_1}, i.e. that the latter is a special case of the former, rescale $\gamma \to \gamma/\sqrt{D}$ and note that the constant in the denominator in Theorem~\ref{thm:exponential_p_1} has no effect on the limiting value as $D \to \infty$.
\end{remark}

\subsection{Concentration of QAOA objective at $p=1$}
We show that the QAOA objective for weighted MaxCut instances concentrates sharply around its expectation as $D \to \infty$, for triangle-free $(D+1)$-regular  graphs when $p=1$. This indicates that our scaling rules, which are derived by investigating the expectation of the objective, can also be expected to hold with high probability for \emph{any} weighted instance of a fixed graph. The QAOA objective in this setting is given in closed form in Equation~\eqref{eqn:weighted_maxcut_qaoa1}, where we set $\beta = \frac{\pi}{8}$. We will use the following bound on the partial derivatives of the objective as a function of the individual edge weights.

\begin{lemma}
\label{lem:partial-derivative-c}
Let $\langle C(\gamma) \rangle$ be the $p = 1$ QAOA objective for weighted MaxCut on triangle-free $(D+1)$-regular graphs as given in Equation~\eqref{eqn:weighted_maxcut_qaoa1}, with $\beta = \frac{\pi}{8}$. The partial derivatives of $\langle C(\gamma) \rangle$ with respect to the weights $w_{ij}$ satisfy,
\begin{align}
\left\lvert \frac{\partial \langle C(\gamma) \rangle}{\partial w_{ij}} \right\rvert &\le \frac{1}{2} + \frac{(1 + \gamma)\lvert w_{ij} \rvert}{2} + \frac{\gamma}{4} \bigg( \sum_{v \in \mathrm{nbhd}(i) / \{j\}} \lvert w_{iv} \rvert + \sum_{u \in \mathrm{nbhd}(j) /\{i\}} \lvert w_{uj} \rvert \bigg)
\end{align}
\end{lemma}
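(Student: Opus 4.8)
The plan is to differentiate the closed-form expression \eqref{eqn:weighted_maxcut_qaoa1} directly with respect to a single edge weight $w_{ij}$ and bound the resulting terms one at a time using the fact that $|\sin|, |\cos| \le 1$. Since the objective is a sum over edges $\{u,v\} \in E$, and $w_{ij}$ appears in several of these summands, the first step is to identify exactly which edges contribute a nonzero partial derivative. Writing the edge term as $T_{uv} = w_{uv}\sin(w_{uv}\gamma)\big(\prod_{k \in \mathrm{nbhd}(u)/\{v\}}\cos(w_{uk}\gamma) + \prod_{t \in \mathrm{nbhd}(v)/\{u\}}\cos(w_{tv}\gamma)\big)$, the weight $w_{ij}$ appears in: (a) the term $T_{ij}$ itself, via the prefactor $w_{ij}\sin(w_{ij}\gamma)$; (b) any term $T_{iv}$ with $v \in \mathrm{nbhd}(i)/\{j\}$, where $w_{ij}$ shows up inside the product $\prod_{k \in \mathrm{nbhd}(i)/\{v\}}\cos(w_{ik}\gamma)$ (here $j$ is one of the $k$'s since $j \ne v$); and symmetrically (c) any term $T_{uj}$ with $u \in \mathrm{nbhd}(j)/\{i\}$. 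I would organize the bound as a sum of these three groups of contributions.

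For group (a), differentiating $w_{ij}\sin(w_{ij}\gamma)$ gives $\sin(w_{ij}\gamma) + w_{ij}\gamma\cos(w_{ij}\gamma)$, whose absolute value is at most $1 + \gamma|w_{ij}|$; multiplying by the bracketed sum of two products, each bounded by $1$ in absolute value, and recalling the overall prefactor $\frac{1}{4}\sin(4\beta) = \frac14$ (at $\beta = \pi/8$) plus the $\frac12$ from the constant first term $\frac{1}{2}\sum w_{uv}$ differentiating to $\frac12$, yields a contribution of roughly $\frac12 + \frac{(1+\gamma)|w_{ij}|}{2}$. For group (b), for each fixed $v \in \mathrm{nbhd}(i)/\{j\}$, the derivative of $\prod_{k \in \mathrm{nbhd}(i)/\{v\}}\cos(w_{ik}\gamma)$ with respect to $w_{ij}$ is $-\gamma\sin(w_{ij}\gamma)\prod_{k \ne j}\cos(w_{ik}\gamma)$, bounded by $\gamma$ in absolute value; multiplying by the remaining prefactor $w_{iv}\sin(w_{iv}\gamma)$ (bounded by $|w_{iv}|$) and by $\frac14$ gives a per-edge contribution of $\frac{\gamma}{4}|w_{iv}|$, and summing over $v \in \mathrm{nbhd}(i)/\{j\}$ produces the $\frac{\gamma}{4}\sum_{v \in \mathrm{nbhd}(i)/\{j\}}|w_{iv}|$ term. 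Group (c) is handled identically with the roles of $i$ and $j$ swapped, giving the $\frac{\gamma}{4}\sum_{u \in \mathrm{nbhd}(j)/\{i\}}|w_{uj}|$ term. Adding the three groups and applying the triangle inequality yields exactly the claimed bound.

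The main obstacle — really a bookkeeping obstacle rather than a mathematical one — is making sure the combinatorics of "which edges see $w_{ij}$" is correct and exhaustive: one must use triangle-freeness (guaranteed by girth $> 3$, implied here since $p \ge 1$ and we assume triangle-free graphs) to ensure that $w_{ij}$ does not appear simultaneously in, say, a $\cos$ factor of $T_{ij}$ (it does not, since the products in $T_{ij}$ range over $\mathrm{nbhd}(i)/\{j\}$ and $\mathrm{nbhd}(j)/\{i\}$, neither containing the edge $\{i,j\}$), and that no edge gets double-counted across groups. A minor subtlety is that in a $(D+1)$-regular graph a neighbor $v$ of $i$ could in principle also be a neighbor of $j$, but since we only ever upper-bound and apply the triangle inequality at the end, any such overlap only makes our bound more generous, so it causes no problem. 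I would also note explicitly that at $\beta = \pi/8$ the factor $\frac{\sin(4\beta)}{4} = \frac14$, which is where the constants $\frac12$ and $\frac14$ in the statement come from, and that every $\cos$ and $\sin$ factor not being differentiated contributes a factor bounded by $1$ in modulus.
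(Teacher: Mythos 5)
Your proposal follows essentially the same route as the paper's proof: the same decomposition of $\langle C(\gamma)\rangle$ into per-edge terms $T_{uv}$, the same case analysis of which terms depend on $w_{ij}$ (the edge $\{i,j\}$ itself, edges incident to $i$, edges incident to $j$, and all remaining terms contributing zero), and the same bounding of the untouched trigonometric factors by $1$. The only discrepancy is a constant in your group (a): bounding $\lvert\sin(w_{ij}\gamma)\rvert\le 1$ gives a contribution of $\tfrac12\bigl(1+\gamma\lvert w_{ij}\rvert\bigr)$, so your total reads $1+\tfrac{\gamma\lvert w_{ij}\rvert}{2}+\cdots$, which does not match the stated $\tfrac12+\tfrac{(1+\gamma)\lvert w_{ij}\rvert}{2}+\cdots$ when $\lvert w_{ij}\rvert<1$; to recover the lemma verbatim, bound $\lvert\sin(w_{ij}\gamma)\rvert\le\gamma\lvert w_{ij}\rvert\le\lvert w_{ij}\rvert$ instead (as the paper implicitly does, in the small-$\gamma$ regime where the lemma is later applied). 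This is only a bookkeeping slip, since either form of the bound has the shape $K_1+\gamma D K_2$ needed for the concentration arguments downstream.
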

\begin{proof}
We define for convenience,
\begin{align}
T_{uv} = w_{uv}\sin(w_{uv}\gamma)\Bigg(\prod_{k \in \mathrm{nbhd}(u)/\{v\}}\cos(w_{uk}\gamma) \nonumber + \prod_{t \in \mathrm{nbhd}(v)/\{u\}}\cos(w_{vt}\gamma)\Bigg)
\end{align}
so that $\langle C(\gamma) \rangle = \sum_{\{u,v\} \in E} \frac{w_{uv}}{2} + \frac{T_{uv}}{4}$. Clearly, $\frac{\partial \langle C(\gamma) \rangle}{\partial w_{ij}} = \frac{1}{2} + \frac{1}{4}\sum_{\{u,v\} \in E} \frac{\partial T_{uv}}{\partial w_{ij}}$. To determine the last term, we have four cases,
\begin{enumerate}
    \item $u = i; v = j$: In $T_{uv}$, the term inside the parenthesis is upper bounded by 2 and independent of $w_{ij}$. We have, 
    \begin{align}
    \left\lvert \frac{\partial T_{uv}}{\partial w_{ij}} \right\rvert &\le 2(\sin(w_{ij}) + \gamma w_{ij} \cos(w_{ij})) 
    \\ &\le 2(1 + \gamma)\lvert w_{ij} \rvert.
    \end{align}

    \item $u = i; j \in \mathrm{nbhd}(u) / \{v\}$ : In $T_{uv}$, only the first term inside the parenthesis depends on $w_{ij}$. We have,
    \begin{align}
    \left\lvert \frac{\partial T_{uv}}{\partial w_{ij}} \right\rvert &\le \lvert w_{iv} \rvert \left( \gamma \cos(w_{ij})\right) \le \gamma \lvert w_{iv} \rvert.
    \end{align}

    \item $v=j; i \in \mathrm{nbhd}(v) /\{u\}$: In $T_{uv}$, only the second term inside the parenthesis depends on $w_{ij}$. We have, 
    \begin{align}
    \left\lvert \frac{\partial T_{uv}}{\partial w_{ij}} \right\rvert &\le \lvert w_{uj} \rvert \left( \gamma \cos(w_{ij})\right) \le \gamma \lvert w_{uj} \rvert.
    \end{align}

    \item $u \ne i, v \ne j$: $T_{uv}$ is independent of $w_{ij}$ and $\frac{\partial T_{uv}}{\partial w_{ij}} = 0$.
\end{enumerate}
Combining the above four cases, the result follows.

\end{proof}

We shall also use the following straightforward observation.
\begin{lemma}
\label{fact:n-ge-d}
    For any $(D + 1)$-regular triangle free graph, it must be that the number of vertices $N \ge D(D+1)$.
\end{lemma}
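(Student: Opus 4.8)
The plan is to establish the inequality by a breadth-first count of the vertices out to distance two from a fixed vertex. First I would fix any vertex $v \in V$ and look at its neighbourhood $L_1 = \mathrm{nbhd}(v)$, which has exactly $D+1$ elements by $(D+1)$-regularity; triangle-freeness immediately gives that $L_1$ is an independent set, so the $D+1$ neighbours are pairwise distinct and none of them is $v$. Next, for each $u \in L_1$ I would set $A_u = \mathrm{nbhd}(u) \setminus \{v\}$, which has $D$ elements by regularity, and note that triangle-freeness rules out both $v \in A_u$ and $A_u \cap L_1 \ne \emptyset$ --- a vertex in $A_u \cap L_1$ would be a common neighbour of the adjacent pair $v, u$ and so close a triangle.

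The heart of the argument is that the sets $A_u$, $u \in L_1$, are pairwise disjoint. If $u \ne u'$ and $x \in A_u \cap A_{u'}$, then $x$ is a common neighbour of two distinct vertices $u, u'$ of $\mathrm{nbhd}(v)$, so $v, u, x, u'$ form a $4$-cycle; hence pairwise disjointness is precisely the absence of a $4$-cycle through $v$. This is the one point where more than triangle-freeness is needed: one must take $G$ to have girth at least $5$, slightly stronger than triangle-free but automatic for the large-girth regular graphs that are the object of this section. (Triangle-freeness alone does not suffice for the bound: for $D \ge 3$ the complete bipartite graph $K_{D+1,D+1}$ is $(D+1)$-regular and triangle-free on only $2(D+1) < D(D+1)$ vertices.) Granting girth at least $5$, the union $L_2 = \bigcup_{u \in L_1} A_u$ is a disjoint union, so $|L_2| = \sum_{u \in L_1} |A_u| = D(D+1)$, and by the previous paragraph $L_2$ is disjoint from $\{v\} \cup L_1$.

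Assembling the three layers gives $N = |V| \ge |\{v\}| + |L_1| + |L_2| = 1 + (D+1) + D(D+1) > D(D+1)$, which proves the statement with a little room to spare. I expect the pairwise-disjointness step to be the only real obstacle --- not because it is technically involved, but because it is where one must be careful that the graphs in question are free of $4$-cycles rather than merely of triangles; every other step follows immediately from regularity and the triangle-free hypothesis, with no analytic estimates at all.
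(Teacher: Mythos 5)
Your count is essentially the paper's own proof---fix a vertex, take its $D+1$ neighbours, then the $D$ further neighbours of each---but you have caught something the paper's one-line argument glosses over: the paper only rules out the second-level vertices landing back in $\mathrm{nbhd}(v)$ (that is the triangle condition) and then tacitly treats the $D+1$ sets $A_u$ as pairwise disjoint, which is exactly the step you isolate. That disjointness is equivalent to the absence of $4$-cycles through $v$, and triangle-freeness alone does not provide it; your counterexample $K_{D+1,D+1}$ is valid (it is $(D+1)$-regular, bipartite hence triangle-free, and has $2(D+1) < D(D+1)$ vertices once $D \ge 3$), so the lemma as literally stated is false, and under the stated hypothesis one can only conclude $N \ge 2(D+1)$. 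Your repair---assuming girth at least $5$---makes the two-level count go through verbatim and is in the spirit of the large-girth graphs considered elsewhere in the paper; the caveat is that the concentration results that invoke this lemma (Theorems~\ref{thm:bounded-concentration-p1} and \ref{thm:gaussian-concentration}) assume only triangle-freeness, so as written they inherit the same gap: either their hypotheses should be strengthened to girth $\ge 5$, or the weaker bound $N \ge 2(D+1)$ must be propagated, which degrades the exponents in the tail bounds. In short, you take the same route as the paper, but your version of the argument, with the strengthened girth hypothesis, is the correct one.
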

\begin{proof}
    Consider any node $u$. By definition $u$ has $D+1$ unique neighbors. Each neighbor of $u$ has $D$ neighbors in addition to $u$. Additionally, none of these new neighbors can also be neighbors of $u$ (or they will form a triangle). The observation follows.
\end{proof}

We are now ready to establish the concentration of $\langle C(\gamma) \rangle$ for some common edge weight distributions, at the value of $\gamma$ that optimizes the expected objective.

\subsubsection{Bounded Edge Weights}
In this setting the edge weights are independently and identically sampled from a distribution that is supported on an interval $[a,b]$, where $|a|,|b|$ are independent of $D$. We can show concentration by using the well-known McDiarmid's inequality~\cite{mcdiarmid_1989}, which we state in the necessary form below.

\begin{lemma}[McDiarmid's Inequality {[Adapted from \cite[Theorem~1]{warnke_2016}]}]
    \label{lem:mcdiarmid}
    Let $X_1,\dots,X_n$ be independent random variables each with range $\mathcal{X}$. Let $f \colon \mathcal{X}^{n} \to \mathbb{R}$ be any function with the \emph{bounded differences property}, i.e., for any co-ordinate $i \in [n]$ and points $(x_1,\dots,x_d),(x'_1,\dots,x'_d) \in \mathcal{X}^n$ such that $x_j = x'_j$ for all $j \in [n]\setminus\{i\}$, we have
    \begin{align}
        \label{eqn:bounded-difference}
        \lvert f(x_1,\dots,x_d) - f(x'_1,\dots,x'_d) \rvert \le c_i.
    \end{align}
    Then it holds that
    \begin{align}
        \label{eqn:mcdiarmid-concentration}
        \mathrm{Pr}[ \lvert f(X_1,\dots,X_n) - \mathbb{E}f(X_1,&\dots,X_n)\rvert \ge \epsilon] \le 2\exp\left(\frac{-2\epsilon^2}{\sum_{i=1}^{n} c_i^2}\right).     
    \end{align}
\end{lemma}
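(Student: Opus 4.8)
The plan is to prove this via the classical \emph{method of bounded differences}, which combines a Doob martingale construction with an Azuma--Hoeffding exponential-moment bound. First I would fix an arbitrary ordering of the variables and define the Doob martingale $Z_0, Z_1, \dots, Z_n$ by $Z_k = \mathbb{E}[f(X_1,\dots,X_n) \mid X_1,\dots,X_k]$, so that $Z_0 = \mathbb{E}[f]$ is a constant and $Z_n = f(X_1,\dots,X_n)$. The total deviation $f - \mathbb{E}f$ is then the telescoping sum $\sum_{k=1}^n V_k$ of the martingale differences $V_k := Z_k - Z_{k-1}$, each of which satisfies $\mathbb{E}[V_k \mid X_1,\dots,X_{k-1}] = 0$ by the tower property.

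The key step is to show that, conditioned on $X_1,\dots,X_{k-1}$, the difference $V_k$ almost surely takes values in an interval of length at most $c_k$. This is where both the independence of the $X_i$ and the bounded differences property \eqref{eqn:bounded-difference} enter. Writing $g_k(x) := \mathbb{E}[f \mid X_1,\dots,X_{k-1}, X_k = x]$, the inner expectation over $X_{k+1},\dots,X_n$ factors cleanly precisely because the variables are independent, so $g_k(x) - g_k(x')$ is an expectation of $f(\dots,x,\dots) - f(\dots,x',\dots)$ over the \emph{same} product distribution of the remaining coordinates. Integrating the pointwise bound $\lvert f(\dots,x,\dots) - f(\dots,x',\dots)\rvert \le c_k$ then yields $\sup_x g_k(x) - \inf_{x'} g_k(x') \le c_k$; since $V_k = g_k(X_k) - \mathbb{E}[g_k(X_k) \mid X_1,\dots,X_{k-1}]$, its conditional range is at most $c_k$. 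I expect this conditional-range bound to be the main obstacle, since it is the only place requiring genuine care: one must verify that independence legitimately decouples the conditioning on past coordinates from the marginalization over future ones, so that the worst-case pointwise gap survives the averaging.

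With the difference bound in hand, the remainder is a standard Chernoff argument. For $t > 0$ I would write $\mathrm{Pr}[f - \mathbb{E}f \ge \epsilon] \le e^{-t\epsilon}\, \mathbb{E}[e^{t(f-\mathbb{E}f)}]$ and expand along the martingale as $e^{t(f - \mathbb{E}f)} = \prod_{k=1}^n e^{tV_k}$. Conditioning iteratively from the last coordinate inward and invoking Hoeffding's lemma --- that a mean-zero random variable supported on an interval of length $c_k$ has moment generating function at most $e^{t^2 c_k^2 / 8}$ --- collapses the product to $\mathbb{E}[e^{t(f-\mathbb{E}f)}] \le \exp\!\big(\tfrac{t^2}{8}\sum_{k=1}^n c_k^2\big)$.

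Finally I would optimize the free parameter, choosing $t = 4\epsilon / \sum_k c_k^2$ to obtain the one-sided tail bound $\mathrm{Pr}[f - \mathbb{E}f \ge \epsilon] \le \exp\!\big(-2\epsilon^2 / \sum_k c_k^2\big)$. Applying the identical argument to $-f$, which inherits the same bounded differences constants $c_k$, controls the lower tail, and a union bound over the two one-sided events produces the factor of $2$ and the stated two-sided inequality \eqref{eqn:mcdiarmid-concentration}.
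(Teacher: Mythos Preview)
Your proposal is a correct and complete outline of the classical proof of McDiarmid's inequality via the Doob martingale and Azuma--Hoeffding. The paper, however, does not prove this lemma at all: it is stated with a citation to \cite{warnke_2016} and used as a black-box tool in the proof of Theorem~\ref{thm:bounded-concentration-p1}. So there is no ``paper's own proof'' to compare against; you have supplied a self-contained argument where the authors simply invoke a reference.
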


We now show a concentration inequality on the relative error from the mean of the QAOA objective under this setting.

\begin{theorem}[$p=1$, concentration, bounded weights]
    \label{thm:bounded-concentration-p1}
    Let the edge weights of a triangle-free $(D+1)$-regular  graph $(V,E)$ with $|V| = N$ vertices, be chosen from a distribution $\mathbf{w}$ with mean $\mu$, that is supported on the interval $[a,b]$, where $|a|,|b|,|\mu|$ are independent of $D$. Then, for all $\gamma = \Theta(D^{-\alpha})$ for some $\alpha \in (0,1)$, the QAOA objective $\langle C(\gamma) \rangle$ with $p=1$ satisfies the following concentration inequality, where $K$ is independent of $D$:
    \begin{align}
        &\mathrm{Pr}\left[\lvert \qaoaobjsingle{} - \mathbb{E}_{\mathbf{w}}[\qaoaobjsingle]\rvert \ge \epsilon\lvert\mathbb{E}_{\mathbf{w}}[\qaoaobjsingle]\rvert\right] \le 2\exp\left(\frac{-\epsilon^2 D^{1 + 2\alpha} (1 - O(D^{-\alpha}))}{K}\right).
    \end{align}
\end{theorem}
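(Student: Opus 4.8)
The plan is to apply McDiarmid's inequality (Lemma~\ref{lem:mcdiarmid}) to the function $f(w_{e_1},\dots,w_{e_{|E|}}) = \langle C(\gamma)\rangle$, viewed as a function of the $|E|$ i.i.d. edge weights. The key input is a bound on the per-coordinate differences $c_{ij}$ for the edge $\{i,j\}$. Since the weights lie in the bounded interval $[a,b]$, I would first use Lemma~\ref{lem:partial-derivative-c} together with the mean value theorem to convert the partial-derivative bound into a bounded-differences constant: $c_{ij} \le \sup |w| \cdot \sup_{w}\left|\partial \langle C(\gamma)\rangle/\partial w_{ij}\right|$. Feeding in $|w_{ij}|, |w_{iv}|, |w_{uj}| \le \max(|a|,|b|) =: R$ and the fact that each of the two neighborhood sums in Lemma~\ref{lem:partial-derivative-c} has exactly $D$ terms, we get $c_{ij} \le C_1 + C_2\gamma D$ for constants $C_1, C_2$ depending only on $R$; with $\gamma = \Theta(D^{-\alpha})$ this is $c_{ij} = \Theta(D^{1-\alpha})$ (dominated by the $\gamma D$ term since $\alpha < 1$), so $c_{ij}^2 = O(D^{2-2\alpha})$.

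Next I would bound $\sum_{ij} c_{ij}^2 \le |E| \cdot \max_{ij} c_{ij}^2 = \frac{N(D+1)}{2}\cdot O(D^{2-2\alpha}) = O(N D^{3-2\alpha})$, using Equation~\eqref{eq:reg_graph_edge_count}. McDiarmid then gives a tail bound of the form $2\exp\!\left(-\frac{2\epsilon'^2}{O(ND^{3-2\alpha})}\right)$ for an additive deviation $\epsilon'$. To turn this into the stated \emph{relative} deviation bound, set $\epsilon' = \epsilon\,|\mathbb{E}_{\mathbf{w}}[\langle C(\gamma)\rangle]|$. From Equation~\eqref{eqn:p_1_trig}, $\mathbb{E}_{\mathbf{w}}[\langle C(\gamma)\rangle] = \frac{N(D+1)\mu}{2}\left(\frac12 + \vartheta_1(D,\gamma)/\sqrt{D}\right)$, and at $\gamma = \Theta(D^{-\alpha})$ the correction term $\vartheta_1(D,\gamma)/\sqrt D$ is vanishing, so $\mathbb{E}_{\mathbf{w}}[\langle C(\gamma)\rangle] = \Theta(ND)(1 - O(D^{-\alpha}))$ — here one should double-check the order of the correction, since $\vartheta_1(D,\gamma)$ with $\gamma = \Theta(D^{-\alpha})$ rather than $\Theta(D^{-1/2})$ behaves like $\Theta(D^{1-2\alpha})\cdot$(small), giving a relative correction of order $D^{-\alpha}$ after the Taylor expansion of $\sin$ and the $\cos^D$ factor. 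Thus $\epsilon'^2 = \epsilon^2 \cdot \Theta(N^2 D^2)(1 - O(D^{-\alpha}))$.

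Dividing, the exponent becomes $-\frac{2\epsilon^2 \Theta(N^2D^2)(1-O(D^{-\alpha}))}{O(ND^{3-2\alpha})} = -\Theta\!\left(\epsilon^2 \frac{N}{D^{1-2\alpha}}\right)(1-O(D^{-\alpha}))$, and invoking Lemma~\ref{fact:n-ge-d} to substitute $N \ge D(D+1) = \Theta(D^2)$ gives an exponent of order $-\epsilon^2 D^{1+2\alpha}(1-O(D^{-\alpha}))$, which is exactly the claimed form with $K$ absorbing all the $D$-independent constants. I would then state that $K$ depends only on $a$, $b$, $\mu$ (through $R$ and the constants hidden in $\vartheta_1$) and the constant in $\gamma = \Theta(D^{-\alpha})$, hence is independent of $D$.

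The main obstacle I anticipate is twofold: first, carefully justifying that $\sup_w |\partial\langle C(\gamma)\rangle/\partial w_{ij}|$ over the box $[a,b]^{|E|}$ is what controls the bounded-differences constant (a clean application of the mean value theorem along the segment between two weight vectors differing in one coordinate, using that the box is convex), and second, pinning down the correct order of the relative correction $O(D^{-\alpha})$ in $\mathbb{E}_{\mathbf{w}}[\langle C(\gamma)\rangle]$ so that it matches the $(1 - O(D^{-\alpha}))$ in the theorem statement — this requires re-running the Taylor/limit analysis from the proof of Theorem~\ref{thm:p1_infinite_d} but with the generic exponent $\alpha$ instead of $1/2$, and checking that the dominant balance still makes the leading term $\frac{N(D+1)\mu}{4}$. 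Everything else is bookkeeping of constants.
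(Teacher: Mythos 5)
Your proposal is correct and follows essentially the same route as the paper's proof: McDiarmid's inequality with bounded-differences constants $c_{ij}=O(\gamma D)$ obtained from Lemma~\ref{lem:partial-derivative-c} via the mean value theorem, the lower bound $\lvert\mathbb{E}_{\mathbf{w}}[\langle C(\gamma)\rangle]\rvert \ge \tfrac{N(D+1)\lvert\mu\rvert}{2}(1-O(D^{-\alpha}))$ coming from $\lvert\mathbb{E}_{\mathbf{w}}[\cos(w\gamma)]\rvert\le 1$ and $\mathbb{E}_{\mathbf{w}}[w\sin(w\gamma)]=O(D^{-\alpha})$, and finally $N\ge D(D+1)$ from Lemma~\ref{fact:n-ge-d} to reach the $D^{1+2\alpha}$ exponent. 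The only cosmetic difference is that the coordinate increment should be bounded by $\lvert b-a\rvert$ rather than $\sup\lvert w\rvert$, which does not affect the asymptotics.
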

\begin{proof}
Assume wlog that $\lvert a\rvert \le \lvert b\rvert$.
The $N(D+1)/2$ edge weights $w_{ij}$ form a set of identical independent random variables taking values in $[a,b]$. Thus for any $\{i,j\} \in E$, $\lvert w_{ij} \rvert \le |b|$. Using the result of Lemma~\ref{lem:partial-derivative-c}, and noting that for any $\{i,j\} \in E$, there are exactly $D$ vertices in each of the sets $\mathrm{nbhd}(i)\setminus\{j\}$ and $\mathrm{nbhd}(j)\setminus\{i\}$, we have
\begin{align}
\label{eqn:bounded-pdev}
    \forall \{i,j\} \in E, \quad \left\lvert\frac{\partial \qaoaobjsingle}{\partial w_{ij}}\right\rvert \le  \frac{1 + (1+\gamma)|b| + \gamma D |b|}{2}.
\end{align}
The right hand side of the above inequality can be written as $K_1 + \gamma D K_2$ where $K_1,K_2$ are indepedent of $D$. 

Now, we view $\qaoaobjsingle$ as a function of $N(D+1)/2$ identical independent random variables.
By the mean value theorem, Equation~\eqref{eqn:bounded-pdev}, the assumption $\gamma = \Theta(D^{-\alpha})$, and the boundedness of $w_{ij}$, $\qaoaobjsingle$ satisfies the bounded differences property (Equation~\eqref{eqn:bounded-difference}) with respect to $w_{ij}$ with $c_{ij} \le \lvert b - a \rvert (K_1 + \gamma DK_2) \le K_3 \gamma D$ with $K_3$ independent of $D$. 

Noting that $\rvert\mathbb{E}_{\mathbf{w}}[\cos(w\gamma)]\lvert \le 1$ and $\mathbb{E}_{\mathbf{w}}[w\sin(w\gamma)] = O(D^{-\alpha})$, for $\gamma =\Theta(D^{-\alpha})$, from Equation~\eqref{eq:expected_weighted_maxcut_qaoa1} it follows that
\begin{align}
\lvert\mathbb{E}_{\mathbf{w}} [\qaoaobjsingle] \rvert \ge \frac{N(D+1)|\mu|(1 - O(D^{-\alpha}))}{2}.
\end{align}
Finally, %
\begin{align}
    \mathrm{Pr}\bigg[\lvert \qaoaobjsingle{} & - \mathbb{E}_{\mathbf{w}}\left[\qaoaobjsingle{}\right]\rvert \ge \epsilon\lvert\mathbb{E}_{\mathbf{w}}\left[\qaoaobjsingle\right]\rvert\bigg] \nonumber \\
    &\le 2\exp\left(\frac{-\epsilon^2 N^2 (D+1)^2 |\mu|^2 (1 - O(D^{-\alpha}))}{2\sum_{\{i,j\}\in E} c_{ij}^2}\right) \nonumber \\
    &\le 2\exp\left(\frac{-\epsilon^2 N (D+1) |\mu|^2(1 - O(D^{-\alpha}))}{K_3^2 \gamma^2 D^2}\right) \nonumber \\
    &\le 2\exp\left(\frac{-\epsilon^2 (D+1)^2 D^{2\alpha-1} |\mu|^2(1 - O(D^{-\alpha}))}{K_4^2}\right),
\end{align}
where $K_4$ is a constant independent of $D$.
Here, the first inequality follows from Lemma~\ref{lem:mcdiarmid}, and the final inequality follows from Lemma~\ref{fact:n-ge-d}. Setting $K = \frac{K_3^2}{|\mu|^2}$, the result of the theorem follows.

\end{proof}

Theorem~\ref{thm:bounded-concentration-p1} applies immediately to the case where the parameter $\gamma$ is set to $\frac{1}{\sqrt{\mathbb{E}_{\mathbf{w}}[w^2]D}}$.  Such a parameter setting is shown in Theorem~\ref{thm:p1_infinite_d} to maximize the expected objective in the limit $D \to \infty$.  

\subsubsection{Gaussian Edge Weights}

In order to show concentration when the weight distribution of each edge is Gaussian, we need an additional assumption that the number of vertices $N$ is such that $\log(N) = o(D\log(D))$. We use the following result on the sub-gaussian concentration of Lipschitz continuous functions of standard normal variables.

\begin{lemma}[Adapted from~{\cite[Theorem~5.2.2]{vershynin_2018}}]
\label{lem:gaussian-lipschitz}
    Consider an $n$-dimensional random vector $X$ drawn from the $n$-dimension standard normal, i.e. $X \sim \mathcal{N}(0,I_n)$, and a differentiable function $f \colon \mathbb{R}^n \to \mathbb{R}$ that is $L$-Lipschitz continuous, i.e. $\lVert \nabla f(x) \rVert_2 \le L$ for all $x \in \mathbb{R}^n$. Then,
    \begin{align}
    \mathrm{Pr}\left[\lvert f(X) - \mathbb{E}f(X) \rvert \ge t\right] \le \exp\left(\frac{-t^2}{K_{\psi}L^2}\right),
    \end{align}
    where $K_\psi$ is a universal constant.
\end{lemma}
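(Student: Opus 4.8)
The plan is to derive the lemma as an essentially immediate consequence of the Gaussian concentration inequality for Lipschitz functions, in the form of \cite[Theorem~5.2.2]{vershynin_2018}, which states that for $X \sim \mathcal{N}(0, I_n)$ and any $f \colon \mathbb{R}^n \to \mathbb{R}$ that is $L$-Lipschitz with respect to the Euclidean metric, the centered variable $f(X) - \mathbb{E} f(X)$ is sub-gaussian with $\lVert f(X) - \mathbb{E} f(X) \rVert_{\psi_2} \le C L$ for a universal constant $C$. The only preliminary step is to reconcile the hypotheses: the assumption made here, $\lVert \nabla f(x) \rVert_2 \le L$ for all $x$, implies that $f$ is $L$-Lipschitz in the Euclidean norm. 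This follows by writing, for arbitrary $x, y \in \mathbb{R}^n$, $f(y) - f(x) = \int_0^1 \langle \nabla f(x + s(y - x)),\, y - x \rangle \, ds$ and bounding the integrand by Cauchy--Schwarz.

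With $f$ identified as $L$-Lipschitz, the second step is to invoke \cite[Theorem~5.2.2]{vershynin_2018} to obtain the sub-gaussian norm bound, and then convert this into a tail bound via the standard equivalence between a controlled $\psi_2$-norm and a sub-gaussian tail (e.g.\ \cite[Proposition~2.5.2]{vershynin_2018}): if $\lVert Y \rVert_{\psi_2} \le K$ then $\mathrm{Pr}[\lvert Y \rvert \ge t] \le 2\exp(-c t^2 / K^2)$ for a universal $c$. Composing the two gives $\mathrm{Pr}[\lvert f(X) - \mathbb{E} f(X) \rvert \ge t] \le 2\exp(-c t^2 / (C^2 L^2))$, and the final step is the cosmetic one of folding the constants $c$, $C$ and the leading factor $2$ into a single universal constant $K_\psi$, yielding the stated form.

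I do not expect a substantive obstacle here: the mathematical content is entirely inherited from the cited concentration inequality, and the argument is a short translation between equivalent formulations. The two points that do warrant care are (i) that the Lipschitz constant must be measured in the Euclidean ($\ell_2$) metric, which is precisely the form needed downstream, since the concentration of $\qaoaobjsingle{}$ for Gaussian edge weights will be controlled through an $\ell_2$ bound on the gradient of $\qaoaobjsingle{}$ in the vector of edge weights (cf.\ the partial-derivative estimate of Lemma~\ref{lem:partial-derivative-c}); and (ii) confirming that $K_\psi$ is genuinely universal, i.e.\ independent of the dimension $n$, of the Lipschitz constant $L$, and of the particular function $f$, which is inherited from the universality of the constants in \cite{vershynin_2018}.
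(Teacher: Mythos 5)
Your route coincides with the paper's: the paper gives no proof of this lemma at all, stating it as a direct adaptation of \cite[Theorem~5.2.2]{vershynin_2018}, and your two observations --- that the gradient bound implies the Euclidean Lipschitz property via the fundamental theorem of calculus and Cauchy--Schwarz, and that the sub-gaussian norm bound converts to a tail bound --- are exactly the intended content, as is your point that the $\ell_2$ metric and the universality of $K_\psi$ are what matter downstream.

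One step, however, is not merely cosmetic. The prefactor $2$ in $2\exp\left(-ct^2/(C^2L^2)\right)$ cannot be folded into the constant in the exponent uniformly in $t$: for small $t$ the left-hand side exceeds $1$, while $\exp\left(-t^2/(K_\psi L^2)\right)<1$ for every choice of $K_\psi$, so the literal inequality stated in the lemma (with no leading factor) does not follow from that manipulation alone. There are two easy repairs. Either keep the factor $2$ --- this is harmless for the application, since Theorem~\ref{thm:gaussian-concentration} already carries a leading factor $2$ and only uses the bound in the regime where it is small --- or invoke the prefactor-free two-sided form of Gaussian concentration directly: applying the Gaussian isoperimetric inequality to the sets $\{f\le \mathbb{E}f - t\}$ and $\{f\ge \mathbb{E}f + t\}$, which are at Euclidean distance at least $2t/L$, gives $\mathrm{Pr}\left[\lvert f(X)-\mathbb{E}f(X)\rvert \ge t\right]\le \exp\left(-t^2/(2L^2)\right)$, which is the statement as written (with $K_\psi=2$). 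With either fix the lemma and its use later in the paper go through unchanged.
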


The Lipschitz condition of Lemma~\ref{lem:gaussian-lipschitz} does not directly apply to the QAOA objective because the weights may be unbounded. Our argument will therefore address the concentration in two cases, using Lemma~\ref{lem:gaussian-lipschitz} to show concentration inside a suitably chosen interval, and Gaussian tail bounds to show that the probability of lying outside this interval is sufficiently small. We formalize this in the following concentration inequality on the relative error from the mean of the QAOA objective.

\begin{theorem}[$p=1$, concentration, Gaussian weights]
    \label{thm:gaussian-concentration}
    Let the edge weights of a $(D+1)$-regular triangle-free graph $(V,E)$ with $|V| = N$ vertices such that $\log(N) = o(D\log(D))$, be chosen from the normal distribution $\mathcal{N}(\mu,\sigma^2)$, where $\mu,\sigma$ are independent of $D$. Then, for all $\gamma = \Theta(D^{-1/2})$, the QAOA objective $\langle C(\gamma) \rangle$ with $p=1$ satisfies the following concentration inequality, where $K$ is independent of $D$:
    \begin{align}
        \mathrm{Pr}\bigg[\lvert \qaoaobjsingle{} - \mathbb{E}_{\mathbf{w}}[\qaoaobjsingle]\rvert \ge \epsilon\lvert\mathbb{E}_{\mathbf{w}}[\qaoaobjsingle]\rvert\bigg] \le 2\exp\left(\frac{-\epsilon D (1 - o(1))}{K \log(D)}\right).
    \end{align}
\end{theorem}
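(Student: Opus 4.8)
The plan is to split the probability space according to whether all edge weights land in a bounded window around the mean, and handle the two regimes separately. Concretely, fix a threshold $R = R(D)$ — a natural choice is $R = \Theta(\sqrt{\log(N)})$ or slightly larger, so that Gaussian tail bounds make the ``bad'' event negligible — and define the event $\mathcal{A} = \{\,|w_{ij} - \mu| \le R \ \text{for all } \{i,j\}\in E\,\}$. On $\mathcal{A}^c$, a union bound over the $N(D+1)/2$ edges together with the standard Gaussian tail $\mathrm{Pr}[|w-\mu|>R\sigma] \le 2e^{-R^2/2}$ shows $\mathrm{Pr}[\mathcal{A}^c]$ is exponentially small; the assumption $\log(N) = o(D\log D)$ is exactly what guarantees this tail probability is dominated by the target bound $2\exp(-\epsilon D(1-o(1))/(K\log D))$. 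So the real work is conditioning on $\mathcal{A}$.

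On $\mathcal{A}$, I would use Lemma~\ref{lem:gaussian-lipschitz}. First re-express the weights as $w_{ij} = \mu + \sigma X_{ij}$ with $X_{ij}$ i.i.d. standard normal, so that $\langle C(\gamma)\rangle$ becomes a function $f$ of the standard Gaussian vector $X \in \mathbb{R}^{N(D+1)/2}$. The catch is that $f$ is not globally Lipschitz with a $D$-independent constant, so I would replace $f$ by a truncated/smoothed version $\tilde f$ that agrees with $f$ whenever $\|X\|_\infty \le R/\sigma$ and whose gradient is controlled everywhere (e.g. by composing with a cutoff). The gradient bound comes from Lemma~\ref{lem:partial-derivative-c}: on the window, $|w_{ij}| \le |\mu| + R\sigma$ and $\gamma = \Theta(D^{-1/2})$, so each partial derivative is $O(1 + \gamma D R) = O(\sqrt{D}\,R)$ after the change of variables picks up a factor $\sigma$. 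Summing the squares over the $N(D+1)/2 = \Theta(ND)$ edges and using Lemma~\ref{fact:n-ge-d} ($N \ge D(D+1)$, hence $ND \ge D^3$) gives a Lipschitz constant $L$ with $L^2 = O(ND \cdot D R^2) = O(N D^2 R^2)$. Then Lemma~\ref{lem:gaussian-lipschitz} with $t = \epsilon|\mathbb{E}_{\mathbf{w}}[\langle C(\gamma)\rangle]|$, where the expectation is lower-bounded by $\Theta(N D |\mu|)$ as in the proof of Theorem~\ref{thm:bounded-concentration-p1} (using $|\mathbb{E}[\cos(w\gamma)]|\le 1$ and $\mathbb{E}[w\sin(w\gamma)] = O(D^{-1/2})$), yields an exponent of order $-\epsilon^2 N^2 D^2 \mu^2 / (N D^2 R^2) = -\epsilon^2 N \mu^2 / R^2$. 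Using $N \ge D^2$ and $R^2 = \Theta(\log N) = O(D \log D)$ turns this into $-\Theta(\epsilon^2 D^2/(D\log D)) = -\Theta(\epsilon^2 D/\log D)$, matching the claimed form (the linear-in-$\epsilon$ versus quadratic discrepancy in the statement I would absorb into the regime $\epsilon \le 1$ or track as stated). A final subtlety is that conditioning on $\mathcal{A}$ slightly perturbs $\mathbb{E}[f \mid \mathcal{A}]$ relative to the unconditional $\mathbb{E}[f]$; since $\mathrm{Pr}[\mathcal{A}^c]$ is super-exponentially small and $f$ grows at most polynomially on the relevant scale, this shift is lower-order and can be folded into the $(1-o(1))$ factor.

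I would then combine the two pieces: $\mathrm{Pr}[|f - \mathbb{E}f| \ge t] \le \mathrm{Pr}[\mathcal{A}^c] + \mathrm{Pr}[|f-\mathbb{E}f|\ge t,\ \mathcal{A}]$, bound the first term by the union bound above and the second using the truncated-function concentration (noting $f = \tilde f$ on $\mathcal{A}$), and check that both contributions are at most $\exp(-\epsilon D(1-o(1))/(K\log D))$ for an appropriate $D$-independent $K$, absorbing constants. The main obstacle I anticipate is the truncation argument: making the cutoff of $f$ rigorous while keeping the Lipschitz constant tight (not losing extra powers of $D$), and cleanly controlling the gap between the conditional and unconditional means so that the final bound genuinely has the advertised $D/\log D$ rate rather than something weaker. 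Everything else — the Gaussian tail bound, the gradient estimate from Lemma~\ref{lem:partial-derivative-c}, the lower bound on $|\mathbb{E}[\langle C(\gamma)\rangle]|$, and the vertex count from Lemma~\ref{fact:n-ge-d} — is routine once the truncation framework is in place.
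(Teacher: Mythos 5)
Your overall architecture matches the paper's: truncate the weights to a bounded window, kill the bad event by a union bound plus Gaussian tails, and on the good event apply Lemma~\ref{lem:gaussian-lipschitz} with the gradient bound from Lemma~\ref{lem:partial-derivative-c}, the lower bound $\lvert\mathbb{E}_{\mathbf{w}}[\qaoaobjsingle]\rvert = \Omega(ND\lvert\mu\rvert)$, and $N \ge D(D+1)$ from Lemma~\ref{fact:n-ge-d}. Your extra care about smoothing $f$ into a globally Lipschitz $\tilde f$ and about the conditional-versus-unconditional mean is, if anything, more scrupulous than the paper, which simply applies the Lipschitz concentration on the event that all weights lie in the window.

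The genuine gap is your choice of truncation radius. With $R = \Theta(\sqrt{\log N})$, the union bound gives $\mathrm{Pr}[\mathcal{A}^c] \lesssim N D\, e^{-R^2/2}$, which is only \emph{polynomially} small (e.g.\ for the extremal case $N = \Theta(D^2)$ it is $\mathrm{poly}(D)^{-1}$), and hence is not dominated by the claimed bound $2\exp\bigl(-\epsilon D(1-o(1))/(K\log D)\bigr)$; your parenthetical that the hypothesis $\log N = o(D\log D)$ guarantees this domination is backwards, since that hypothesis \emph{upper}-bounds $\log N$ and so cannot make $e^{-c\log N}$ beat an $e^{-\Theta(D/\log D)}$ target. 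In the paper the radius is $t_\eta\sigma$ with $t_\eta^2 = 2(\epsilon^{\eta} D\log D + \log N + \log(D+1))$, i.e.\ it grows like $\sqrt{\epsilon D\log D}$, which makes $\mathrm{Pr}[\neg\mathcal{B}_\eta] \le \exp(-\epsilon^\eta D\log D)$; the role of the assumption $\log N = o(D\log D)$ is the opposite of what you assign it — it ensures $t_\eta^2 = O(\epsilon^\eta D\log D)$ so the Lipschitz constant $L^2 = O(ND^3\log D\,\epsilon^\eta)$ stays controlled. This $\epsilon$-dependent radius is also exactly what produces the linear-in-$\epsilon$ exponent you flagged as a discrepancy: the good-event exponent becomes $-\Theta(\epsilon^{2-\eta}D/\log D)$ and balancing with the bad-event exponent at $\eta = 1$ gives the stated rate. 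Your plan to ``absorb'' the discrepancy in the regime $\epsilon \le 1$ does not work, since for $\epsilon \le 1$ an $\exp(-c\epsilon^2 D/\log D)$ bound is strictly weaker than the claimed $\exp(-c\epsilon D/\log D)$. With the radius corrected as above, the rest of your outline (gradient bound $O(\sigma\gamma D R)$ per coordinate, $L^2 = O(ND^2R^2)$, the mean lower bound, and $N \ge D(D+1)$) goes through as in the paper.
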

\begin{proof}
    For any $\epsilon > 0$, define the interval
    \begin{align}
        \mathcal{I}_\eta = [\mu - t_\eta\sigma, \mu +  t_\eta\sigma],
    \end{align}
    where \begin{align}
    t_\eta = \sqrt{2(\epsilon^{\eta} D\log(D) + \log(N) + \log(D+1))},
    \end{align} with $\eta$ to be chosen later. Let $\mathcal{B}_\eta$ denote the event that $w_{ij} \in \mathcal{I}_\eta$ for all $\{i,j\} \in E$, and the event $\mathcal{E}_\epsilon$ as $\left[\lvert \qaoaobjsingle{} - \mathbb{E}_{\mathbf{w}}[\qaoaobjsingle]\rvert \ge \epsilon\lvert\mathbb{E}_{\mathbf{w}}[\qaoaobjsingle]\rvert\right]$. From the definition of conditional probability,
    \begin{align}
    \label{eqn:split-tail-gaussian}
    \mathrm{Pr}[\mathcal{E}_\epsilon] \nonumber & = \mathrm{Pr}[\neg\mathcal{B}_\eta \cap \mathcal{E}_\epsilon] + \mathrm{Pr}[\mathcal{B}_\eta \cap \mathcal{E}_\epsilon] \nonumber\\
    &\le \mathrm{Pr}[\neg\mathcal{B}_\eta] + \mathrm{Pr}[\mathcal{E}_\epsilon \mid \mathcal{B}_\eta].
    \end{align}
    We bound $\mathrm{Pr}[\neg\mathcal{B}_\eta]$ by the following consideration:
    \begin{align}
        \label{eqn:bound-truncation-gaussian}
        \mathrm{Pr}[\neg\mathcal{B}_\eta] &= \mathrm{Pr}[\exists \{i,j\} \in E \text{ s.t. } w_{ij} \not\in \mathcal{I}_\eta] \nonumber \\
        &\le \frac{N(D+1)}{2} \exp(-t_{\eta}^2/2) \nonumber\\
        &\le \exp(-t_{\eta}^2/2 + \log(N) + \log(D+1))\nonumber \\
        &= \exp(-\epsilon^\eta D \log(D)),
    \end{align}
    where the first inequality follows by the union bound and the final equality by definition of $t_\eta$.
    
    To bound the second term on the RHS of \eqref{eqn:split-tail-gaussian}, note that in the event of  $\mathcal{B}_{\eta}$, the weights are all bounded in the interval $[\mu - t_\eta \sigma, \mu + t_\eta \sigma]$. The absolute value of any weight in this interval is bounded above by $b_\eta = \lvert \mu + \mathrm{sgn}(\mu) t_\eta \sigma \rvert$.
    We now consider the QAOA objective $\qaoaobjsingle$ as a function of a random $N(D+1)/2$-dimensional vector $Z = (z_{ij})_{\{i,j\} \in E}$ whose entries are each drawn from the standard normal distribution $\mathcal{N}(0,1)$ by defining $z_{ij} = \frac{w_{ij} - \mu}{\sigma}$. From Lemma~\ref{lem:partial-derivative-c} and an application of the chain rule,
    \begin{align}
\label{eqn:bounded-pdev-z}
    \forall \{i,j\} \in E, \quad \left\lvert\frac{\partial \qaoaobjsingle}{\partial z_{ij}}\right\rvert &= \left\lvert\frac{\partial w_{ij}}{\partial z_{ij}}\right\rvert \cdot \left\lvert\frac{\partial \qaoaobjsingle}{\partial w_{ij}}\right\rvert\nonumber\\
    & \le  \frac{\sigma(1 + (1+\gamma)b_\eta + \gamma D b_\eta)}{2} \nonumber\\
    &= \frac{\sigma \gamma D b_\eta (1 + O(D^{-1}))}{2}.
\end{align}
We can now bound the Lipschitz constant $L$ of $\qaoaobjsingle$ as
\begin{align}
    \label{eq:lipschitz-bound-gaussian}
    L \le \sqrt{\frac{N(D+1)}{2}} \max_{\{i,j\} \in E} \left\lvert\frac{\partial \qaoaobjsingle}{\partial z_{ij}}\right\rvert,
\end{align}
which leads to the following sequence of deductions:
\begin{align}
L^2 &\le \frac{(N D^2(D+1)\sigma^2 \gamma^2 \lvert \mu + \mathrm{sgn}(\mu) t_\eta \sigma \rvert^2)(1 + O(D^{-1}))}{8}\nonumber \\
&\le \frac{(N D^2(D+1)\sigma^4 \gamma^2 t_\eta^2)(1 + O(D^{-1}))}{8} \nonumber\\
&\le \frac{(N D^3(D+1) \log(D)\sigma^4 \gamma^2 \epsilon^{\eta})(1 + o(1))}{4} \nonumber\\
&\le K_1 N D^2(D+1) \log(D)\sigma^4 \epsilon^{\eta}(1 + o(1)),
\end{align}
where $K_1$ is a constant independent of $D$. The second inequality follows because $\frac{\mu}{\sigma t_\eta} = O(D^{-1/2})$, the third because $\log(D+1),\log(N) = o(D\log(D))$ by assumption,
and the fourth because $\gamma = \Theta(D^{-1/2})$ by assumption.

Finally, we can return to Equation~\eqref{eq:expected_weighted_maxcut_qaoa1} to see that
\begin{align}
\lvert\mathbb{E}_{\mathbf{w}} [\qaoaobjsingle] \rvert \ge \frac{N(D+1)|\mu|(1 - O(D^{-1/2}))}{2}.
\end{align}
Using Lemma~\ref{lem:gaussian-lipschitz} we have that,
\begin{align}
\label{eqn:bounded-concentration-gaussian}
\mathrm{Pr}[\mathcal{E}_\epsilon \mid \mathcal{B}_\eta] 
&\le \exp\left(-\frac{\epsilon^{2 - \eta}N(D+1)\mu^2(1 - o(1))}{4 K_\psi K_1 D^2 \log(D)\sigma^4 }\right)\nonumber \\
&\le \exp\left(-\frac{\epsilon^{2 - \eta}D(D+1)^2\mu^2(1 - o(1))}{4 K_\psi K_1 D^2 \log(D)\sigma^4 }\right)\nonumber \\
&\le \exp\left(-\frac{\epsilon^{2 - \eta}D(1 - o(1))}{K_2 \log(D)}\right),
\end{align}
where $K_2$ is a constant independent of $D$. Finally, setting $\eta = 1$, the result follows from Equations~(\ref{eqn:split-tail-gaussian},\ref{eqn:bound-truncation-gaussian},\ref{eqn:bounded-concentration-gaussian}).
\end{proof}

\subsubsection{Summary of concentration results}
In view of our concentration results for graphs with weight distributions that have bounded support (Theorem~\ref{thm:bounded-concentration-p1}), or for normally distributed weights (Theorem~\ref{thm:gaussian-concentration}), we see that, in the limit $D \to \infty$, and when $\gamma = \Theta(D^{-1/2})$, the typical relative deviations from the mean for the QAOA objective $\qaoaobjsingle$ are of the order $\tilde{O}(D^{-1})$, which ignores terms that are $O(\text{polylog}(D))$. Particularly, the probability of a relative deviation greater than $\epsilon$ are $\approx \exp(-\epsilon^2 D^2)$ and $\approx \exp(-\epsilon D / \log D)$ for bounded and Gaussian weights, respectively. Crucially, this shows that the relative dimensions in the quantity $\vartheta_1(D,\gamma)$ have relative deviations of the order $\tilde{O}(D^{-1/2})$. This is notable as $\vartheta_1(D,\gamma)$ tends to a constant as $D \to \infty$ when $\gamma = \Theta(D^{-1/2})$ (Theorem~\ref{thm:p1_infinite_d}), and is the primary quantity of interest when investigating the cut fraction (tending to the Parisi value in the unweighted case \cite{Dembo2017}).

\subsection{Correspondence between QAOA on weighted and unweighted graphs with $p\geq 1$}
\label{sec:arbitrary_distr_p_geq_1}

To derive a parameter scaling rule for arbitrary $p$, we extend the techniques developed in Ref.~\cite{lipics.tqc.2022.7} for MaxCut on large-girth, regular unweighted graphs to large-girth, regular graphs with i.i.d. edge weights. Without a subscript, $|\bm{\gamma}, \bm{\beta}\rangle$ will refer to the $p$-layer QAOA state for a random weighted instance of MaxCut on a given $(D+1)$-regular graph with weights, $w$, drawn from a distribution $\mathbf{w}$.

Note that
\begin{align}
\label{eqn:exp_weighted_maxcut_op}
\langle C(\bm\gamma, \bm\beta)\rangle =  \frac{1}{2}\sum_{\{u,v\}\in E}  w_{uv}(1 - \langle\bm{\gamma}, \bm{\beta}| \zgate_{u}\zgate_{v}|\bm{\gamma}, \bm{\beta} \rangle).
\end{align}
We start by proving the following result about the above quantity that is valid for any $p$.
\begin{lemma}
If a $(D+1)$ regular graph has girth $>2p+1$ and i.i.d. edge weights, $w$, drawn from $\mathbf{w}$, then the QAOA objective for weighted MaxCut on this graph satisfies
\begin{align}
\mathbb{E}_{\mathbf{w}}[\langle C(\bm\gamma, \bm\beta)\rangle] = \frac{N(D+1)\mu}{4} - \frac{N(D+1)}{4}\mathbb{E}_{\mathbf{w}}[w\langle\bm{\gamma}, \bm{\beta}| \zgate_{L}\zgate_{R}|\bm{\gamma}, \bm{\beta} \rangle],
\end{align}
for any edge $\{L, R\}$.
\end{lemma}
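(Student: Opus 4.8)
The plan is to take the expectation of Equation~\eqref{eqn:exp_weighted_maxcut_op} over the i.i.d.\ edge weights and split it into two pieces. The constant piece is immediate: $\frac{1}{2}\sum_{\{u,v\}\in E}\mathbb{E}_{\mathbf{w}}[w_{uv}] = \frac{|E|\mu}{2}$, and substituting $|E| = N(D+1)/2$ from Equation~\eqref{eq:reg_graph_edge_count} yields the first term $\frac{N(D+1)\mu}{4}$. Everything then reduces to showing that the second piece, $\frac{1}{2}\sum_{\{u,v\}\in E}\mathbb{E}_{\mathbf{w}}[w_{uv}\langle\bm\gamma,\bm\beta|\zgate_u\zgate_v|\bm\gamma,\bm\beta\rangle]$, equals $\frac{|E|}{2}\,\mathbb{E}_{\mathbf{w}}[w\langle\bm\gamma,\bm\beta|\zgate_L\zgate_R|\bm\gamma,\bm\beta\rangle]$ for an arbitrary fixed reference edge $\{L,R\}$, i.e.\ that the summand is independent of the choice of edge.

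The key ingredient is the standard QAOA locality (causal-cone) argument: conjugating $\zgate_u\zgate_v$ by the depth-$p$ QAOA circuit $U = e^{-i\beta_p B}e^{-i\gamma_p C}\cdots e^{-i\beta_1 B}e^{-i\gamma_1 C}$ only spreads its support to qubits within graph distance $p$ of the edge $\{u,v\}$, since $B$ is a sum of single-qubit terms and $C$ is a sum of edge terms. Together with the fact that the product initial state $|s\rangle$ has a reduced state on any vertex subset that depends only on that subset, this implies that $\langle\bm\gamma,\bm\beta|\zgate_u\zgate_v|\bm\gamma,\bm\beta\rangle$ is a function only of the subgraph $G_{u,v}^{(p)}$ induced by the vertices at distance $\le p$ from $\{u,v\}$ and of the weights decorating its edges. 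This is exactly the reasoning used for unweighted MaxCut in Ref.~\cite{lipics.tqc.2022.7} and originally in Ref.~\cite{farhi2014quantum}, now tracking the edge weights as additional data.

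Now I would invoke the girth hypothesis: because the girth exceeds $2p+1$, the neighborhood $G_{u,v}^{(p)}$ contains no cycle, hence is a tree; and because $G$ is $(D+1)$-regular, this tree is the same for every edge (isomorphic, with $\{u,v\}$ mapped to $\{L,R\}$). Since the weights are i.i.d.\ from $\mathbf{w}$, the joint distribution of the weights on $G_{u,v}^{(p)}$ — including the central weight $w_{uv}$ — is identical for every edge and coincides with that on $G_{L,R}^{(p)}$. Therefore $\mathbb{E}_{\mathbf{w}}[w_{uv}\langle\bm\gamma,\bm\beta|\zgate_u\zgate_v|\bm\gamma,\bm\beta\rangle] = \mathbb{E}_{\mathbf{w}}[w\langle\bm\gamma,\bm\beta|\zgate_L\zgate_R|\bm\gamma,\bm\beta\rangle]$ for every edge, and summing over the $|E| = N(D+1)/2$ edges and recombining with the constant term gives the claimed identity.

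The hard part is stating the locality claim cleanly, i.e.\ verifying that $U^\dagger \zgate_u\zgate_v U$ acts as the identity outside $G_{u,v}^{(p)}$ and that its expectation in $|\bm\gamma,\bm\beta\rangle$ depends only on the weighted subgraph $G_{u,v}^{(p)}$; once this is granted, the girth argument is purely combinatorial. One subtlety worth flagging is that $w_{uv}$ cannot be pulled out of the expectation, since it also enters $\langle\zgate_u\zgate_v\rangle$ through the edge term of $C$ in $U$ — this is precisely why the statement keeps the product $w\langle\zgate_L\zgate_R\rangle$ under a single expectation rather than factoring it. The rest is bookkeeping with $|E| = N(D+1)/2$.
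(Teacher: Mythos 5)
Your proposal follows essentially the same route as the paper's proof: both reduce the lemma to the claim that $\mathbb{E}_{\mathbf{w}}[w_{uv}\langle\bm\gamma,\bm\beta|\zgate_u\zgate_v|\bm\gamma,\bm\beta\rangle]$ is independent of the chosen edge, via QAOA locality, the girth and regularity hypotheses, and the i.i.d.\ weights, and then finish with linearity of expectation and $|E|=N(D+1)/2$. The paper instantiates the locality step concretely, writing the expectation as the explicit tree sum of Ref.~\cite{lipics.tqc.2022.7} (Equations (A.2)--(A.5)) and organizing it into the recursion $H^{(r)}_D$ — mainly because that machinery is reused for Theorem~\ref{thm:gen_p_parameter_scaling} — whereas your abstract conjugation (causal-cone) argument suffices for this lemma on its own; your remark that $w_{uv}$ cannot be pulled out of the expectation is also correct and matches the paper.

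One step needs tightening. Girth $>2p+1$ means girth $\ge 2p+2$, and at girth exactly $2p+2$ the subgraph \emph{induced} on the vertices at distance $\le p$ from $\{u,v\}$ need not be a tree, nor need it be isomorphic across edges: for $p=1$, a $4$-cycle $u\!-\!v\!-\!b\!-\!a\!-\!u$ is allowed and places an edge between two vertices at distance $p$ from $\{u,v\}$, and such cycles may pass through some edges of the graph but not others. So the assertion that ``the induced neighborhood is the same tree for every edge'' is not literally correct under the stated hypothesis. The fix is to track which weights actually enter the conjugated operator: each conjugation by $e^{-i\gamma_r C}$ only introduces edge terms with at least one endpoint in the current support, so $\langle\bm\gamma,\bm\beta|\zgate_u\zgate_v|\bm\gamma,\bm\beta\rangle$ depends only on edges having an endpoint within distance $p-1$ of $\{u,v\}$ (cross edges between two distance-$p$ vertices never appear). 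Under girth $>2p+1$ that edge set is exactly two depth-$p$ $D$-ary trees joined at $\{u,v\}$ — the diameter-$(2p+1)$ tree the paper works with — and this weighted object has the same distribution for every edge, after which the rest of your argument goes through unchanged.
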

\begin{proof}
The locality of the $p$-layer QAOA combined with the regularity and girth $> 2p+1$ assumptions implies that the quantity $w_{LR}\langle\bm{\gamma}, \bm{\beta}| \zgate_{L}\zgate_{R}|\bm{\gamma}, \bm{\beta} \rangle$ can only depend on vertices that lie in a $D$-ary tree of diameter $2p+1$. This is the QAOA ``light cone'' for the term $\langle\bm{\gamma}, \bm{\beta}| \zgate_{L}\zgate_{R}|\bm{\gamma}, \bm{\beta} \rangle$. The tree subgraph, without weights, for $p=2$ and $D+1=4$ is shown in Figure \ref{fig:tree_subtraph}.
Thus, it follows from \cite[Equations A.2-A.5]{lipics.tqc.2022.7} that
\begin{align}
\label{eqn:zz_exp_expression}
w_{LR}\langle\bm{\gamma}, \bm{\beta}| \zgate_{L}\zgate_{R}|\bm{\gamma}, \bm{\beta} \rangle = \sum_{\{\bm{z}_{u}\}}w_{LR}z_{L}^{[0]}z_{R}^{[0]} \cdot\bigg[\exp\left(-i\sum_{\{u, v\}\in E_{LR}}w_{uv}\bm{\Gamma}\cdot (\bm{z}_{u}\bm{z}_{v})\right)\prod_{v \in V_{LR}}g(\bm{z}_{v})\bigg],
\end{align}
where $(V_{LR}, E_{LR})$ denotes the vertex and edge sets corresponding to the tree subgraph seen from the edge $\{L, R\}$. Note that Ref.~\cite{lipics.tqc.2022.7} has an extra factor of $\frac{1}{\sqrt{D}}$ in the exponential since the cost function in Equation~\eqref{eqn:exp_weighted_maxcut_op} is therein defined with that extra factor. In addition, $\bm{\Gamma}$ is a $(2p+1)$-component vector with entries $\bm{\Gamma}_{r} = \gamma_{r}$, $\bm{\Gamma}_{-r}=-\gamma_{r}$, $\bm{\Gamma}_{0} = 0$ and $1 \leq r \leq p$. Also for node $u$, the vector $\bm{z}_{u} = (z_u^{[1]}, \dots, z_u^{[r]}, z_u^{[0]},z_u^{[-r]}, \dots, z_u^{[-1]}) \in \{-1, 1\}^{2p+1}$. The notation $\{\bm{z}_{u}\}$ is short for the collection $\{\bm{z}_{u} | u \in V_{LR}, \bm{z}_{u} \in \{-1, 1\}^{2p+1}\}$. Lastly $(\bm{z}_{u}\bm{z}_v)$ denotes an element-wise product, and 
\begin{align}
g(\bm{z}_u) & = \left[\langle z_u^{[p]}|e^{i\beta_{p}\xgate}|z_u^{[0]}\rangle\langle z_u^{[0]}|e^{-i\beta_{p}\xgate}|z_u^{[-p]}\rangle\right]\nonumber\\&\cdot\frac{1}{2}\prod_{r=1}^{p-1} \langle z_u^{[-(r+1)]}|e^{-i\beta_{r}\xgate}|z_u^{[-r]}\rangle\langle z_u^{[r]}|e^{i\beta_{r}\xgate}|z_u^{[r+1]}\rangle.
\end{align}

\begin{figure}[t]
\centering
\includegraphics[width=0.47\textwidth]{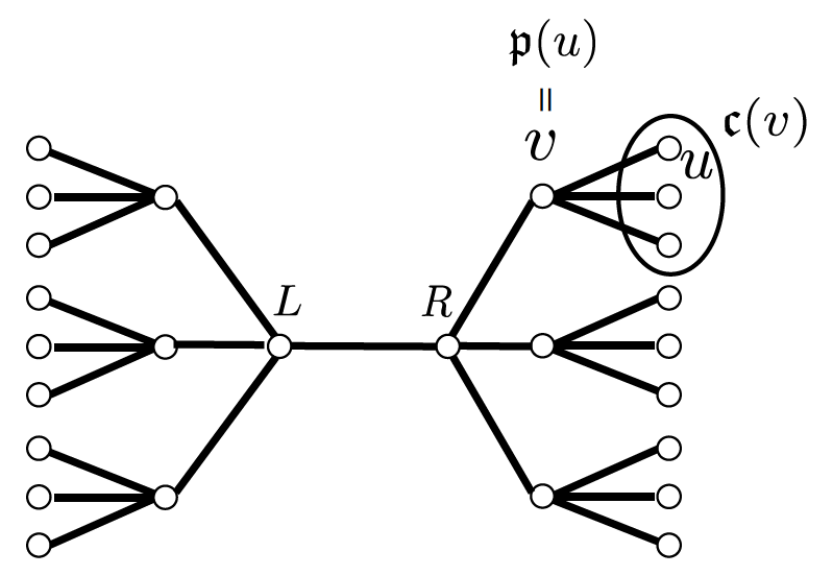}
\caption{\label{fig:tree_subtraph} Unweighted tree subgraph seen by QAOA from the edge $\{L, R\}$ with $p=2$ on a four-regular graph. The operation $\mathfrak{p}$ produces the parent of a node, and the operation $\mathfrak{c}$ produces the set of a node's immediate children.}
\end{figure}

The authors of \cite{lipics.tqc.2022.7} noted that Equation~\eqref{eqn:zz_exp_expression} can be computed recursively by traversing, from leaves to roots, the left and right branches simultaneously. This effectively ``factors'' the right-hand side of Equation~\eqref{eqn:zz_exp_expression}.  For simplicity we will only do this for $p=2$, however, the generalization is straightforward. Define $H_{D}^{(0)} = 1$. We start by summing over the configuration of an arbitrary leaf in either branch of Figure \ref{fig:tree_subtraph}:
\begin{align}
    \sum_{\bm{z}_u}g(\bm{z}_u)\exp\left(-iw_{u\mathfrak{p}(u)}\bm{\Gamma}\cdot(\bm{z}_{u}\bm{z}_{\mathfrak{p}(u)})\right),
\end{align}
where $\mathfrak{p}(u)$ is the parent of $u$ in the tree subgraph. Since the left and right branches are $D$-ary trees, the expression for any node $v$ in the second level of either branch is 

\begin{align}
    \sum_{\bm{z}_v}g(\bm{z}_v)H_{D}^{(1)}(\bm{z}_v)\exp\left(-iw_{v\mathfrak{p}(v)}\bm{\Gamma}\cdot(\bm{z}_{v}\bm{z}_{\mathfrak{p}(v)})\right),
\end{align}
where 
\begin{align}
    H^{(1)}_{D}(\bm{z}_{v}) &:=\prod_{u \in \mathfrak{c}(v)} \sum_{\bm{z}_u}H_{D}^{(0)}(\bm{z}_u)g(\bm{z}_u)\exp\left(-iw_{uv}\bm{\Gamma}\cdot(\bm{z}_{u}\bm{z}_{v})\right)\nonumber\\&=\prod_{u \in \mathfrak{c}(v)} \sum_{\bm{z}_u}H_{D}^{(0)}g(\bm{z}_u)\cos\left(w_{uv}\bm{\Gamma}\cdot(\bm{z}_{u}\bm{z}_{v})\right).
\end{align}
The second equality follows from the even parity of $g \cdot H^{(r)}_D$, implied by \cite[Claims A.14 and A.15]{lipics.tqc.2022.7} and $\mathfrak{c}(v)$ denotes the set of $D$ immediate children of $v$.

This can again be done for the next level, i.e. the roots $L$ or $R$, producing the quantity $H_{D}^{(2)}$. Lastly, we combine the results from the two branches by summing over the configurations of the left and right roots:
\begin{align}
\label{eqn:p_2_iteration}
w_{LR}\langle\bm{\gamma}, \bm{\beta}| \zgate_{u}\zgate_{v}|\bm{\gamma}, \bm{\beta} \rangle & = \sum_{\bm{z}_L,\bm{z}_{R}}w_{LR}z_{L}^{[0]}z_{R}^{[0]}\nonumber\\&\cdot\bigg[g(\bm{z}_{L})g(\bm{z}_{R})H_{D}^{(2)}(\bm{z}_L)H_{D}^{(2)}(\bm{z}_R)\exp\left(-iw_{LR}\bm{\Gamma}\cdot(\bm{z}_{L}\bm{z}_{R})\right)\bigg]\nonumber\\=&-i\sum_{\bm{z}_L,\bm{z}_{R}}z_{L}^{[0]}z_{R}^{[0]}g(\bm{z}_{L})g(\bm{z}_{R})\nonumber\\&\cdot\bigg[H_{D}^{(2)}(\bm{z}_L)H_{D}^{(2)}(\bm{z}_R)w_{LR}\sin\left(w_{LR}\bm{\Gamma}\cdot(\bm{z}_{L}\bm{z}_{R})\right)\bigg],
\end{align}
where the last equality follows again from the even parity of $g \cdot H^{(r)}_D$. The general iteration for the random quantity $w_{LR}\langle\bm{\gamma}, \bm{\beta}| \zgate_{L}\zgate_{R}|\bm{\gamma}, \bm{\beta} \rangle$ follows by induction.

Note that in the previous recursion, each edge was only counted once. Since all of the edges are i.i.d., the expectation operation commutes with all products that appear in the right-hand side of Equation~\eqref{eqn:p_2_iteration}. More specifically, we have the following for general $p$:
\begin{align}
\label{eqn:gen_exp_zz}
\mathbb{E}_{\mathbf{w}}[w_{LR}\langle\bm{\gamma}, \bm{\beta}| \zgate_{L}\zgate_{R}|\bm{\gamma}, \bm{\beta} \rangle] 
& =-i\sum_{\bm{z}_L,\bm{z}_{R}}z_{L}^{[0]}z_{R}^{[0]}g(\bm{z}_{L})g(\bm{z}_{R})\nonumber\\&\cdot\bigg[\bar{H}_{D}^{(p)}(\bm{z}_L)\bar{H}_{D}^{(p)}(\bm{z}_R)\mathbb{E}_{\mathbf{w}}[w\sin\left(w\bm{\Gamma}\cdot(\bm{z}_{L}\bm{z}_{R})\right)]\bigg],
\end{align}
where subscripts have been dropped from the weights due to the i.i.d. assumption, and
\begin{align}
\label{eqn:gen_exp_h}
\bar{H}^{(r)}_{D}(\bm{z}_{v}):=\mathbb{E}_{\mathbf{w}}[H^{(r)}_{D}(\bm{z}_{v})] =\prod_{u \in \mathfrak{c}(v)} \sum_{\bm{z}_u}\bar{H}_{D}^{(r-1)}(\bm{z}_u)g(\bm{z}_u)\mathbb{E}_{\mathbf{w}}[\cos\left(w\bm{\Gamma}\cdot(\bm{z}_{u}\bm{z}_{v})]\right)
\end{align}
for $1\leq r \leq p$.

Since all randomness has been removed by the expectation operation, it is evident that this quantity only depends on the graph structure and not the sampled weights, since they are i.i.d. The non-random quantities in the iteration for $w_{LR}\langle\bm{\gamma}, \bm{\beta}| \zgate_{L}\zgate_{R}|\bm{\gamma}, \bm{\beta} \rangle$ only depend on the local graph structure that QAOA sees from a given edge. As argued earlier this graph structure is always two $D$-ary trees joined at their roots. Thus, the quantity $\mathbb{E}_{\mathbf{w}}[w\langle\bm{\gamma}, \bm{\beta}| \zgate_{L}\zgate_{R}|\bm{\gamma}, \bm{\beta} \rangle]$
is independent of the chosen edge $\{L, R\}$, which is analogous to the unweighted case. 

The result of the lemma follows from Equation~\eqref{eqn:exp_weighted_maxcut_op} 
and the linearity of expectation.
\end{proof}

By the previous lemma, we can, analogously to the $p=1$ case, define $\vartheta_p(D,\bm{\gamma}, \bm{\beta})$ as follows:
\begin{align}
 \frac{\mathbb{E}_{\mathbf{w}}[\langle C(\bm{\gamma}, \bm{\beta})\rangle]}{\mathbb{E}_{\mathbf{w}}\left[ \sum_{\{u,v\}\in E}w_{uv}\right]} & =  \frac{\mathbb{E}_{\mathbf{w}}[\langle C(\bm{\gamma}, \bm{\beta})\rangle]}{\frac{N(D+1)\mu}{2}} \nonumber \\
&=\frac{1}{2} - \frac{1}{2\mu}\mathbb{E}_{\mathbf{w}}[w\langle\bm{\gamma}, \bm{\beta}| \zgate\zgate|\bm{\gamma}, \bm{\beta} \rangle]\nonumber\\
&= \frac{1}{2} + \frac{\vartheta_{p}(D, \bm{\gamma}, \bm{\beta})}{\sqrt{D}},
\end{align}
and
\begin{align}
\vartheta_{p}(\bm{\gamma}, \bm{\beta}) & := \lim_{D\rightarrow\infty}\vartheta_{p}(D, \bm{\gamma}/\sqrt{D}, \bm{\beta})\nonumber\\&=\lim_{D\rightarrow\infty}-\frac{\sqrt{D}}{2\mu}\mathbb{E}_{\mathbf{w}}\bigg[w\bigg\langle\frac{\bm{\gamma}}{\sqrt{D}}, \bm{\beta}\bigg| \zgate\zgate\bigg|\frac{\bm{\gamma}}{\sqrt{D}},\bm{\beta}\bigg\rangle\bigg],
\end{align}
where the subscripts on $w$ and $\zgate$ have been dropped since they can be arbitrary by the previous lemma.

The quantity considered in \cite[Equation A.19]{lipics.tqc.2022.7} for the unweighted case is the following

\begin{align}
&\nu_{p}(\bm{\gamma}, \bm{\beta})=\lim_{D\rightarrow\infty}-\frac{\sqrt{D}}{2}_\text{u}\bigg\langle\frac{\bm{\gamma}}{\sqrt{D}}, \bm{\beta}\bigg| \zgate\zgate\bigg|\frac{\bm{\gamma}}{\sqrt{D}},\bm{\beta}\bigg\rangle_{\text{u}},
\end{align}
where the subscript ``u'' indicates that the parameterized state is prepared by a $p$-layer QAOA for the corresponding unweighted problem on the same graph. Our main result below shows that $\bm{\gamma}$ can be scaled to make these two quantities equal up to a global scaling factor.

\begin{theorem}[$p\geq 1$, infinite size]
\label{thm:gen_p_parameter_scaling}
If the girth $>2p+1$, and the edge-weight distribution, $\mathbf{w}$, has finite second moment, then for  all parameters $\bm{\gamma}, \bm{\beta}$ the following holds
\begin{align}
\label{eqn:zz_unweighted_relation_to__weighted}
\nu_{p}(\bm{\gamma}, \bm{\beta}) = \frac{\mu}{\sqrt{\mathbb{E}_{\mathbf{w}}[w^2]}}\vartheta_{p}\left(\frac{\bm{\gamma}}{\sqrt{\mathbb{E}_{\mathbf{w}}[w^2]}}, \bm{\beta}\right).
\end{align}
\end{theorem}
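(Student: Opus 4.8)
The plan is to compare, term by term, the recursion for the weighted quantity $\mathbb{E}_{\mathbf{w}}[w\langle\bm{\gamma},\bm{\beta}|\zgate\zgate|\bm{\gamma},\bm{\beta}\rangle]$ established above in Equations~\eqref{eqn:gen_exp_zz}--\eqref{eqn:gen_exp_h} with the unweighted recursion of Ref.~\cite{lipics.tqc.2022.7}, which is obtained from the very same formulas by setting $w\equiv 1$ and replacing each $\bm{\Gamma}$ by $\bm{\Gamma}/\sqrt{D}$. Unpacking the definitions, $\vartheta_p(\bm{\gamma}/\sqrt{\mathbb{E}_{\mathbf{w}}[w^2]},\bm{\beta})$ is the $D\to\infty$ limit of $-\tfrac{\sqrt{D}}{2\mu}$ times the weighted recursion evaluated at phase parameters $\bm{\gamma}/\sqrt{D\,\mathbb{E}_{\mathbf{w}}[w^2]}$, while $\nu_p(\bm{\gamma},\bm{\beta})$ is the $D\to\infty$ limit of $-\tfrac{\sqrt{D}}{2}$ times the unweighted recursion at $\bm{\gamma}/\sqrt{D}$. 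These two recursions share an identical sum/product skeleton over the $D$-ary tree of diameter $2p+1$ and identical vertex factors $g(\cdot)$; they differ only in the two ``edge atoms'': $\mathbb{E}_{\mathbf{w}}[\cos(\tfrac{w}{\sqrt{D\,\mathbb{E}_{\mathbf{w}}[w^2]}}\bm{\Gamma}\cdot(\bm z_u\bm z_v))]$ inside the $\bar H$'s, versus the unweighted $\cos(\tfrac{1}{\sqrt{D}}\bm{\Gamma}\cdot(\bm z_u\bm z_v))$; and $\mathbb{E}_{\mathbf{w}}[w\sin(\tfrac{w}{\sqrt{D\,\mathbb{E}_{\mathbf{w}}[w^2]}}\bm{\Gamma}\cdot(\bm z_L\bm z_R))]$ in the outermost sum, versus the unweighted $\sin(\tfrac{1}{\sqrt{D}}\bm{\Gamma}\cdot(\bm z_L\bm z_R))$.

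First I would analyze these edge atoms. Writing $x$ for a generic $\bm{\Gamma}\cdot(\bm z_u\bm z_v)$, which is bounded by a constant since $\bm{\Gamma}$ is fixed and the $\bm z$'s have $\pm 1$ entries, the bounds $|2(1-\cos y)/y^2|\le 1$ and $|\sin y/y|\le 1$ together with the finiteness of $\mathbb{E}_{\mathbf{w}}[w^2]$ let me invoke dominated convergence, exactly as in the proof of Theorem~\ref{thm:p1_infinite_d}, to obtain, with error terms uniform over the finitely many spin configurations,
\begin{align*}
\mathbb{E}_{\mathbf{w}}\!\left[\cos\!\Big(\tfrac{w x}{\sqrt{D\,\mathbb{E}_{\mathbf{w}}[w^2]}}\Big)\right] &= 1 - \tfrac{x^2}{2D} + o(D^{-1}),\\
\mathbb{E}_{\mathbf{w}}\!\left[w\sin\!\Big(\tfrac{w x}{\sqrt{D\,\mathbb{E}_{\mathbf{w}}[w^2]}}\Big)\right] &= \sqrt{\mathbb{E}_{\mathbf{w}}[w^2]}\,\Big(\tfrac{x}{\sqrt{D}}+o(D^{-1/2})\Big).
\end{align*}
The key cancellation is that the factor $\mathbb{E}_{\mathbf{w}}[w^2]$ produced by the second moment in the $\cos$ expansion is exactly killed by the $1/\mathbb{E}_{\mathbf{w}}[w^2]$ hidden in the rescaled argument, so the rescaled weighted $\cos$-atom has the same large-$D$ expansion as the unweighted $\cos$-atom $\cos(x/\sqrt{D})$; the $w\sin$-atom instead agrees with the unweighted $\sin$-atom up to the constant prefactor $\sqrt{\mathbb{E}_{\mathbf{w}}[w^2]}$.

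Next I would push these expansions through the recursion by induction on the tree level $r$. Because the tree has bounded depth and each node carries only $2^{2p+1}$ spin configurations, the interchange of the Taylor series in $1/\sqrt{D}$ with $\mathbb{E}_{\mathbf{w}}$ is justified by Fubini and finiteness of the second moment (as in Theorem~\ref{thm:p1_infinite_d}), and the elementary limit $(1-\tfrac{x^2}{2D}+o(D^{-1}))^{D}\to e^{-x^2/2}$ used in Equation~\eqref{eqn:cosine_relation_exp} applies verbatim to the $D$-fold products defining $\bar H^{(r)}_D$. Consequently $\bar H^{(p)}_D$ evaluated at $\bm{\gamma}/\sqrt{D\,\mathbb{E}_{\mathbf{w}}[w^2]}$ and the unweighted $H^{(p)}_D$ evaluated at $\bm{\gamma}/\sqrt{D}$ converge to the same limiting function. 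Substituting into Equation~\eqref{eqn:gen_exp_zz} and multiplying by $-\sqrt{D}/(2\mu)$, every factor except the outermost $\sin$-atom is asymptotically equal to the corresponding factor of the unweighted object $-\tfrac{\sqrt{D}}{2}\langle\zgate\zgate\rangle_{\mathrm{u}}$ at $\bm{\gamma}/\sqrt{D}$, while the outermost atom supplies an extra $\sqrt{\mathbb{E}_{\mathbf{w}}[w^2]}$ and the overall prefactor carries $1/\mu$ in place of $1$. Letting $D\to\infty$ (the unweighted limit exists by Ref.~\cite{lipics.tqc.2022.7}, hence so does the weighted one) gives $\vartheta_p\big(\bm{\gamma}/\sqrt{\mathbb{E}_{\mathbf{w}}[w^2]},\bm{\beta}\big)=\tfrac{\sqrt{\mathbb{E}_{\mathbf{w}}[w^2]}}{\mu}\,\nu_p(\bm{\gamma},\bm{\beta})$, which rearranges to Equation~\eqref{eqn:zz_unweighted_relation_to__weighted}.

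I expect the main obstacle to be the bookkeeping in this last step: one must confirm that, after the $\mathbb{E}_{\mathbf{w}}[w^2]$ cancellation, the rescaled weighted recursion and the unweighted recursion are genuinely term-by-term identical, so that the \emph{only} surviving discrepancy is the single $\sqrt{\mathbb{E}_{\mathbf{w}}[w^2]}$ on the outermost $\sin$-factor, and that the three operations---the $D\to\infty$ limit, the expectation over $w$, and the Taylor expansion in $1/\sqrt{D}$---can be exchanged throughout the nested products. Both are handled by the same dominated-convergence and Fubini arguments already used for $p=1$, but they have to be tracked carefully through every level of the recursion; note that ``finite second moment'' is precisely what these arguments consume, since the $\cos$-atom needs $\mathbb{E}_{\mathbf{w}}[w^2]<\infty$ and the bound $|w\sin(wx)/x|\le w^2$ makes the $w\sin$-atom dominated as well.
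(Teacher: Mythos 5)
Your proposal is correct and follows essentially the same route as the paper: the same tree recursion, the same Taylor expansion of the $\cos$ and $w\sin$ edge terms justified by Fubini/dominated convergence, the same $(1-x^2/2D)^D\to e^{-x^2/2}$ limit, and the same bookkeeping showing that only a single $\sqrt{\mathbb{E}_{\mathbf{w}}[w^2]}$ survives on the outer $\sin$ factor. The only difference is cosmetic: you substitute the rescaled $\bm{\gamma}/\sqrt{\mathbb{E}_{\mathbf{w}}[w^2]}$ up front and cancel $\mathbb{E}_{\mathbf{w}}[w^2]$ atom by atom, whereas the paper first derives the limiting weighted expression at generic parameters (using the normalization $\sum_{\bm{z}_u}g(\bm{z}_u)\bar{H}_D^{(r-1)}(\bm{z}_u)=1$, which your product-limit step implicitly relies on as well) and then compares it with the unweighted limit of Ref.~\cite{lipics.tqc.2022.7}.
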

\begin{proof}
We implicitly assume that $\bm{\gamma} \mapsto \frac{\bm{\gamma}}{\sqrt{D}}$ and thus $\bm{\Gamma} \mapsto \frac{\bm{\Gamma}}{\sqrt{D}}$ in Equation~\eqref{eqn:gen_exp_zz}.
By the product rule for limits, we can evaluate the limits of the terms $\bar{H}_{D}^{(p)}(\bm{z}_{L})$, $\bar{H}_{D}^{(p)}(\bm{z}_{R})$, and the one involving the $\sin$ separately, since we will show they individually exist. 
Note that for any sum inside of the product of Equation~\eqref{eqn:gen_exp_h}:
\begin{align}
\label{eqn:limiting_cosine}
\sum_{\bm{z}_u}g(\bm{z}_u)\bar{H}_{D}^{(p-1)}(\bm{z}_{u})&\mathbb{E}_{\mathbf{w}}\left[\cos\left(w\frac{\bm{\Gamma}}{\sqrt{D}}\cdot(\bm{z}_{u}\bm{z}_{v})\right)\right]\nonumber\\&=\bigg(1 - \frac{\mathbb{E}_{\mathbf{w}}[w^2]}{2D}\sum_{\bm{z}_u}g(\bm{z}_u)\bar{H}_{D}^{(p-1)}(\bm{z}_{u})(\bm{\Gamma}\cdot(\bm{z}_{u}\bm{z}_{v}))^{2}  + O(D^{-2})\bigg),
\end{align}
where the implicit exchange of the expectation operator and infinite series expansion of trig functions is justified by Fubini's theorem and the assumption of the weight distribution having finite second moment, like in Section \ref{sec:qaoa_analytic_p_1}. In addition, we have used the following generalization of \cite[Equation (A.23)]{lipics.tqc.2022.7}, where for any $r$: 
\begin{align}
\sum_{\bm{z}_u}g(\bm{z}_u)H_{D}^{(r-1)}(\bm{z}_{u}) = 1.
\end{align}
After taking expectations, it follows that $
\sum_{\bm{z}_u}g(\bm{z}_u)\bar{H}_{D}^{(r-1)}(\bm{z}_{u}) = 1$. By the i.i.d. assumption, Equation~\eqref{eqn:limiting_cosine} is the same for every $u \in \mathfrak{c}(v)$, and thus 
\begin{align}
\bar{H}_{D}^{(p)}(\bm{z}_{v}) & =\bigg(1 - \frac{\mathbb{E}_{\mathbf{w}}[w^2]}{2D}\sum_{\bm{z}_u}g(\bm{z}_u)\bar{H}_{D}^{(p-1)}(\bm{z}_{u})(\bm{\Gamma}\cdot(\bm{z}_{u}\bm{z}_{v}))^{2}\bigg)^{D}\nonumber + O(D^{-1})\nonumber\\
&=\exp\left(-\frac{\mathbb{E}[w^2]}{2}\sum_{\bm{z}_{u}}g(\bm{z}_{u})\bar{H}_{D}^{(p-1)}(\bm{z}_{u})(\bm{\Gamma}\cdot(\bm{z}_u\bm{z}_{v}))^2\right)\nonumber +O(D^{-1}),
\end{align}
where we use Equation~\eqref{eqn:cosine_relation_exp}.
Along with continuity, the previous result implies
\begin{align}
\bar{H}^{(p)}(\bm{z}_{v}) &:=\lim_{D\rightarrow\infty}\bar{H}_{D}^{(p)}(\bm{z}_{v}) \nonumber\\&=\exp\left(-\frac{\mathbb{E}_{\mathbf{w}}[w^2]}{2}\sum_{\bm{z}_{u}}g(\bm{z}_{u})\bar{H}^{(p-1)}(\bm{z}_{u})(\bm{\Gamma}\cdot(\bm{z}_u\bm{z}_{v}))^2\right).
\end{align}
The limit can then be propagated down to the lowest level of the recursion.
Similarly for the term involving: %
\begin{align}
\lim_{D\rightarrow\infty}\sqrt{D}\mathbb{E}_{\mathbf{w}}\left[w\sin\left(w\frac{\bm{\Gamma}}{\sqrt{D}}\cdot(\bm{z}_L\bm{z}_R)\right)\right] = \mathbb{E}_{\mathbf{w}}[w^2]\bm{\Gamma}\cdot(\bm{z}_L\bm{z}_R)
\end{align}
Putting this altogether, for arbitrary $\{L, R\} \in E$, we have

\begin{align}
\vartheta_{p}(\bm{\gamma}, \bm{\beta}) = \frac{i\sqrt{\mathbb{E}_{\mathbf{w}}[w^2]}}{2\mu}\sum_{\bm{z}_L,\bm{z}_{R}}z_{L}^{[0]}z_{R}^{[0]}g(\bm{z}_{L})g(\bm{z}_{R}) \bigg[\bar{H}^{(p)}(\bm{z}_L)\bar{H}^{(p)}(\bm{z}_R)\sqrt{\mathbb{E}_{\mathbf{w}}[w^2]}\bm{\Gamma}\cdot(\bm{z}_L\bm{z}_R)\bigg]
\end{align}
and comparing with the unweighted case \cite[Equation A.26]{lipics.tqc.2022.7} reveals that the equality in Equation~\eqref{eqn:zz_unweighted_relation_to__weighted} holds.
\end{proof}

\begin{remark}
Note that in the i.i.d. case the square of the denominator of Equation~\eqref{eqn:general_maxcut_rule} is an unbiased estimator of $\mathbb{E}_{\mathbf{w}}[w^2]$. The i.i.d. case is the inspiration for the more general scaling rule.
\end{remark}

\begin{remark}
\label{rem:hypergraph_scale}
By \cite[Section 6]{lipics.tqc.2022.7}, which extends the iteration presented earlier for unweighted regular graphs to unweighted regular $k$-uniform hypergraphs, Theorem \ref{thm:gen_p_parameter_scaling} can be trivially extended to weighted regular $k$-uniform hypergraphs.
\end{remark}

This result implies a relationship in the infinite-size limit between QAOA's objective value for weighted MaxCut and the SK model. Let 
\begin{align}
\label{eqn:sk_objective}
    C^{\text{SK}} = \sum_{1\leq u < v\leq N}\frac{J_{uv}}{\sqrt{N}}\zgate_u\zgate_v,
\end{align}
where $J_{uv} \sim \mathcal{N}(0, 1)$,
and 
\begin{align}
\label{eqn:sk_limiting}
    V_{p}(\bm{\gamma}, \bm{\beta}) := \lim_{n\rightarrow\infty}\mathbb{E}_{J}[{}_{J}\langle\bm{\gamma},\bm{\beta}| C^{\text{SK}}/N|\bm{\gamma},\bm{\beta}\rangle_{J}],
\end{align}
where the subscript $J$ of the state signifies that the state was prepared by a $p$-layer QAOA with the SK objective as the phase operator.

\begin{theorem}[Restated from \cite{lipics.tqc.2022.7}]
For all $p$ and all parameters $(\bm{\gamma}, \bm{\beta})$ the following holds
\begin{align}
    V_{p}(\bm{\gamma}, \bm{\beta}) = \nu_{p}(\bm{\gamma}, \bm{\beta}).
\end{align}
\end{theorem}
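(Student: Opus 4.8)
The plan is to reduce the identity to a comparison of two explicit closed-form iterations: the large-girth regular-graph iteration that computes $\nu_p$ and the Sherrington--Kirkpatrick iteration that computes $V_p$. First I would recall that, specializing the tree-recursion argument underlying the proof of Theorem~\ref{thm:gen_p_parameter_scaling} to the unweighted case ($w \equiv 1$, so $\mu = 1$ and $\mathbb{E}_{\mathbf{w}}[w^2] = 1$), the quantity $\nu_p(\bm{\gamma}, \bm{\beta})$ has the representation
\begin{align}
\nu_p(\bm{\gamma}, \bm{\beta}) = \frac{i}{2}\sum_{\bm{z}_L,\bm{z}_R} z_L^{[0]} z_R^{[0]}\, g(\bm{z}_L) g(\bm{z}_R)\, \bar{H}^{(p)}(\bm{z}_L)\, \bar{H}^{(p)}(\bm{z}_R)\, \bm{\Gamma}\cdot(\bm{z}_L\bm{z}_R),
\end{align}
where the message functions obey the Gaussian fixed-point recursion
\begin{align}
\bar{H}^{(r)}(\bm{z}_v) = \exp\!\left(-\tfrac{1}{2}\sum_{\bm{z}_u} g(\bm{z}_u)\, \bar{H}^{(r-1)}(\bm{z}_u)\,(\bm{\Gamma}\cdot(\bm{z}_u\bm{z}_v))^2\right), \qquad \bar{H}^{(0)} \equiv 1,
\end{align}
which arises as the $D\to\infty$ limit of the degree-$D$ product of identical child contributions via $(1-\tfrac{1}{2D}Q)^D \to e^{-Q/2}$, exactly as in Equation~\eqref{eqn:cosine_relation_exp}.

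Second, I would bring in the known thermodynamic-limit evaluation of the SK objective~\cite{Farhi2020SK}. There one expands each $e^{-i\gamma_r C^{\text{SK}}}$ over all $\binom{N}{2}$ pairs, attaches the $(2p+1)$-spin configuration $\bm{z}_u$ to each site $u$ with the same single-site factor $g(\bm{z}_u)$, and takes $N\to\infty$. The sum $\tfrac{1}{\sqrt{N}}\sum_{v\ne u} J_{uv}\,\bm{\Gamma}\cdot(\bm{z}_u\bm{z}_v)$ over the $N-1$ i.i.d. Gaussian couplings is, by the central limit behaviour, asymptotically Gaussian, producing precisely the weights $\exp(-\tfrac12\langle\,\cdot\,\rangle^2)$; the cavity decoupling of distinct neighbours then collapses the all-to-all sum into the same nested sum over $\bm{z}\in\{-1,1\}^{2p+1}$ with the same exponential message functions $\bar{H}^{(r)}$. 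Matching the two derivations level by level --- the $D$-fold product on the regular tree against the $O(N)$-fold all-to-all sum in SK, both Gaussianizing to the identical exponential --- yields $V_p(\bm{\gamma}, \bm{\beta}) = \nu_p(\bm{\gamma}, \bm{\beta})$ term by term.

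The main obstacle is establishing that the all-to-all coupling structure of the SK model reduces, in the limit, to the \emph{same} recursion as the locally tree-like regular graph: one must show that the contributions coming from distinct neighbours of a site decorrelate, so that the joint Gaussian sum over the $N-1$ couplings factorizes into a product of identical single-neighbour contributions and hence into the exponential $\bar{H}^{(r)}$ above. Concretely this requires controlling the joint moments of the overlaps $\bm{\Gamma}\cdot(\bm{z}_u\bm{z}_v)$ and verifying that the subleading cross terms vanish as $N,D\to\infty$; this is precisely the content of the original argument in Ref.~\cite{lipics.tqc.2022.7}. Once this structural equivalence of the two iteration schemes is in place, the identity is immediate, since both sides are then literally the same explicit function of $(\bm{\gamma}, \bm{\beta})$.
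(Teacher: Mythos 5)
There is no proof of this statement in the paper to compare against: the theorem is imported verbatim from Ref.~\cite{lipics.tqc.2022.7} (hence ``Restated from''), and the paper's role for it is purely as an external ingredient feeding Corollary~\ref{cor:weighted_to_sk}. Your first step --- specializing the tree recursion underlying Theorem~\ref{thm:gen_p_parameter_scaling} to $w \equiv 1$ (so $\mu = \mathbb{E}_{\mathbf{w}}[w^2] = 1$) to get the explicit iteration for $\nu_p$ with the exponential message functions $\bar{H}^{(r)}$ --- is correct and consistent with \cite[Equation A.26]{lipics.tqc.2022.7}.

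However, as a proof your proposal has a genuine gap at exactly the point where all the work lies. The claim that the all-to-all SK evaluation of $V_p$ ``Gaussianizes'' and then ``collapses'' via cavity decoupling into the \emph{same} nested recursion over $\bm{z} \in \{-1,1\}^{2p+1}$ is asserted, not established: you neither derive the SK-side iteration (the construction in \cite{Farhi2020SK} produces a structurally different object, involving expectations over the Gaussian couplings and a covariance/matrix structure over the $2p+1$ time indices, not manifestly the tree fixed point $\bar{H}^{(r)} = \exp(-\tfrac12\sum_{\bm{z}_u} g\,\bar{H}^{(r-1)}(\bm{\Gamma}\cdot(\bm{z}_u\bm{z}_v))^2)$), nor carry out the moment/cross-term control needed to show the $N-1$ neighbour contributions factorize in the limit. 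You acknowledge this yourself by saying the decoupling step ``is precisely the content of the original argument in Ref.~\cite{lipics.tqc.2022.7}'' --- but that step \emph{is} the theorem being proved, so deferring it to the reference makes the argument circular as a standalone proof (though perfectly acceptable as a citation, which is all the paper itself does). To close the gap you would have to either reproduce the matching of the large-$D$ tree iteration to the Farhi--Gamarnik--Gutmann SK iteration as done in \cite{lipics.tqc.2022.7}, or give an independent derivation of the SK limit showing it lands on the identical fixed-point equations; neither is sketched in enough detail here to verify.
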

Trivially, this in combination with Theorem \ref{thm:gen_p_parameter_scaling} leads to the following corollary.
\begin{corollary}
\label{cor:weighted_to_sk}
If the edge-weight distribution has finite second moment, then for all $p$ and all parameters $(\bm{\gamma}, \bm{\beta})$ the following holds
\begin{align}V_{p}(\bm{\gamma}, \bm{\beta}) = \frac{\mu}{\sqrt{\mathbb{E}_{\mathbf{w}}[w^2]}}\vartheta_{p}\left(\frac{\bm{\gamma}}{\sqrt{\mathbb{E}_{\mathbf{w}}[w^2]}}, \bm{\beta}\right).
\end{align}
\end{corollary}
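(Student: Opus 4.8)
The plan is to prove Corollary~\ref{cor:weighted_to_sk} by a short transitivity argument that chains together two equalities already available. First, I would invoke the restated theorem from \cite{lipics.tqc.2022.7}, which asserts that for every $p$ and every parameter pair $(\bm{\gamma}, \bm{\beta})$ one has $V_{p}(\bm{\gamma}, \bm{\beta}) = \nu_{p}(\bm{\gamma}, \bm{\beta})$; this identifies the infinite-size QAOA energy density for the SK model with the limiting quantity $\nu_{p}$ for unweighted MaxCut on large-girth regular graphs. Second, I would apply Theorem~\ref{thm:gen_p_parameter_scaling}, whose sole hypothesis — finite second moment of the edge-weight distribution $\mathbf{w}$ — is precisely the assumption of the corollary, and which gives $\nu_{p}(\bm{\gamma}, \bm{\beta}) = \frac{\mu}{\sqrt{\mathbb{E}_{\mathbf{w}}[w^2]}}\,\vartheta_{p}\!\left(\frac{\bm{\gamma}}{\sqrt{\mathbb{E}_{\mathbf{w}}[w^2]}}, \bm{\beta}\right)$. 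Substituting the latter into the former immediately yields the claimed identity $V_{p}(\bm{\gamma}, \bm{\beta}) = \frac{\mu}{\sqrt{\mathbb{E}_{\mathbf{w}}[w^2]}}\,\vartheta_{p}\!\left(\frac{\bm{\gamma}}{\sqrt{\mathbb{E}_{\mathbf{w}}[w^2]}}, \bm{\beta}\right)$.

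The only points requiring care are bookkeeping of quantifiers and hypotheses. Both input statements are quantified over all $p$ and all $(\bm{\gamma}, \bm{\beta})$, so the composite holds with the same universal quantification; and the finite-second-moment condition needed for Theorem~\ref{thm:gen_p_parameter_scaling} carries over verbatim. I would also note explicitly that the rescaling of the phase parameter is already internal to Theorem~\ref{thm:gen_p_parameter_scaling} — it is $\bm{\gamma}/\sqrt{\mathbb{E}_{\mathbf{w}}[w^2]}$, not $\bm{\gamma}$, that appears as the argument of $\vartheta_{p}$ — so nothing further must be done on that front. A remark worth including is that no girth assumption is imposed in the corollary's statement: $\nu_{p}$ and $\vartheta_{p}$ are defined as the $D\to\infty$ limits of quantities living on graphs with girth $>2p+1$, while $V_{p}$ is the $N\to\infty$ limit on the complete graph underlying the SK model, so the girth constraint is absorbed into the definitions of these limiting objects and does not survive as a running hypothesis.

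I do not expect a genuine obstacle here; the mathematical content resides entirely in Theorem~\ref{thm:gen_p_parameter_scaling} (and in the cited SK/MaxCut correspondence), and the corollary is a one-line consequence. If anything, the ``hard part'' is purely expository: making sure the reader sees that the composition is licit because the two equalities hold pointwise in $(\bm{\gamma}, \bm{\beta})$ over a common domain, so that $V_{p} = \nu_{p} = \frac{\mu}{\sqrt{\mathbb{E}_{\mathbf{w}}[w^2]}}\vartheta_{p}(\,\cdot\,/\sqrt{\mathbb{E}_{\mathbf{w}}[w^2]},\,\cdot\,)$ with no gap.
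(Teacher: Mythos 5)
Your proposal is correct and follows exactly the route the paper takes: it obtains the corollary by chaining the restated SK/unweighted-MaxCut identity $V_{p}(\bm{\gamma},\bm{\beta})=\nu_{p}(\bm{\gamma},\bm{\beta})$ with Theorem~\ref{thm:gen_p_parameter_scaling}, whose finite-second-moment hypothesis matches the corollary's. The paper itself describes this as a trivial combination, so nothing further is needed.
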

Thus the performance of QAOA on SK, MaxCut  on large-girth, regular graphs and weighted MaxCut on large-girth, regular graphs are equivalent in the {infinite-size} limit. 

\begin{remark}
By \cite[Theorem 3]{Basso_2022} and Remark \ref{rem:hypergraph_scale}, one can trivially extend Corollary \ref{cor:weighted_to_sk} to  connect QAOA's performance on weighted MaxCut on regular $k$-uniform hypergraphs to its performance on pure $k$-spin models, generalizing SK.
\end{remark}

\section{Observations about biased SK model}
\label{sec:SK_mu_neq_0}
As presented in Corollary \ref{cor:weighted_to_sk}, there is a deep connection between arbitrarily-weighted MaxCut and the SK model. The SK model is given in Equation~\eqref{eqn:sk_objective} and has couplings $J_{uv} \sim \mathcal{N}(0, 1)$. A natural generalization to consider is a model which has couplings $J_{uv} \sim \mathcal{N}(\mu, \sigma^2)$ with arbitrary $\mu$ and $\sigma$. More generally, we can allow for the bias to be a function of the number of spins, i.e. $\mu(N)$. When $\mu(N) \neq 0$, we call this the biased SK model, and when $\mu(N) = 0$, we call it the standard SK model. Unfortunately, this natural generalization does not lead to interesting behavior. Specifically, we show that unless $\mu(N)\rightarrow 0$, %
the biased SK problem is trivial in  the thermodynamic limit.

The performance of QAOA for arbitrary $p$ on standard SK, specifically an iteration for the quantity $V_{p}$, was originally established in \cite{farhi2022quantum} using different techniques than those of Section \ref{sec:arbitrary_distr_p_geq_1}. However, it is not clear how these techniques can be generalized to non-symmetric distributions. In this section, we use a different set of elementary techniques to determine the limiting optimal value of different versions of the biased SK model. Our goal is to find an analog to the Parisi value \cite{parisi1980sequence, talagrand2006parisi} for the biased model. %
The following is based on Ref.~\cite{panchenko2014introduction}.

For $\bm{z} \in \{-1, 1\}^{N}$, let 
\begin{align}
    G(\bm{z}) = -\sum_{1\leq i <j\leq N}J^{(N)}_{ij}z_iz_j,
\end{align} where $J^{(N)}_{ij} \sim \mathcal{N}(\mu(N), \sigma^2)$. The optimization problem is $\max_{\bm{z}}G(\bm{z})$. Note that unlike the standard SK, which is symmetric around zero, here we must keep track of the signs of the couplings. %

When $\mu(N) =0$, we know that $\mathbb{E}_{J}[\max_{\bm{z}}G(\bm{z})] = O(N^{3/2})$. In the standard SK model, the weights are scaled by $N^{-1/2}$ to ensure that expected maximum instead grows linearly with $N$. This is the reasoning for the scaling of the standard SK objective presented in Equation~\eqref{eqn:sk_objective}. For even $N$, let

\begin{align}
\label{eqn:h_function}
h(N) = \begin{cases}
    \frac{N^2}{4}, & \text{if $\mu(N) > 0$},\\
    -\binom{N}{2}, & \text{if $\mu(N) < 0$},\\
    0, & \text{if $\mu(N) = 0$,}
    \end{cases}
\end{align}
which equals $\frac{\max_{\bm{z}}\mathbb{E}_{J}[G(\bm{z})]}{\mu(N)}$ when $\mu(N)\neq 0$. More specifically, when $\mu(N) > 0$, the problem $\max_{\bm{z}}\mathbb{E}_{J}[G(\bm{z})]$ reduces to MaxCut on a complete graph with all edge weights equal to $\mu(N)$, and thus the optimal cut value is $\mu(N)\frac{N^2}{4}$, i.e. set half of the $z_i=1$. When $\mu(N) < 0$, the optimal value is obtained when all $z_i = 1$, and results in an objective function value of $\mu(N)\binom{N}{2}$. Note that when $N$ is odd, the factor in the denominator is the same for all cases, and thus we can restrict to  even $N$, wlog.

We have  the following by the convexity of $\max$:
\begin{align}
    \mu(N)h(N)= \max_{\bm{z}}\mathbb{E}_{J}[G(\bm{z})] \leq \mathbb{E}_{J}[\max_{\bm{z}}G(\bm{z})]
\end{align}
and
\begin{align}
 \mathbb{E}_{J}[\max_{\bm{z}}G(\bm{z})] & = \frac{1}{\alpha}\log\exp\mathbb{E}_{J}[\alpha\max_{\bm{z}}G(\bm{z})] \nonumber \\ 
 &\leq \frac{1}{\alpha}\log\mathbb{E}_{J}[\exp(\alpha\max_{\bm{z}}G(\bm{z}))] \nonumber\\
 &\leq \frac{1}{\alpha}\log \sum_{\bm{z}}\mathbb{E}_{J}[\exp(G(\bm{z}))] \nonumber\\
 & = \frac{1}{\alpha}\log \sum_{\bm{z}}\exp\left(\frac{\alpha^{2}\sigma^2}{2}\binom{N}{2} - \alpha\mu(N)\sum_{i<j}z_iz_j\right) \nonumber \\ &\leq \frac{1}{\alpha}\log2^{N}\exp\left(\frac{\alpha^2\sigma^2}{2}\binom{N}{2}  + \alpha\mu(N)h(N)\right) \nonumber\\ 
 &= \frac{N\log(2)}{\alpha} + \frac{\alpha\sigma^2}{2}\binom{N}{2} + \mu(N)h(N).
\end{align}
Minimizing over $\alpha > 0$, gives

\begin{align}
    \alpha^2 = \frac{N\log(4)}{\sigma^2 \binom{N}{2}},
\end{align}
and thus
\begin{align}
\label{eqn:exp_max_bound}
    \mu(N)h(N) \leq \mathbb{E}_{J}[\max_{\bm{z}}G(\bm{z})] \leq \sigma\sqrt{\log(4)N\binom{N}{2}} + \mu(N)h(N).
\end{align}
Note, the right-hand side of the last inequality now involves only $\sigma$ instead of $\sigma^2$, and thus is invariant under any scaling before or after the expectation in $\mathbb{E}_{J}[\max_{\bm{z}}G(\bm{z})]$. 

{When $\mu(N)=0$, we have $h(N)=0$, and we recover the SK scaling, i.e.,} 
\begin{align}
\mathbb{E}_{J}[\max_{\bm{z}}G(\bm{z})] \leq \sigma\sqrt{\log(4)N\binom{N}{2}} = O(N^{3/2}).
\end{align}
{Also for standard SK with $J_{ij} \sim \mathcal{N}(0, 1)$, which is a special case of $\mu(N)=0$, it is known that }
\begin{align}
\lim_{N\rightarrow\infty}\frac{\mathbb{E}_{J}[\max_{\bm{z}}G(\bm{z})]}{N^{3/2}} = \Pi^{*},
\end{align}
{where $\Pi^{*}$ is the Parisi value} \cite{panchenko2014introduction}. {Thus for $J_{ij} \sim \mathcal{N}(0, \sigma^2)$, it is easily seen that}
\begin{align}
    \lim_{N\rightarrow\infty}\frac{\mathbb{E}_{J}[\max_{\bm{z}}G(\bm{z})]}{N^{3/2}} = \sigma\Pi^{*}.
\end{align}

{To summarize}, the limiting behavior for SK with a distribution with any $\sigma \neq 1$ can be obtained by simple rescaling. 
{To connect this to the quantities discussed at the end of  Section \ref{sec:arbitrary_distr_p_geq_1}}, for $\mathcal{N}(0, \sigma^2)$, the quantity $V_{p}$ in Equation~\eqref{eqn:sk_limiting} scales as $V_{p}(\bm{\gamma}, \bm{\beta}) = V_{p, 1}(\bm{\gamma}, \bm{\beta}) =V_{p, \sigma}(\frac{\bm{\gamma}}{\sigma}, \bm{\beta})$. Therefore for the remainder of this Section we focus on the case where $\mu(N)\neq 0$.

If $\mu(N)\rightarrow \mu$ as $N\rightarrow\infty$ for
some nonzero constant $\mu$, the term involving $\mu(N)$ dominates, and the expected maximum is $\Theta(N^2)$. {More specifically, using the definition of $h(N)$ in \eqref{eqn:h_function} for nonzero $\mu$, we get that $\lim_{N\rightarrow\infty}\mu(N)h(N) = \Theta(N^2)$, and both sides of Equation~\eqref{eqn:exp_max_bound} are of the same order in $N$.}
Thus {the squeeze theorem implies} that the limiting quantity is exactly
\begin{align}
\lim_{N\rightarrow\infty}\frac{\mathbb{E}_{J}[\max_{\bm{z}}G(\bm{z})]}{N^2} = \frac{\mu}{2(1+\text{sign}(\mu))},
\end{align}
where we define $\text{sign}(x)$ to be $1$ when $x\geq0$ and $0$ when $x<0$. {This accounts for the constant factor differences in $h(N)$ for $\mu \neq 0$.} The solution to the above problem is trivial in the infinite limit: set all $z_i = 1$, when $\mu < 0$, and  set half of the $z_i$ to $1$ when $\mu > 0$.

However, to compensate for the maximum growing $\Theta(\sqrt{N})$ faster {(than standard SK)} when $\mu(N)$ is not always zero, we could consider defining the biased SK model to have $\mu(N) = O(N^{-1/2})$. Then letting $\mu = \lim_{N\rightarrow\infty}\mu(N) \sqrt{N}$ we get, from Equation~\eqref{eqn:exp_max_bound},

\begin{align}
0 \leq   \lim_{N\rightarrow\infty}\frac{\mathbb{E}_{J}[\max_{\bm{z}}G(\bm{z})]}{N^{3/2}} - \frac{\mu}{2(1+\text{sign}(\mu))} \leq \frac{\sigma}{\sqrt{2}} \log^{1/2}(4).
\end{align}
{Thus the growth of the maximum is again $O(N^{3/2})$, like in the original SK. Recall that the original SK satisfies}
\begin{align}
0 \leq   \lim_{N\rightarrow\infty}\frac{\mathbb{E}_{J}[\max_{\bm{z}}G(\bm{z})]}{N^{3/2}} \leq \frac{\sigma}{\sqrt{2}} \log^{1/2}(4),
\end{align}
{with the limiting value being the Parisi value.}

In this regime, the ``biased SK'' model appears non-trivial and {the limiting value and solution are not obvious}. %
{We are unable to find any mention of such a model in existing literature,} and we leave the study of its properties to future work.

\begin{figure*}[t]
\includegraphics[width=1\textwidth]{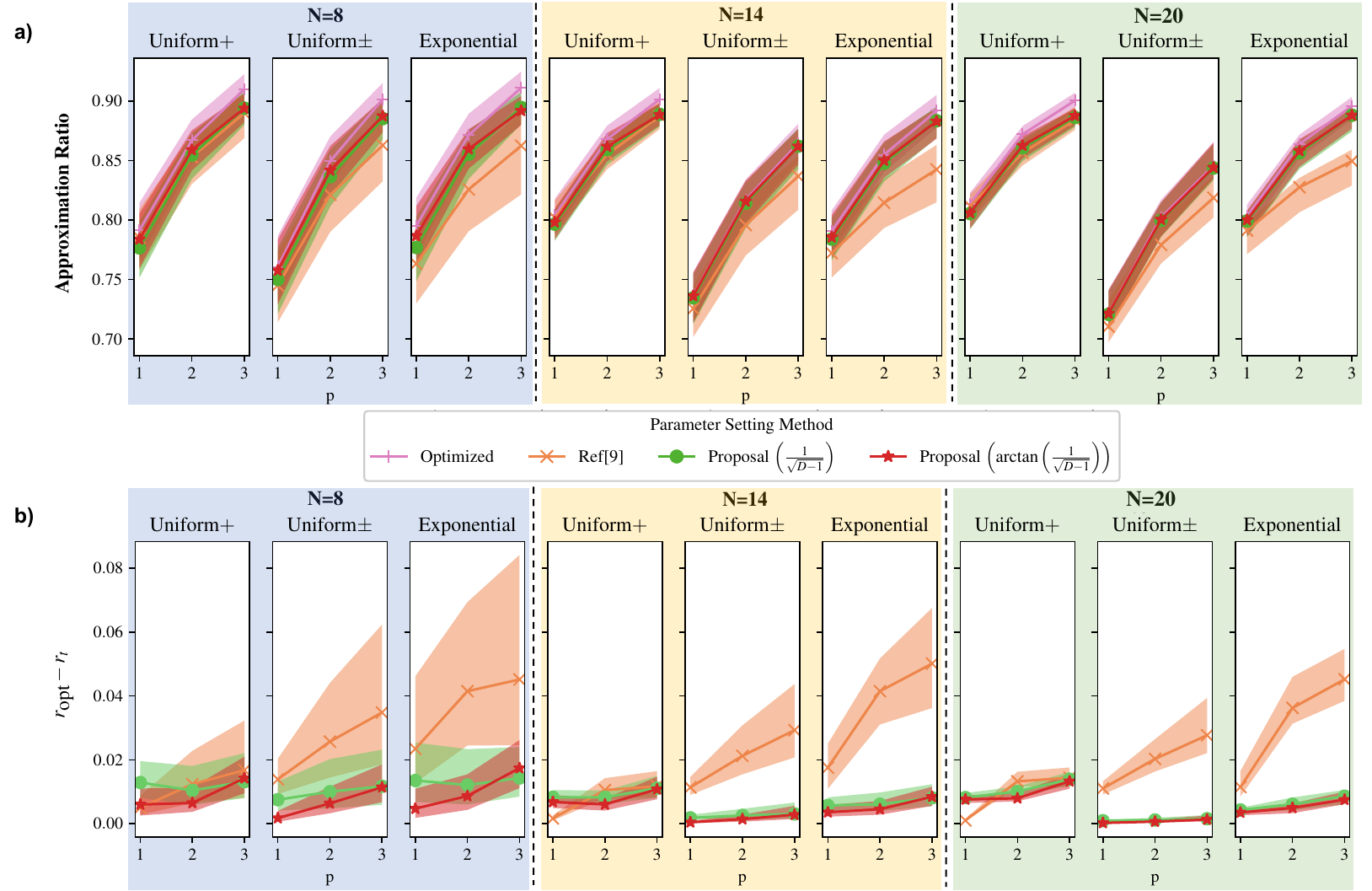}
\caption{\label{fig:transferred_approx_ratio} a) The approximation ratios obtained with directly optimized parameters, parameter setting method of Ref.~\cite{Shaydulin2023}, and parameter setting methods presented in this work. b) The gap between the approximation ratios with optimized parameters and with parameter setting methods of Ref.~\cite{Shaydulin2023}, ($\lowerromannumeral{1}$) and ($\lowerromannumeral{2}$). 
The proposed parameter setting methods perform better when compared to the prior work,
as indicated by the reduced gap from the objective obtained with the optimized parameters.}
\end{figure*}

\section{Numerical results}\label{sec:numerical}

\subsection{Weighted MaxCut}
Numerical investigation of the proposed parameter setting rule has been performed on a dataset of weighted graphs from Ref. \cite{Shaydulin2023}, 
available through QAOAKit~\cite{shaydulin2021qaoakit}. The dataset consists of a total of 34,701 weighted graphs with up to 20 nodes and contains both regular and non-regular graphs. The graphs have edge weights drawn i.i.d. from four different distributions, namely Uniform over $[0,1]$ (``Uniform$+$''), Uniform over $[-1,1]$ (``Uniform $\pm$''), Exponential (with $\lambda=0.2$), and Cauchy.

For the numerical study, we investigate two proposed parameter setting rules, which are variants of Equation~\eqref{eqn:general_maxcut_rule}:
\begin{equation}
    \begin{aligned}\setcounter{subeqns}{0}
        \subnumber\quad &\bm{\gamma}* = \frac{\bm{\gamma}^{\text{inf}}}{\sqrt{\frac{D-1}{|E|}\sum_{\{u,v\}\in E}{w_{uv}^2}}},
    \end{aligned}
    \label{eq:sqrt_D}
\end{equation}
\begin{equation}
    \begin{aligned}
        \subnumber\quad &\bm{\gamma}* = \frac{\bm{\gamma}^{\text{inf}}}{\sqrt{\frac{1}{|E|}\sum_{\{u,v\}\in E}{w_{uv}^2}}} \arctan\Big(\frac{1}{\sqrt{D-1}}\Big),
    \end{aligned}
    \label{eq:arctan_sqrt_D}
\end{equation}
where the parameters $\bm\beta^{\text{inf}}, \bm\gamma^{\text{inf}}$ are the optimized parameters for large-girth, regular graphs in infinite-size limit from \cite[Table 4]{qaoa_high_girth_arxiv_ver}, and $D$ is the average degree.
 Our baseline is the parameter scheme of Ref. \cite{Shaydulin2023}, given by:
 \begin{align}
     \bm{\gamma}* = \frac{\bm{\gamma}^{\text{median}}}{\frac{1}{|E|}\sum_{\{u,v\}\in E}|w_{uv}|}\arctan(\frac{1}{\sqrt{D-1}}),
 \end{align}
 where $\bm{\gamma}^{\text{median}}$ is a median taken over optimized parameters for all $261,080$ nonisomorphic connected 9-node graphs. The key difference between our scaling and that of Ref.~\cite{Shaydulin2023} is the choice of the denominator. Since $\bm{\gamma}^{\text{median}}$ is close in value to $\bm{\gamma}^{\text{inf}}$, the nominator is similar in both schemes.

We refer to the parameter setting procedure described in Equation~\eqref{eq:sqrt_D} as method ($\lowerromannumeral{1}$) and that in Equation~\eqref{eq:arctan_sqrt_D} as method ($\lowerromannumeral{2}$). %
The first method is inspired directly by the analytical results described in Section~\ref{sec:analytical}. 
We observe that in the case of small $p$, better results are obtained when the formula for $p=1$ from \cite{Wang2018} is considered, which motivates the second rule. %
We note that while we have derived the exact formula for graphs with weights sampled from the exponential distribution, we do not use it in the numerical experiments. Our goal for numerics is to simulate the practical setting, wherein one does not know the distribution from which the weights are sampled.

We analyze the performance of the proposed parameter setting rules across multiple weight distributions, values of $p$ and values of $N$. %
Herein, we denote the median approximation ratio with directly optimized parameters by $r_{opt}$, with the parameter setting scheme from Ref. \cite{Shaydulin2023} by $r_{[9]}$ and with the two proposed methods as $r_{D}$ and $r_{\arctan}$ respectively. We refer to the difference between the approximation ratio of a given parameter setting scheme and $r_{opt}$ as the optimality gap. The results are presented in Figure~\ref{fig:transferred_approx_ratio}. 

Our techniques lead to lower optimality gaps as compared to Ref.~\cite{Shaydulin2023} in all cases except $p=1$ with  weights sampled uniformly from $[0,1]$. We note that the gap between the methods ($\lowerromannumeral{1}$) and ($\lowerromannumeral{2}$) reduces as $p$ increases. For example, for $N=8$ the optimality gap drops from $0.0111$ on average for $p=1$, to $0.0062$ for $p=2$, and eventually to $0.0005$ for $p=3$.

The median difference in approximation ratios for all considered $p$ and weight distributions is 1.8 p.p. for method ($\lowerromannumeral{1}$) and 1.45 p.p. for method ($\lowerromannumeral{2}$).  Specifically, for the cases of exponential and Cauchy distributions, the median differences in approximation ratios from our method ($\lowerromannumeral{1}$) are 1.3 p.p.~and 3.8 p.p.~respectively, and those from method ($\lowerromannumeral{2}$) remain a mere 1.0 p.p.~and 3.3 p.p.~respectively. For comparison, the previous proposal~\cite{Shaydulin2023} obtains  median differences of 3.6 p.p.~and 20.7 p.p.~for the weights drawn from exponential and Cauchy distributions respectively.  As can be seen in Figure~\ref{fig:transferred_score_cauchy}, for the case when the edge weights are drawn from a Cauchy distribution, the improvement over Ref.~\cite{Shaydulin2023} is the largest, with an $8\times$ reduction in optimality gap at $p=3$.

\begin{table*}[t]
    \centering
    \begin{tabular}{ c|c|c|c|c }
        {Weight Distribution} & $p$ & $r_{opt}-r_{[9]}$ & $r_{opt}-r_{D}$ & $r_{opt}-r_{\arctan}$ \\
        \hhline{=|=|=|=|=}
        \multirow{3}{4em}{Uniform$+$} & $1$ & $0.5$ & $1.3$ & $0.6$\\ \cline{2-5}
        & $2$ & $1.2$ & $1.0$ & $0.6$ \\ \cline{2-5}
        & $3$ & $1.6$ & $1.3$ & $1.4$  \\
        \hline
        \multirow{3}{4em}{Uniform$\pm$} & $1$ & $1.4$ & $0.7$ & $0.2$ \\ \cline{2-5}
        & $2$ & $2.6$ & $1.1$ & $0.6$ \\ \cline{2-5}
        & $3$ & $3.5$ & $1.3$ & $1.1$ \\
        \hline
        \multirow{3}{4em}{Exponential} & $1$ & $2.3$ & $1.3$ & $0.5$ \\ \cline{2-5}
        & $2$ & $4.1$ & $1.2$ & $0.8$ \\ \cline{2-5}
        & $3$ & $4.5$ & $1.4$ & $1.7$ \\
        \hline
        \multirow{3}{4em}{Cauchy} & $1$ & $17.5$ & $4.8$ & $3.3$ \\ \cline{2-5}
        & $2$ & $20.8$ & $4.1$ & $3.9$ \\ \cline{2-5}
        & $3$ & $24.0$ & $3.2$ & $3.0$ \\
        \hline
        \hline
    \end{tabular}
    \caption{The optimality gaps achieved by the proposed parameter setting rules and the rule in~\cite{Shaydulin2023}. Proposed method leads to lower optimality gaps in all cases except $p=1$ for the Uniform$+$ distribution.}
    \label{tab:median_aggregate_approx_ratios}
\end{table*}
\begin{table*}[t]
    \centering
    \begin{tabular}{ p{2cm}|c|c|c|c }
        \Centering{$s^2$} & $p$ & $r_{opt}-r_{[9]}$ & $r_{opt}-r_{D}$ & $r_{opt}-r_{\arctan}$ \\
        \hhline{=|=|=|=|=}
        \multirow{3}{4em}{\Centering{0.5697 [Uniform$+$]}} & $1$ & $0.00166$ & $0.00828$ & $0.0068$\\ \cline{2-5}
        & $2$ & $0.01053$ & $0.00827$ & $0.00604$ \\ \cline{2-5}
        & $3$ & $0.01172$ & $0.01112$ & $0.01077$  \\
        \hline
        \multirow{3}{4em}{\Centering{0.9557 [Exponential]}} & $1$ & $0.00572$ & $0.00265$ & $0.00357$ \\ \cline{2-5}
        & $2$ & $0.04149$ & $0.00611$ & $0.00455$ \\ \cline{2-5}
        & $3$ & $0.05012$ & $0.0081$ & $0.00843$ \\
        \hline
        \multirow{3}{4em}{\Centering{1.1364 [Uniform$\pm$]}} & $1$ & $0.00183$ & $0.00160$ & $0.00045$ \\ \cline{2-5}
        & $2$ & $0.02124$ & $0.00258$ & $0.00142$ \\ \cline{2-5}
        & $3$ & $0.02928$ & $0.00306$ & $0.00282$ \\
        \hline
        \multirow{3}{4em}{\Centering{2.4811 [Cauchy]}} & $1$ & $0.17524$ & $0.04775$ & $0.03344$ \\ \cline{2-5}
        & $2$ & $0.20788$ & $0.04098$ & $0.03867$ \\ \cline{2-5}
        & $3$ & $0.23958$ & $0.03238$ & $0.02967$ \\
        \hline 
        \hline
    \end{tabular}
    \caption{The median quality of solution for median variance across different distributions and for multiple $p$ achieved through our introduced methods when compared with the previous study\cite{Shaydulin2023} for $n=14$ . As variance, $s^2$, grows, the optimality gaps achieved by our presented techniques are almost an order of magnitude better in comparison to the method in Ref. \cite{Shaydulin2023}.}
    \label{tab:approx_ratios_var}
\end{table*}
\begin{figure}[H]
\centering
\includegraphics[width=0.7\textwidth]{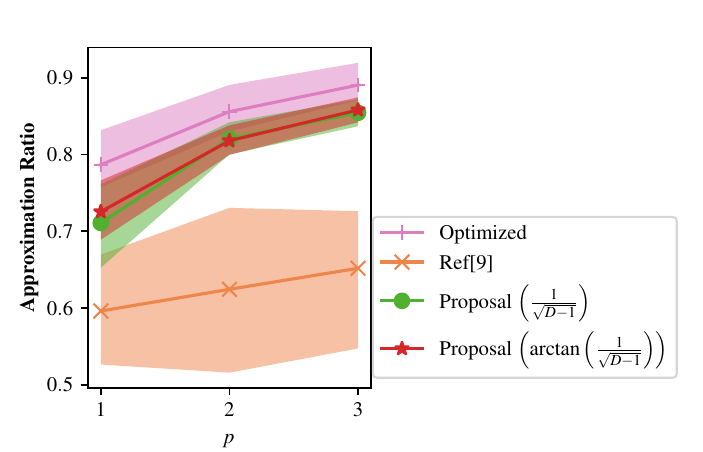}
\caption{\label{fig:transferred_score_cauchy} Approximation ratio for the graphs with edge weights drawn from a Cauchy distribution for $N=14$. The proximity to the optimized parameter scenario, especially for large $p$, indicates the power of the suggested parameter setting strategies and shows a clear improvement over the earlier work. Our methods reduce the optimality gap by a factor of 8 for $p=3$ as compared to Ref.~\cite{Shaydulin2023}.}
\end{figure}

\begin{figure*}[t]
    \centering
    \includegraphics[width=\textwidth]{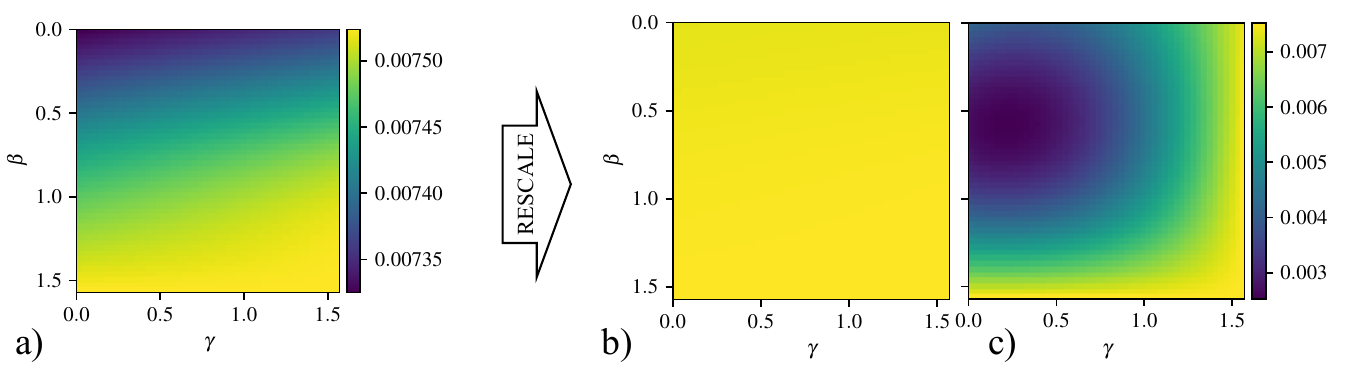}
    \caption{Energy landscape of $\xgate\ygate$-QAOA applied to a random portfolio optimization problem with (a,b) original and (c) rescaled objective. Rescaling improves the geometry of the landscape. To highlight the flatness of the unrescaled landscape, (b) plots the same landscape as (a), but using the color scheme of (c). %
    }
    \label{fig:heatmaps}
\end{figure*}

\begin{figure}[h]
    \centering
    \includegraphics[width=0.6\textwidth]{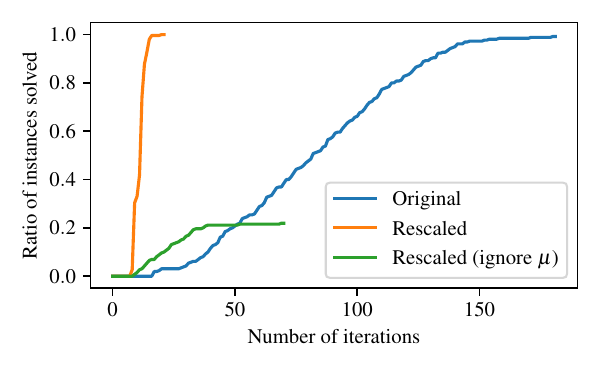}
    \caption{Performance profiles for BOBYQA run from a fixed initial point on 280 instances between 7 and 20 qubits. A line plots the ratio of instances solved with a given number of iterations. Rescaling the objective reduces the number of iterations until convergence to the target local optimum by 7.4x on average. {Ignoring weights for linear terms, i.e. $\mu$, leads to incorrect rescaling and to optimizer failing to recover the optimum.}}
    \label{fig:traces}
\end{figure}

The optimality gaps obtained by considering our presented parameters setting rules are of comparable values for each $p$. However, the performance of the method of Ref. \cite{Shaydulin2023} deteriorates as $p$ increases. %

From the values shown in Table. \ref{tab:approx_ratios_var}, it can be observed that when the variance is small, the performance obtained using the method in the prior work~\cite{Shaydulin2023} is comparable to that of our methods. However, as variance increases the solution qualities achieved by the methods introduced in this work beat those of the previous work~\cite{Shaydulin2023}.

\subsection{Portfolio optimization}

While the analytical results of Section~\ref{sec:analytical} only apply to MaxCut and QAOA with transverse-field mixer $B=\sum_j\xgate_j$, the intuition applies more broadly. To illustrate this point, in this Section we consider a portfolio optimization problem with a budget constraint, given by
\begin{align}
    & \min_{\bm{x}\in \{0, 1\}^N} q\bm{x}^{\mathsf{T}}\Sigma\bm{x}-\bm{\mu}^{\mathsf{T}}\bm{x}, \\
    &\text{s.t.} \sum_j x_j = k.
\end{align}
This problem is commonly considered as a target for QAOA~\cite{herman2022portfolio,brandhofer2022benchmarking,slate2021quantum,hodson2019portfolio,Hao2022, he2023alignment}. We use random instances generated by \texttt{RandomDataProvider} in \texttt{qiskit\_finance}~\cite{qiskitfinance} and set $q=0.5$ and $k = \lfloor \frac{n}{2}\rfloor$. We consider 20 instances for each number of qubits between 7 and 20, for a total of 280 instances. For each instance, we reformulate the problem in terms of spin variables and rescale the objective
following Equation~\eqref{eqn:gen_hypergraph_scaling_rule}, with the scaling coefficient
computed separately for each instance. We use QAOA with $p=1$ and the $\xgate\ygate$ mixer given by $B = \frac{1}{2}\sum_{j=1}^N \left(\xgate_{j}\xgate_{j+1}+\ygate_j\ygate_{j+1}\right)$ and set the initial state $\ket{s}$ to be the uniform superposition over all Hamming weight $k$ states (Dicke state). We denote this variant $\xgate\ygate$-QAOA.

We first observe that the rescaling improves the optimization landscape. The QAOA energy at $p=1$ is presented in Figure~\ref{fig:heatmaps}. The landscape with the original objective function (Figure~\ref{fig:heatmaps}a,b) is flat, making the optimization difficult. After rescaling (Figure~\ref{fig:heatmaps}c), a clear local minimum is visible. Moreover, the geometry of the landscape is similar with regards to both parameters. This suggests that the optimization of QAOA parameters with the rescaled objective should be easier.

To quantify this improvement, we optimize the parameters using the NLopt~\cite{NLOpt} implementation of the BOBYQA~\cite{powell2009bobyqa} gradient-free optimizer. We use BOBYQA as it has been shown to be an effective optimizer for QAOA parameters~\cite{shaydulin2019multistart}. For each instance, we run one local optimization starting from the optimal QAOA parameters for the Sherrington-Kirkpatrick model obtained from \cite[Table 4]{qaoa_high_girth_arxiv_ver} and given by $\gamma^{\text{init}} = 1$, $\beta^{\text{init}}=\frac{\pi}{4}$\footnote{The parameters in \cite[Table 4]{lipics.tqc.2022.7} differ from these by a factor of 2. The difference is due to the constant factors in the QAOA simulator implementation that we use.}. We set the stopping criteria to be $\text{\texttt{xtol}}=\text{\texttt{ftol}}=10^{-8}$. We present the performance profiles in Figure~\ref{fig:traces}. The optimizer finds the same optimum for the original and rescaled objective in $91\%$ of the cases. We consider that an optimizer ``solves'' the instance when it recovers this optimum. When the objective is rescaled, the optimizer takes on average $7.4$ times fewer iterations to find the same optimum.

{To contrast portfolio optimization with the weighted MaxCut problem and to highlight the importance of including the weight for the linear term in Eq.~\ref{eqn:gen_hypergraph_scaling_rule}, we additionally plot the performance profile for results with rescaling rule where the coefficients for linear terms $\mu_j$ are ignored during the rescaling and only $\Sigma_{ij}$ are used. This profile is marked ``ignore $\mu$'' in Figure~\ref{fig:traces}. As can be seen from the plot, ignoring weights for linear terms leads to incorrect rescaling, with the optimizer failing to recover the optimum in the majority of cases. We note that the weights $\mu_j$ act analogously to bias in the ``biased SK'' model of Section~\ref{sec:SK_mu_neq_0} and can dominate the objective for some instances.}

\section{Discussion}

In this work, we propose heuristic parameter setting rules for QAOA, inspired by a formal connection between weighted and unweighted MaxCut on regular graphs. For $p=1$, we derive explicit expressions for the parameter $\gamma$ that maximizes the cost function in the weighted case. Our analysis of MaxCut at $p=1$ rigorously proves that the globally-optimal $\gamma$ are small, providing additional justification for this commonly used assumption~\cite{lipics.tqc.2022.7,boulebnane2023peptide,2208.06909,2110.10685}. For $p\ge 1$, we show explicitly how the energy landscape and, consequently, the optimal parameters  scale between the weighted and unweighted cases. As we prove the concentration of the QAOA objective, our results apply with high probability to any random weighted MaxCut instance. {An important limitation of our analysis is the high-girth assumption, which we inherit from Ref.~\cite{qaoa_high_girth_arxiv_ver}}. Surprisingly, the simple rules we derive for MaxCut {on high-girth, regular graphs} apply broadly, which we demonstrate by extensive numerical experiments on {MaxCut on a general class of random graphs and on} random instances of a constrained portfolio optimization problem. %

Additionally, we consider the biased SK problem and rigorously show that it has a trivial solution in the infinite-size limit, unless the mean of the weight distribution falls sufficiently fast with the number of vertices. This investigation was inspired by the connection between SK and MaxCut on regular graphs, and the observation that the closed-form iterations that we use for QAOA do not apply to complete graphs. However, it appears that, unlike standard SK, the analysis of QAOA performance is unlikely to lead to significant insights when the weights are biased.

Our observation that QAOA parameters $\bm\gamma$ have to decrease with problem size is an instantiation of a broader principle, namely that parameterized quantum circuits are not scale-independent. Similar results have been observed for quantum kernel methods~\cite{Shaydulin2022,2206.06686} and quantum neural network initialization~\cite{zhang2022escaping}. A unification of these observations into a general theory of parameterized quantum circuits is a tempting prospect, though it would require the development of novel mathematical techniques.

\section*{Acknowledgements}
The authors thank Kunal Marwaha for early exploratory discussions, Aram Harrow for insightful discussions on biased SK, and their colleagues at the Global Technology Applied Research center of JPMorgan Chase for support and helpful discussions. 
JB was supported by a grant from the Simons Foundation under Award No. 825053.

\bibliographystyle{quantum}
\bibliography{research}

\begin{thebibliography}{10}

\bibitem{nielsen2010quantum}
Michael~A Nielsen and Isaac~L Chuang.
\newblock ``Quantum computation and quantum information''.
\newblock \href{https://dx.doi.org/10.1017/CBO9780511976667}{Cambridge
  university press}. ~(2010).

\bibitem{herman2022}
Dylan Herman, Cody Googin, Xiaoyuan Liu, Alexey Galda, Ilya Safro, Yue Sun,
  Marco Pistoia, and Yuri Alexeev.
\newblock ``A survey of quantum computing for finance''~(2022).
\newblock  url:~\url{https://doi.org/10.48550/arXiv.2201.02773}.

\bibitem{Hogg2000}
Tad Hogg and Dmitriy Portnov.
\newblock ``Quantum optimization''.
\newblock \href{https://dx.doi.org/10.1016/s0020-0255(00)00052-9}{Information
  Sciences {\bf 128}, 181--197}~(2000).

\bibitem{farhi2014quantum}
Edward Farhi, Jeffrey Goldstone, and Sam Gutmann.
\newblock ``A quantum approximate optimization algorithm''~(2014).
\newblock  url:~\url{https://doi.org/10.48550/arXiv.1411.4028}.

\bibitem{hadfield2019quantum}
Stuart Hadfield, Zhihui Wang, Bryan O’Gorman, Eleanor~G Rieffel, Davide
  Venturelli, and Rupak Biswas.
\newblock ``From the quantum approximate optimization algorithm to a quantum
  alternating operator ansatz''.
\newblock Algorithms {\bf 12}, 34~(2019).
\newblock  url:~\url{https://doi.org/10.3390/a12020034}.

\bibitem{2208.06909}
Sami Boulebnane and Ashley Montanaro.
\newblock ``Solving boolean satisfiability problems with the quantum
  approximate optimization algorithm''~(2022).
\newblock  url:~\url{https://doi.org/10.48550/arXiv.2208.06909}.

\bibitem{lipics.tqc.2022.7}
Joao Basso, Edward Farhi, Kunal Marwaha, Benjamin Villalonga, and Leo Zhou.
\newblock ``The quantum approximate optimization algorithm at high depth for
  maxcut on large-girth regular graphs and the sherrington-kirkpatrick model''.
\newblock \href{https://dx.doi.org/10.4230/LIPICS.TQC.2022.7}{Proceedings of
  the Conference on the Theory of Quantum Computation, Communication and
  Cryptography {\bf 7}, 1--21}~(2022).

\bibitem{2111.12641}
Matthew~B. Hastings.
\newblock ``A classical algorithm which also beats
  $\frac{1}{2}+\frac{2}{\pi}\frac{1}{\sqrt{d}}$ for high girth
  max-cut''~(2021).
\newblock  url:~\url{https://doi.org/10.48550/arXiv.2111.12641}.

\bibitem{Shaydulin2023}
Ruslan Shaydulin, Phillip~C. Lotshaw, Jeffrey Larson, James Ostrowski, and
  Travis~S. Humble.
\newblock ``Parameter transfer for quantum approximate optimization of weighted
  {MaxCut}''.
\newblock \href{https://dx.doi.org/10.1145/3584706}{{ACM} Transactions on
  Quantum Computing {\bf 4}, 1--15}~(2023).

\bibitem{boulebnane2023peptide}
Sami Boulebnane, Xavier Lucas, Agnes Meyder, Stanislaw Adaszewski, and Ashley
  Montanaro.
\newblock ``Peptide conformational sampling using the quantum approximate
  optimization algorithm''.
\newblock npj Quantum Information {\bf 9}, 70~(2023).
\newblock  url:~\url{https://doi.org/10.1038/s41534-023-00733-5}.

\bibitem{brandhofer2022benchmarking}
Sebastian Brandhofer, Daniel Braun, Vanessa Dehn, Gerhard Hellstern, Matthias
  H{\"u}ls, Yanjun Ji, Ilia Polian, Amandeep~Singh Bhatia, and Thomas Wellens.
\newblock ``Benchmarking the performance of portfolio optimization with qaoa''.
\newblock \href{https://dx.doi.org/10.1007/s11128-022-03766-5}{Quantum
  Information Processing {\bf 22}, 25}~(2022).

\bibitem{2110.10685}
Sami Boulebnane and Ashley Montanaro.
\newblock ``Predicting parameters for the quantum approximate optimization
  algorithm for max-cut from the infinite-size limit''~(2021).
\newblock  url:~\url{https://doi.org/10.48550/arXiv.2110.10685}.

\bibitem{farhi2022quantum}
Edward Farhi, Jeffrey Goldstone, Sam Gutmann, and Leo Zhou.
\newblock ``The quantum approximate optimization algorithm and the
  {S}herrington-{K}irkpatrick model at infinite size''.
\newblock \href{https://dx.doi.org/10.22331/q-2022-07-07-759}{Quantum {\bf 6},
  759}~(2022).

\bibitem{Dembo2017}
Amir Dembo, Andrea Montanari, and Subhabrata Sen.
\newblock ``Extremal cuts of sparse random graphs''.
\newblock \href{https://dx.doi.org/10.1214/15-aop1084}{The Annals of
  Probability{\bf 45}}~(2017).

\bibitem{crooks2018performance}
Gavin~E Crooks.
\newblock ``Performance of the quantum approximate optimization algorithm on
  the maximum cut problem''~(2018).
\newblock  url:~\url{https://doi.org/10.48550/arXiv.1811.08419}.

\bibitem{streif2020training}
Michael Streif and Martin Leib.
\newblock ``Training the quantum approximate optimization algorithm without
  access to a quantum processing unit''.
\newblock \href{https://dx.doi.org/10.1088/2058-9565/ab8c2b}{Quantum Science
  and Technology {\bf 5}, 034008}~(2020).

\bibitem{zhou2020quantum}
Leo Zhou, Sheng-Tao Wang, Soonwon Choi, Hannes Pichler, and Mikhail~D. Lukin.
\newblock ``Quantum approximate optimization algorithm: Performance, mechanism,
  and implementation on near-term devices''.
\newblock \href{https://dx.doi.org/10.1103/PhysRevX.10.021067}{Physical Review
  X {\bf 10}, 021067}~(2020).

\bibitem{shaydulin2019multistart}
Ruslan Shaydulin, Ilya Safro, and Jeffrey Larson.
\newblock ``Multistart methods for quantum approximate optimization''.
\newblock In IEEE High Performance Extreme Computing Conference.
\newblock \href{https://dx.doi.org/10.1109/hpec.2019.8916288}{Pages 1--8}.
\newblock ~(2019).

\bibitem{Lee2021}
Xinwei Lee, Yoshiyuki Saito, Dongsheng Cai, and Nobuyoshi Asai.
\newblock ``Parameters fixing strategy for quantum approximate optimization
  algorithm''.
\newblock \href{https://dx.doi.org/10.1109/qce52317.2021.00016}{2021 {IEEE}
  International Conference on Quantum Computing and Engineering
  ({QCE})}~(2021).

\bibitem{sack2021quantum}
Stefan~H. Sack and Maksym Serbyn.
\newblock ``Quantum annealing initialization of the quantum approximate
  optimization algorithm''.
\newblock \href{https://dx.doi.org/10.22331/q-2021-07-01-491}{{Quantum} {\bf
  5}, 491}~(2021).

\bibitem{amosy2022iterative}
Ohad Amosy, Tamuz Danzig, Ely Porat, Gal Chechik, and Adi Makmal.
\newblock ``Iterative-free quantum approximate optimization algorithm using
  neural networks''~(2022).
\newblock  url:~\url{https://doi.org/10.48550/arXiv.2208.09888}.

\bibitem{Lykov2020tensorqaoa}
Danylo Lykov, Roman Schutski, Alexey Galda, Valeri Vinokur, and Yuri Alexeev.
\newblock ``Tensor network quantum simulator with step-dependent
  parallelization''.
\newblock In 2022 IEEE International Conference on Quantum Computing and
  Engineering (QCE).
\newblock \href{https://dx.doi.org/10.1109/QCE53715.2022.00081}{Pages
  582--593}.
\newblock ~(2022).

\bibitem{Medvidovic2021QAOA54qubit}
Matija Medvidovi\'c and Giuseppe Carleo.
\newblock ``Classical variational simulation of the quantum approximate
  optimization algorithm''.
\newblock \href{https://dx.doi.org/10.1038/s41534-021-00440-z}{npj Quantum
  Information{\bf 7}}~(2021).

\bibitem{Shaydulin2021Exploiting}
Ruslan Shaydulin and Stefan~M. Wild.
\newblock ``Exploiting symmetry reduces the cost of training {QAOA}''.
\newblock \href{https://dx.doi.org/10.1109/tqe.2021.3066275}{{IEEE}
  Transactions on Quantum Engineering {\bf 2}, 1--9}~(2021).

\bibitem{Shaydulin2020CaseStudy}
Ruslan Shaydulin and Yuri Alexeev.
\newblock ``Evaluating quantum approximate optimization algorithm: A case
  study''.
\newblock \href{https://dx.doi.org/10.1109/IGSC48788.2019.8957201}{Tenth
  International Green and Sustainable Computing Conference}~(2019).

\bibitem{brandao2018concentration}
Fernando G. S.~L. {Brand\~ao}, Michael Broughton, Edward Farhi, Sam Gutmann,
  and Hartmut Neven.
\newblock ``For fixed control parameters the quantum approximate optimization
  algorithm's objective function value concentrates for typical
  instances''~(2018).
\newblock  url:~\url{https://doi.org/10.48550/arXiv.1812.04170}.

\bibitem{Akshay2021}
V.~Akshay, D.~Rabinovich, E.~Campos, and J.~Biamonte.
\newblock ``Parameter concentrations in quantum approximate optimization''.
\newblock \href{https://dx.doi.org/10.1103/physreva.104.l010401}{Physical
  Review A{\bf 104}}~(2021).

\bibitem{Lotshaw2021BFGS}
Phillip~C. Lotshaw, Travis~S. Humble, Rebekah Herrman, James Ostrowski, and
  George Siopsis.
\newblock ``Empirical performance bounds for quantum approximate
  optimization''.
\newblock \href{https://dx.doi.org/10.1007/s11128-021-03342-3}{Quantum
  Information Processing {\bf 20}, 403}~(2021).

\bibitem{Galda2021transfer}
Alexey Galda, Xiaoyuan Liu, Danylo Lykov, Yuri Alexeev, and Ilya Safro.
\newblock ``Transferability of optimal qaoa parameters between random graphs''.
\newblock In 2021 IEEE International Conference on Quantum Computing and
  Engineering (QCE).
\newblock \href{https://dx.doi.org/10.1109/QCE52317.2021.00034}{Pages
  171--180}.
\newblock ~(2021).

\bibitem{lee2022depth}
Xinwei Lee, Ningyi Xie, Dongsheng Cai, Yoshiyuki Saito, and Nobuyoshi Asai.
\newblock ``A depth-progressive initialization strategy for quantum approximate
  optimization algorithm''.
\newblock \href{https://dx.doi.org/10.3390/math11092176}{Mathematics {\bf 11},
  2176}~(2023).

\bibitem{khairy2019learning}
Sami Khairy, Ruslan Shaydulin, Lukasz Cincio, Yuri Alexeev, and Prasanna
  Balaprakash.
\newblock ``Learning to optimize variational quantum circuits to solve
  combinatorial problems''.
\newblock \href{https://dx.doi.org/10.1609/aaai.v34i03.5616}{Proceedings of the
  {AAAI} Conference on Artificial Intelligence {\bf 34}, 2367--2375}~(2020).

\bibitem{1907.05415}
Guillaume Verdon, Michael Broughton, Jarrod~R. McClean, Kevin~J. Sung, Ryan
  Babbush, Zhang Jiang, Hartmut Neven, and Masoud Mohseni.
\newblock ``Learning to learn with quantum neural networks via classical neural
  networks''~(2019).
\newblock  url:~\url{https://doi.org/10.48550/arXiv.1907.05415}.

\bibitem{1911.04574}
Sami Khairy, Ruslan Shaydulin, Lukasz Cincio, Yuri Alexeev, and Prasanna
  Balaprakash.
\newblock ``Reinforcement-learning-based variational quantum circuits
  optimization for combinatorial problems''~(2019).
\newblock  url:~\url{https://doi.org/10.48550/arXiv.1911.04574}.

\bibitem{Wauters2020}
Matteo~M. Wauters, Emanuele Panizon, Glen~B. Mbeng, and Giuseppe~E. Santoro.
\newblock ``Reinforcement-learning-assisted quantum optimization''.
\newblock \href{https://dx.doi.org/10.1103/physrevresearch.2.033446}{Physical
  Review Research{\bf 2}}~(2020).

\bibitem{Alam2020}
Mahabubul Alam, Abdullah Ash-Saki, and Swaroop Ghosh.
\newblock ``Accelerating quantum approximate optimization algorithm using
  machine learning''.
\newblock \href{https://dx.doi.org/10.23919/date48585.2020.9116348}{2020
  Design, Automation {\&} Test in Europe Conference {\&} Exhibition
  ({DATE})}~(2020).

\bibitem{Yao2021}
Jiahao Yao, Lin Lin, and Marin Bukov.
\newblock ``Reinforcement learning for many-body ground-state preparation
  inspired by counterdiabatic driving''.
\newblock \href{https://dx.doi.org/10.1103/physrevx.11.031070}{Physical Review
  X{\bf 11}}~(2021).

\bibitem{Wang2018}
Zhihui Wang, Stuart Hadfield, Zhang Jiang, and Eleanor~G. Rieffel.
\newblock ``Quantum approximate optimization algorithm for {MaxCut}: A
  fermionic view''.
\newblock \href{https://dx.doi.org/10.1103/physreva.97.022304}{Physical Review
  A{\bf 97}}~(2018).

\bibitem{wurtz2021fixedangle}
Jonathan Wurtz and Danylo Lykov.
\newblock ``The fixed angle conjecture for {QAOA} on regular {MaxCut}
  graphs''~(2021).
\newblock  url:~\url{https://doi.org/10.48550/arXiv.2107.00677}.

\bibitem{hadfield2018quantum}
Stuart Hadfield.
\newblock ``Quantum algorithms for scientific computing and approximate
  optimization''~(2018).
\newblock  url:~\url{https://doi.org/10.48550/1805.03265}.

\bibitem{glasserman2004monte}
Paul Glasserman.
\newblock ``Monte carlo methods in financial engineering''.
\newblock \href{https://dx.doi.org/10.1007/978-0-387-21617-1}{Volume~53}.
\newblock Springer. ~(2004).

\bibitem{rudin1974real}
Walter Rudin.
\newblock ``Real and complex analysis''.
\newblock McGraw-Hill. ~(1974).

\bibitem{rudin1976principles}
Walter Rudin.
\newblock ``Principles of mathematical analysis''.
\newblock McGraw-hill. ~(1976).

\bibitem{mcdiarmid_1989}
Colin McDiarmid.
\newblock ``On the method of bounded differences''.
\newblock \href{https://dx.doi.org/10.1017/CBO9781107359949.008}{Page
  148–188}.
\newblock London Mathematical Society Lecture Note Series. Cambridge University
  Press. ~(1989).

\bibitem{warnke_2016}
Lutz Warnke.
\newblock ``On the method of typical bounded differences''.
\newblock \href{https://dx.doi.org/10.1017/S0963548315000103}{Combinatorics,
  Probability and Computing {\bf 25}, 269–299}~(2016).

\bibitem{vershynin_2018}
Roman Vershynin.
\newblock ``High-dimensional probability: An introduction with applications in
  data science''.
\newblock \href{https://dx.doi.org/10.1017/9781108231596}{Cambridge Series in
  Statistical and Probabilistic Mathematics}. Cambridge University Press.
  ~(2018).

\bibitem{Basso_2022}
Joao Basso, David Gamarnik, Song Mei, and Leo Zhou.
\newblock ``Performance and limitations of the {QAOA} at constant levels on
  large sparse hypergraphs and spin glass models''.
\newblock \href{https://dx.doi.org/10.1109/focs54457.2022.00039}{2022 {IEEE}
  63rd Annual Symposium on Foundations of Computer Science ({FOCS})}~(2022).

\bibitem{parisi1980sequence}
G~Parisi.
\newblock ``A sequence of approximated solutions to the s-k model for spin
  glasses''.
\newblock \href{https://dx.doi.org/10.1088/0305-4470/13/4/009}{Journal of
  Physics A: Mathematical and General {\bf 13}, L115}~(1980).

\bibitem{talagrand2006parisi}
Michel Talagrand.
\newblock ``The {P}arisi formula''.
\newblock \href{https://dx.doi.org/10.4007/annals.2006.163.221}{Annals of
  Mathematics}~(2006).

\bibitem{panchenko2014introduction}
Dmitry Panchenko.
\newblock ``The {S}herrington-{K}irkpatrick model''.
\newblock \href{https://dx.doi.org/10.1007/978-1-4614-6289-7}{Springer Science
  \& Business Media}. ~(2013).

\bibitem{shaydulin2021qaoakit}
Ruslan Shaydulin, Kunal Marwaha, Jonathan Wurtz, and Phillip~C Lotshaw.
\newblock ``{QAOAKit}: A toolkit for reproducible study, application, and
  verification of {QAOA}''.
\newblock \href{https://dx.doi.org/10.1109/QCS54837.2021.00011}{Second
  International Workshop on Quantum Computing Software}~(2021).

\bibitem{qaoa_high_girth_arxiv_ver}
Joao Basso, Edward Farhi, Kunal Marwaha, Benjamin Villalonga, and Leo Zhou.
\newblock ``The quantum approximate optimization algorithm at high depth for
  maxcut on large-girth regular graphs and the sherrington-kirkpatrick
  model''~(2021).
\newblock  url:~\url{https://doi.org/10.48550/arXiv.2110.14206}.

\bibitem{herman2022portfolio}
Dylan Herman, Ruslan Shaydulin, Yue Sun, Shouvanik Chakrabarti, Shaohan Hu,
  Pierre Minssen, Arthur Rattew, Romina Yalovetzky, and Marco Pistoia.
\newblock ``Constrained optimization via quantum zeno dynamics''.
\newblock \href{https://dx.doi.org/10.1038/s42005-023-01331-9}{Communications
  Physics {\bf 6}, 219}~(2023).

\bibitem{slate2021quantum}
N.~Slate, E.~Matwiejew, S.~Marsh, and J.~B. Wang.
\newblock ``Quantum walk-based portfolio optimisation''.
\newblock \href{https://dx.doi.org/10.22331/q-2021-07-28-513}{{Quantum} {\bf
  5}, 513}~(2021).

\bibitem{hodson2019portfolio}
Mark Hodson, Brendan Ruck, Hugh Ong, David Garvin, and Stefan Dulman.
\newblock ``Portfolio rebalancing experiments using the quantum alternating
  operator ansatz''~(2019).
\newblock  url:~\url{https://doi.org/10.48550/arXiv.1911.05296}.

\bibitem{Hao2022}
Tianyi Hao, Ruslan Shaydulin, Marco Pistoia, and Jeffrey Larson.
\newblock ``Exploiting in-constraint energy in constrained variational quantum
  optimization''.
\newblock \href{https://dx.doi.org/10.1109/qcs56647.2022.00017}{2022
  {IEEE}/{ACM} Third International Workshop on Quantum Computing Software
  ({QCS})}~(2022).

\bibitem{he2023alignment}
Zichang He, Ruslan Shaydulin, Shouvanik Chakrabarti, Dylan Herman, Changhao Li,
  Yue Sun, and Marco Pistoia.
\newblock ``Alignment between initial state and mixer improves qaoa performance
  for constrained optimization''.
\newblock \href{https://dx.doi.org/10.1038/s41534-023-00787-5}{npj Quantum
  Information {\bf 9}, 121}~(2023).

\bibitem{qiskitfinance}
``Qiskit finance''.
\newblock \url{https://qiskit.org/documentation/finance/}.

\bibitem{NLOpt}
Steven~G. Johnson.
\newblock ``The {NLopt} nonlinear-optimization package''~(2022).
\newblock \url{http://github.com/stevengj/nlopt}.

\bibitem{powell2009bobyqa}
Michael~JD Powell.
\newblock ``The {BOBYQA} algorithm for bound constrained optimization without
  derivatives''.
\newblock Cambridge NA Report NA2009/06{\bf 26}~(2009).

\bibitem{Shaydulin2022}
Ruslan Shaydulin and Stefan~M. Wild.
\newblock ``Importance of kernel bandwidth in quantum machine learning''.
\newblock \href{https://dx.doi.org/10.1103/physreva.106.042407}{Physical Review
  A{\bf 106}}~(2022).

\bibitem{2206.06686}
Abdulkadir Canatar, Evan Peters, Cengiz Pehlevan, Stefan~M. Wild, and Ruslan
  Shaydulin.
\newblock ``Bandwidth enables generalization in quantum kernel models''~(2022).
\newblock  url:~\url{https://doi.org/10.48550/arXiv.2206.06686}.

\bibitem{zhang2022escaping}
Kaining Zhang, Liu Liu, Min-Hsiu Hsieh, and Dacheng Tao.
\newblock ``Escaping from the barren plateau via gaussian initializations in
  deep variational quantum circuits''.
\newblock In Advances in Neural Information Processing Systems.
\newblock Volume~35, pages 18612--18627.
\newblock Curran Associates, Inc.~(2022).

\end{thebibliography}

\section*{Disclaimer}
This paper was prepared for informational purposes with contributions from the Global Technology Applied Research center of JPMorgan Chase \& Co. This paper is not a product of the Research Department of JPMorgan Chase \& Co. or its affiliates. Neither JPMorgan Chase \& Co. nor any of its affiliates makes any explicit or implied representation or warranty and none of them accept any liability in connection with this paper, including, without limitation, with respect to the completeness, accuracy, or reliability of the information contained herein and the potential legal, compliance, tax, or accounting effects thereof. This document is not intended as investment research or investment advice, or as a recommendation, offer, or solicitation for the purchase or sale of any security, financial instrument, financial product or service, or to be used in any way for evaluating the merits of participating in any transaction.
\end{document}